\begin{document}


\newcommand{\Dr}{\Delta_r}
\newcommand{\Dt}{\Delta_\theta}
\newcommand{\D}{\mathcal{D}}
\newcommand{\al}{\bar a}
\newcommand{\la}{\bar \Lambda}
\newcommand{\ka}{\kappa}
\newcommand{\Dx}{\Delta_{\bar r}}
\renewcommand{\r}{\bar r}

\newcommand{\comment}[1]{}
\newtheorem{Thm}{Theorem}

\title{Analytical solution of the geodesic equation in Kerr-(anti) de Sitter space-times}

\author{Eva Hackmann}
\email{hackmann@zarm.uni-bremen.de}
\affiliation{ZARM, University of Bremen, Am Fallturm, 28359 Bremen, Germany}

\author{Valeria Kagramanova}
\email{kavageo@theorie.physik.uni-oldenburg.de}
\affiliation{Institut f{\"u}r Physik, University of Oldenburg, 26111 Oldenburg, Germany}

\author{Jutta Kunz}
\email{kunz@theorie.physik.uni-oldenburg.de}
\affiliation{Institut f{\"u}r Physik, University of Oldenburg, 26111 Oldenburg, Germany}

\author{Claus L{\"a}mmerzahl}
\email{laemmerzahl@zarm.uni-bremen.de}
\affiliation{ZARM, University of Bremen, Am Fallturm, 28359 Bremen, Germany}

\date\today


\begin{abstract}
The complete analytical solutions of the geodesic equations in Kerr-de Sitter and Kerr-anti-de Sitter space-times are presented. They are expressed in terms of Weierstrass elliptic $\wp$, $\zeta$, and $\sigma$ functions as well as hyperelliptic Kleinian $\sigma$ functions restricted to the one-dimensional $\theta$-divisor. We analyse the dependency of timelike geodesics on the parameters of the space-time metric and the test-particle and compare the results with the situation in Kerr space-time with vanishing cosmological constant. Furthermore, we systematically can find all last stable spherical and circular orbits and derive the expressions of the deflection angle of flyby orbits, the orbital frequencies of bound orbits, the periastron shift, and the Lense-Thirring effect. 
\end{abstract}


\maketitle


\section{Introduction and motivation}
All observations in the gravitational domain can be explained by means of Einstein's General Relativity. While for small scale gravitational effects (e.g. in the solar system) the standard Einstein field equations are sufficient, a consistent description of large scale obervations like the accelerated expansion of the universe can be achieved by the introduction of a cosmological term into the Einstein field equation
\begin{equation}
R_{\mu\nu} - \frac{1}{2} R g_{\mu\nu} + \Lambda g_{\mu\nu} = \kappa T_{\mu\nu} \,, \label{EinsteinEquation}
\end{equation}
where $\Lambda$ is the cosmological constant which has a value of $|\Lambda| \leq 10^{-46} \, \rm km^{-2}$ \cite{Dunkleyetal09}. 

Despite the smallness of the cosmological constant the question whether there might be measureable effects on solar system scales has attracted some attention. Within approximation schemes it was shown that all effects on this scale are too small to be detectable at present \cite{JetzerSereno06,KerrHauckMashhoon03,KagramanovaKunzLaemmerzahl06}. Nevertheless, there has been some discussion on whether the Pioneer anomaly, the unexplained acceleration of the Pioneer 10 and 11 spacecraft toward the inner solar system of $a_{\rm Pioneer} = (8.47 \pm 1.33) \times 10^{-10}\;{\rm m/s}^2$ \cite{Andersonetal02}, which is of the order of $c H$ where $H$ is the Hubble constant, may be related to the cosmological expansion and, thus, to the cosmological constant. The same order of acceleration is present also in the galactic rotation curves which astonishingly successfully can be modeled using a modified Newtonian dynamics involving an acceleration parameter $a_{\rm MOND}$ which again is of the order of $10^{-10}\;{\rm m/s^2}$ \cite{Sanders96}. Because of this mysterious coincidence of characteristic accelerations appearing at different scales and due to the fact that all these phenomena appear in a weak gravity or weak acceleration regime, it might be not clear whether current approximation schemes hold. Therefore, it is desireable to obtain analytical solutions of the equations of motion for a definite answer to these questions. 

There has been also some discussion if the cosmological constant has a measureable effect on the physics of binary systems, which play an important role in testing General Relativity. Although such an effect would be very small, it could influence the creation of gravitational waves \cite{NaefJetzerSereno09, BarabezHogan07}. In particular, the observation of gravitational waves originating from extreme mass ratio inspirals (EMRIs) is a main goal of the Laser Interferometer Space Antenna (LISA). The calculation of such gravitational waves benefits from analytical solutions of geodesic equations not only by improved accuracy, which is, in principle, arbitrary high, but also by the prospect of developing fast semi-analytically computation methods \cite{DexterAgol09}. Also, analytical solutions offer a systematic approach to determine the last stable spherical and circular orbits, which are starting points for inspirals and, thus, important for the calculation of gravitational wave templates.

Finally, for a thorough understanding of the physical properties of solutions of the gravitational field equations it is essential to study the orbits of test-particles and light rays in these space-times. On the one hand, this is important from an observational point of view, since only matter and light are observed and, thus, can give insight into the physics of a given gravitational field. On the other hand, this study is also important from a fundamental point of view, since the motion of matter and light can be used to classify a given space-time, to decode its structure and to highlight its characteristics. Furthermore, analytical solutions give a possibility to systematically study limiting cases like post-Newton, post-Schwarzschild, or post-Kerr expansions of geodesics and observables, which is also needed for a clear interpretation of the space-time.

Analytical solutions are especially useful for the analysis of the properties of a space-time not only from an academic point of view. In fact, they offer a frame for tests of the accuracy and reliability of numerical integrations due to their, in pinciple, unlimited accuracy. In addition, they can be used to sytematically calculate all observables in the given space-time with the very high accuracy needed for the understanding of some observations. In 1931 Hagihara \cite{Hagihara31} first analytically integrated the geodesic equation of test-particle motion in a Schwarzschild gravitational field. This solution is given in terms of the elliptic Weierstrass $\wp$ function. The geodesic equations in Reissner-Nordstr\"om, Kerr, and Kerr-Newman space-times have the same mathematical structure \cite{Chandrasekhar83} and can be solved analogously. For bound orbits in a Kerr space-time this has been elaborated recently \cite{Kraniotis04,FujitaHikada09}. The equations of geodesic motion in space-times with non-vanishing cosmological constant exhibit a more complicated structure. Recently two of us found the complete analytical solution of the geodesic equation in Schwarzschild-(anti-)de Sitter space-times based on the inversion problem of hyperelliptic integrals \cite{HackmannLaemmerzahl08, HackmannLaemmerzahl08b}. The equations of motion could be explicitly solved by restricting the problem to the $\theta$-divisor, an approach which was suggested by Enolskii, Pronine, and Richter who applied this method to the problem of the double pendulum \cite{EnolskiiPronineRichter03}. The mathematical tool developed in these papers was also applied to geodesic motion in higher dimensional spherically symmetric and static space-times \cite{Hackmannetal08} as well as to NUT-de Sitter and Pleba\'{n}ski-Demia\'{n}ski space-times without acceleration \cite{Hackmannetal09}.

In this paper we extend the approach developed in \cite{HackmannLaemmerzahl08, HackmannLaemmerzahl08b} to the case of the stationary and axially symmetric Kerr-(anti-)de Sitter space-times, thus generalizing the results of both \cite{HackmannLaemmerzahl08b} and \cite{FujitaHikada09}. We start with the derivation of the equation of motion for each coordinate dependent on proper time and decouple the equations for the $r$ and $\theta$ motion following an idea of Mino \cite{Mino03}. Then we discuss possible types of test-particle orbits with a focus on the influence of the cosmological constant $\Lambda$. In section \ref{section:exact solutions} we explicitly solve the equations derived before and present for the first time the complete analytical solution of the geodesic equation in Kerr-de Sitter space-time. After showing some chosen geodesics we derive the expressions of observables of particle and light trajectories. For bound orbits, the periastron advance and the Lense-Thirring effect are given in terms of the fundamental orbital frequencies.

\section{The geodesic equation}

We consider the geodesic equation
\begin{equation}
0 = \frac{d^2 x^\mu}{d\tau^2} + \left\{\begin{smallmatrix} \mu \\ \rho\sigma \end{smallmatrix}\right\} \frac{dx^\rho}{d\tau} \frac{dx^\sigma}{d\tau} \,,
\end{equation}
where $d\tau^2 = g_{\mu\nu} dx^\mu dx^\nu$ is the proper time along the geodesics and 
\begin{equation}
\left\{\begin{smallmatrix} \mu \\ \rho\sigma \end{smallmatrix}\right\} = \frac{1}{2} g^{\mu\nu} \left(\partial_\rho g_{\sigma\nu} + \partial_\sigma g_{\rho\nu} - \partial_\nu g_{\rho\sigma}\right)
\end{equation}
the Christoffel symbol, in a space-time given by the metric
\begin{align}
d\tau^2 = & \, \frac{\Dr}{\chi^2 \rho^2} \left(dt - a \sin^2\theta d\phi\right)^2 - \frac{\rho^2}{\Dr} dr^2 \nonumber\\
& \, - \frac{\Dt \sin^2\theta }{\chi^2 \rho^2} (a dt - (r^2+a^2) d\phi)^2 - \frac{\rho^2}{\Dt} d\theta^2 \,,
\end{align}
where
\begin{align}
\Dr & = \left(1-\frac{\Lambda}{3}r^2\right) (r^2+a^2) -2Mr\,, \\
\Dt & = 1 + \frac{a^2 \Lambda}{3} \cos^2\theta\,, \\
\chi & = 1 + \frac{a^2 \Lambda}{3}\,, \\
\rho^2 & = r^2+a^2\cos^2\theta
\end{align}
(in units where $c = G = 1$). This Boyer-Lindquist form of the Kerr-(anti)-de Sitter metric describes an axially symmetric and stationary vacuum solution of the Einstein equation and is characterized by $\bar M=2M$ related to the mass $M$ of the gravitating body, the angular momentum per mass $a = J/M$, and the cosmological constant $\Lambda$ . Note that this metric has coordinate singularities on the axes $\theta=0,\pi$ and on the horizons $\Dr=0$. The only real singularity is located at $\rho^2=0$, i.e. at simultaneously $r=0$ and $\theta = \frac{\pi}{2}$ assuming $a \neq 0$.

Analogously to the situation in Kerr space-time, we classify this form of the metric according to the number of (disconnected) regions where $\Dr > 0$, which depends on the parameters $\bar M$, $a$ and $\Lambda$. We speak of `slow` Kerr-de Sitter if there are two regions and of `fast` Kerr-de Sitter if there is one region where $\Dr > 0$. The limiting case where two regions are connected by a zero $\Dr$ is called `extreme` Kerr-de Sitter. Other cases are not possible, what can be seen by a comparison of coefficients in $\Dr=-\frac{\Lambda}{3}r^4+(1-\frac{\Lambda}{3} a^2)r^2 - \bar M r + a^2 = - \frac{\Lambda}{3} \prod_{i=1}^4 (r-r_i)$ where $r_i$ denote the zeros of $\Dr$. Fig.~\ref{Fig:slowfast} shows the modification of regions of slow, fast, and extreme Kerr-de Sitter with varying $\Lambda$.  

\begin{figure}
\centering
\includegraphics[width=0.23\textwidth]{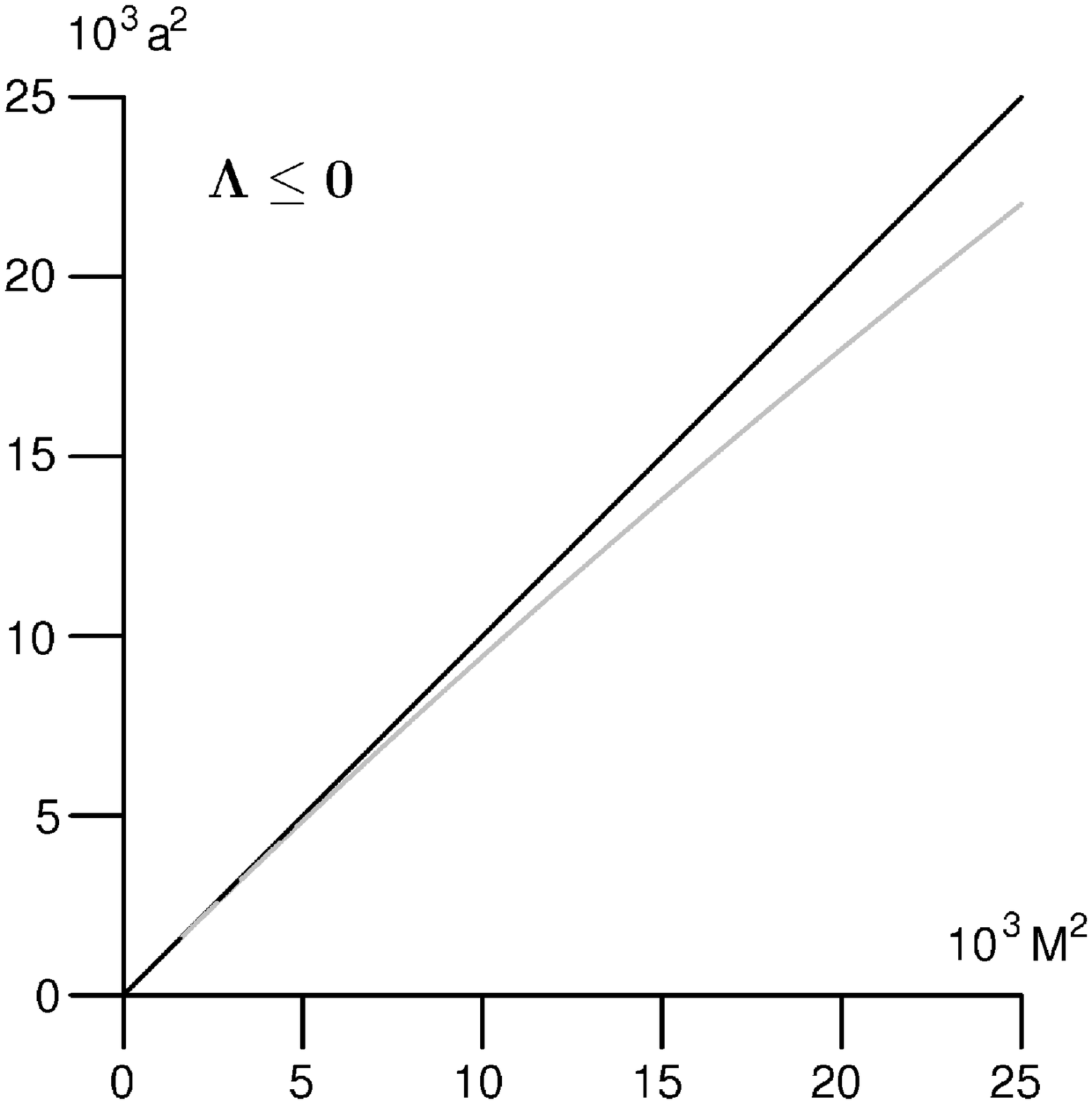}
\includegraphics[width=0.23\textwidth]{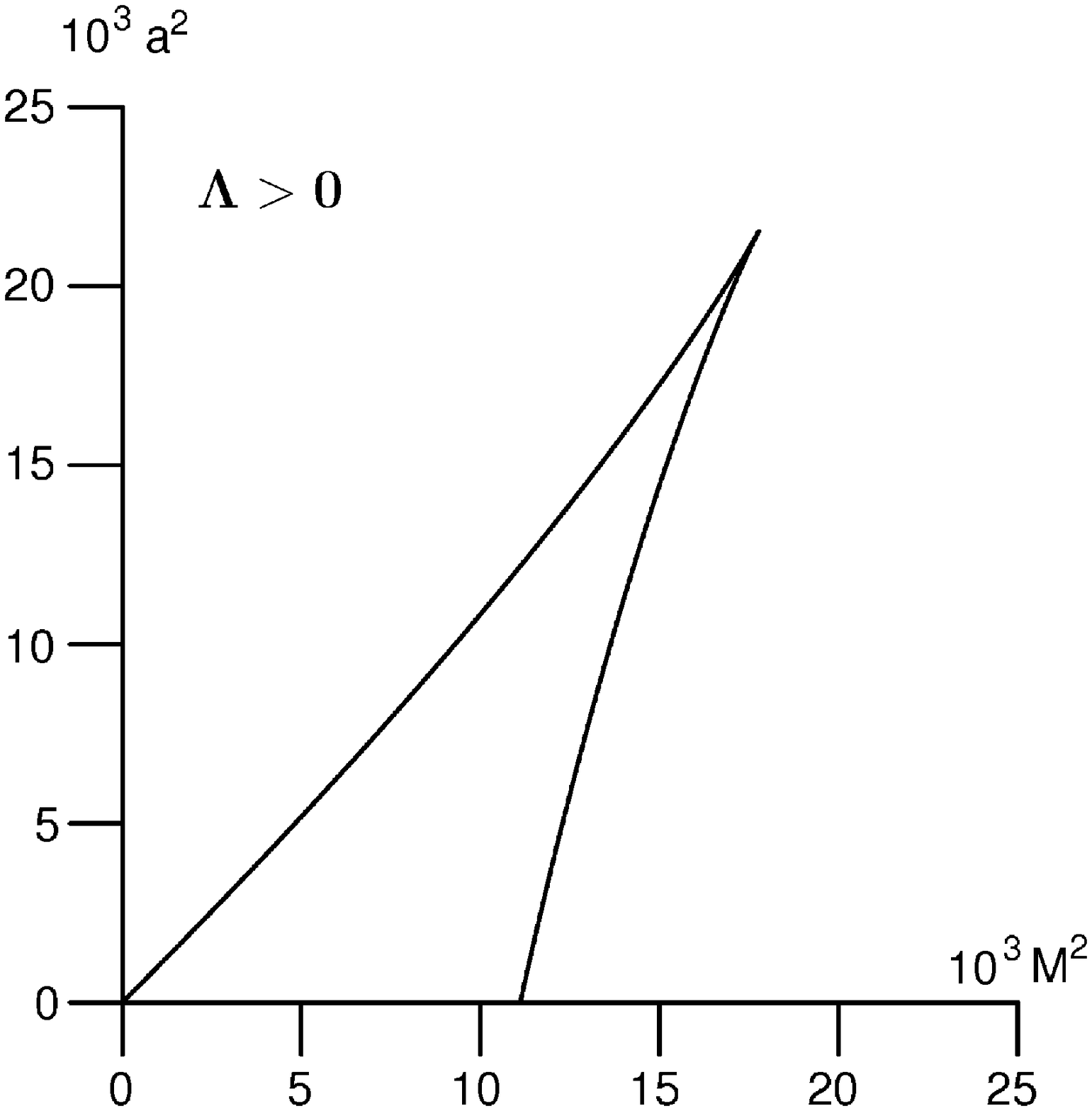}
\caption{Regions of slow and fast Kerr-de Sitter (KdS) for different values of $\Lambda$. Left: the black curve corresponds to $\Lambda=0$, the gray curve to $\Lambda=-10^{-5}$. Below the line we have slow KdS and fast above. Right: $\Lambda=10^{-5}$. The region bounded by the two curves corresponds to slow KdS, outside to fast.}
\label{Fig:slowfast}
\end{figure}

We can identify four constants of motion, two corresponding to the energy per unit mass $E$ and the angular momentum per unit mass in $z$ direction $L_z$ given by the generalized momenta $p_t$ and $p_\phi$
\begin{align}
p_t &  = g_{tt} \dot t + g_{t\phi} \dot \phi =: E \,, \\
- p_\phi &  = - g_{\phi \phi} \dot \phi - g_{t\phi} \dot t  =: L_z \,,
\end{align}
where the dot denotes a derivative with respect to the proper time $\tau$. In addition, a third constant of motion is given by the normalization condition $\delta= g_{\mu \nu} \dot x^\mu \dot x^\nu$ with $\delta=1$ for timelike and $\delta=0$ for lightlike geodesics. A fourth constant of motion can be obtained in the process of separation of the Hamilton-Jacobi equation 
\begin{equation}\label{HamiltonJacobi}
2 \frac{\partial S}{\partial \tau} = g^{ij} \frac{\partial S}{\partial x^i} \frac{\partial S}{\partial x^j}
\end{equation}
using the ansatz
\begin{equation} \label{GleichungS}
S = \frac{1}{2} \delta \tau - Et + L_z\phi + S_r(r) + S_\theta(\theta)\,.
\end{equation}
If we insert this into \eqref{HamiltonJacobi} we get
\begin{multline}
\delta a^2 \cos^2\theta + \Dt \left( \frac{\partial S_\theta}{\partial \theta} \right)^2 +  \frac{\chi^2}{\Dt \sin^2\theta} \left( a E \sin^2\theta - L_z \right)^2 = \\
- \delta r^2  - \Dr \left( \frac{\partial S_r}{\partial r} \right)^2 + \frac{\chi^2}{\Dr} \left( (r^2+a^2)E - a L_z \right)^2 ,
\end{multline}
where each side depends on $r$ or $\theta$ only. This means that each side is equal to a constant $K$, the famous Carter constant, \cite{Carter68}. 

From the separation ansatz \eqref{GleichungS} we derive the equations of motion
\begin{align}
\rho^4 \dot r^2 & = \chi^2 (E \bar M^2 \mathbbm{P})^2 - \Dr (\delta r^2 + K) \,,  \label{dot r} \\
\rho^4 \dot \theta^2 & = \Dt (K- a^2 \delta \cos^2\theta) - \frac{\chi^2 (E \bar M \mathbbm{T})^2}{\sin^2 \theta} \,, \label{dot theta} \\
\frac{\rho^2}{\chi^2} \dot \phi & = \frac{a}{\Dr} E \bar M^2 \mathbbm{P} - \frac{E \bar M}{\Dt \sin^2\theta} \mathbbm{T} \,, \label{dot phi} \\
\frac{\rho^2}{\chi^2} \dot t & = \frac{r^2+a^2}{\Dr} E \bar M^2 \mathbbm{P} - \frac{a E \bar M}{\Dt} \mathbbm{T} \,. \label{dot t}
\end{align}
where $E \bar M^2 \mathbbm{P}=(r^2+a^2) E - a L_z$ and $E \bar M \mathbbm{T}=a E \sin^2\theta - L_z$. The equations for $\dot r$ and $\dot \theta$ are coupled by $\rho^2=r^2+a^2\cos^2\theta$. This difficulty can be overcome by introducing the Mino time $\lambda$ \cite{Mino03} which is related to the proper time $\tau$ by $\frac{d\tau}{d\lambda} = \rho^2$. For simplicity, we rescale the parameters appearing in eqs.~\eqref{dot r}-\eqref{dot t} such that they are dimensionless. Thus, we introduce
\begin{align}
\r & = \frac{r}{\bar M}, \quad \al=\frac{a}{\bar M}, \quad \la =\frac{1}{3} \Lambda \bar M^2,  \nonumber \\
\bar L_z & = \frac{L_z}{\bar M}, \quad \bar K = \frac{K}{\bar M^2}, \, 
\end{align}
and accordingly
\begin{align}
\Dx & = \left( 1 - \la \r^2\right)(\r^2+\al^2) - \r \,, \quad (\Dr = \bar M^2 \Dx) \nonumber \\
\Dt & = 1 + \al^2 \la \cos^2 \theta \,,  \\
\chi & = 1+\al^2 \la \,. \nonumber
\end{align}
In addition, we can absorb $(E \bar M)$ in the definition of $\lambda$ by introducing
\begin{equation}
\D =\frac{\bar L_z}{E}\,, \quad \ka =\frac{\bar K}{E^2}\,, \quad \delta_2 = \frac{\delta}{E^2} \,, \quad \gamma = E \bar M \lambda \,.
\end{equation}
Then the equations \eqref{dot r}-\eqref{dot t} decouple and read
\begin{align}
\left( \frac{d\r}{d\gamma} \right)^2 & = R(\r) =  \chi^2 \mathbbm{P}^2 - \Dx (\delta_2 \r^2 + \ka)\,, \label{dot r_sn}\\
\left( \frac{d\theta}{d\gamma} \right)^2 & = \Theta(\theta) = \Dt (\ka - \delta_2 \al^2 \cos^2\theta) - \frac{\chi^2 \mathbbm{T}^2}{\sin^2\theta}\,, \label{dot theta_sn}\\
\frac{1}{\chi^2} \frac{d\phi}{d\gamma} & = \frac{\al}{\Dx} \mathbbm{P} - \frac{1}{\Dt \sin^2\theta} \mathbbm{T} \label{dot phi_sn} \,, \\
\frac{1}{\bar M \chi^2} \frac{dt}{d\gamma} & = \frac{\r^2+\al^2}{\Dx} \mathbbm{P}  - \frac{\al}{\Dt} \mathbbm{T} \label{dot t_sn} \,.
\end{align}
where, as before, 
\begin{align}
\mathbbm{P} & = \r^2 + \al^2 - \al \D \,,  \\
\mathbbm{T} & = \al \sin^2\theta - \D \,.
\end{align}
In section \ref{section:exact solutions} we will explicitly solve these equations.

\section{Types of timelike geodesics} \label{section:Types}

Before solving the equations of motion derived in the previous section, we analyse the structure of possible orbits dependent on the black hole parameters $\al$, $\la$ and the particle parameters $\delta, E, L_z, K$. The major point in this analysis is that \eqref{dot r_sn} and \eqref{dot theta_sn} imply $R(\r) \geq 0$ and $\Theta(\theta) \geq 0$ as a necessary condition for the existence of a geodesic. Although we concentrate here on timelike geodesics light can be treated in the same manner and is in general easier to deal with.

First of all, we state two short theorems about possible values of the Carter constant. The corresponding theorems for vanishing cosmological constant can be found in \cite{ONeill95}.
\begin{Thm}
If a geodesic lies entirely in the equatorial plane $\theta=\frac{\pi}{2}$ or if it hits the ring singularity $\rho^2=0$ then the modified Carter constant $Q = K - \chi^2 (aE-L)^2$ is zero. 
\end{Thm}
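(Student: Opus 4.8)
The plan is to read the statement off directly from the separated radial and latitudinal equations \eqref{dot r} and \eqref{dot theta} (equivalently from their dimensionless counterparts \eqref{dot r_sn}--\eqref{dot theta_sn}). Denote by $R$ and $\Theta$ their right-hand sides; by construction $R$ depends on $r$ alone and $\Theta$ on $\theta$ alone, and since \eqref{dot r} and \eqref{dot theta} equate them to $\rho^4\dot r^2$ and $\rho^4\dot\theta^2$, both are non-negative at every $r$ and every $\theta$ that the geodesic actually attains. The only computation I need is the evaluation of these two expressions at the ``equatorial'' arguments $r=0$ and $\theta=\frac{\pi}{2}$. At $\theta=\frac{\pi}{2}$ one has $\Dt=1$, $\cos\theta=0$, $\sin\theta=1$ and $E\bar M\mathbbm{T}=aE-L_z$, so $\Theta(\frac{\pi}{2})=K-\chi^2(aE-L_z)^2=Q$; at $r=0$ one has $\Dr=a^2$ and $E\bar M^2\mathbbm{P}=a(aE-L_z)$, so $R(0)=\chi^2a^2(aE-L_z)^2-a^2K=-a^2Q$. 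Hence both values are governed by the single quantity $Q$, and they are linked by $R(0)=-a^2\Theta(\frac{\pi}{2})$.

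For the first case -- a geodesic lying entirely in the plane $\theta=\frac{\pi}{2}$ -- the conclusion is immediate: $\theta\equiv\frac{\pi}{2}$ forces $\dot\theta\equiv0$, so the left-hand side of \eqref{dot theta} vanishes identically, giving $Q=\Theta(\frac{\pi}{2})=0$. (Equivalently, if $\Theta(\frac{\pi}{2})$ were positive the particle would carry a nonzero $\theta$-velocity through the equator and could not remain in that plane.)

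For the second case -- a geodesic reaching the ring singularity -- note first that $\rho^2=0$ can occur only if $a\neq0$, and then only at the event where $r=0$ and $\theta=\frac{\pi}{2}$ simultaneously. At that event the non-negativity of $R$ and $\Theta$ yields $R(0)\ge0$ and $\Theta(\frac{\pi}{2})\ge0$, i.e.\ $-a^2Q\ge0$ and $Q\ge0$. Since $a\neq0$, the first inequality forces $Q\le0$, and together with the second this gives $Q=0$.

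I do not expect any genuine obstacle here: the proof reduces to the two elementary substitutions that produce $R(0)=-a^2Q$ and $\Theta(\frac{\pi}{2})=Q$, followed by sign bookkeeping. The two points that merely require a little care are to work with the \emph{separated} equations, so that the factor $\rho^4$ -- which degenerates precisely at the ring singularity -- never enters the side one argues about, and to keep in mind that $a\neq0$ (implicit in the very existence of a ring singularity) is exactly what makes the inequality $-a^2Q\ge0$ useful in the second case.
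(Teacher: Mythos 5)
Your proof is correct and follows essentially the same route as the paper: evaluate the separated radial and latitudinal functions at $r=0$ and $\theta=\tfrac{\pi}{2}$ to find $\Theta(\tfrac{\pi}{2})\propto Q$ and $R(0)\propto -a^2Q$, then use $\dot\theta\equiv 0$ for equatorial orbits and the two non-negativity conditions at the ring singularity to force $Q=0$. The only (harmless) differences are that you work with the dimensionful equations rather than the rescaled ones and make the $a\neq 0$ assumption explicit, which the paper leaves implicit.
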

\begin{proof}
A geodesic lies entirely in the equatorial plane iff $\theta(\gamma)=\frac{\pi}{2}$ for all $\gamma$. This implies that $\Theta(\theta) = ( \frac{d\theta}{d\gamma} )^2=0$ and with
\begin{equation*}
\Theta\left(\theta=\frac{\pi}{2}\right) = \ka - \chi^2 (\al-\D)^2 = Q E^{-2} \bar M^{-2}
\end{equation*}
it follows $Q = 0$. If a geodesic hits the ring singularity, then there is a $\gamma$ such that $\r(\gamma)=0$ and $\theta(\gamma) = \frac{\pi}{2}$. As $R(\r) \geq 0$ and $\Theta(\theta) \geq 0$ for all $\r$ and $\theta$, and in particular for $\r=0$, $\theta=\frac{\pi}{2}$, it follows 
\begin{align*}
R(\r=0) & = \chi^2 \al^2 (\al-\D)^2 - \al^2 \ka = \frac{- \al^2 Q}{E^2 \bar M^2} \geq 0 \, \Rightarrow Q \leq 0
\end{align*}
and as above $\Theta\left(\theta=\frac{\pi}{2}\right) = \frac{Q}{E^{2} \bar M^{2}} \geq 0$.
\end{proof}
Note that this theorem implies that $Q=0$ is a necessary condition for equatorial orbits, which is an important class found in many astrophysical objects like accretion discs and planetary systems. Note that the modified Carter constant $Q$ depends on the cosmological constant, which also influences the next theorem.
\begin{Thm} \label{Thm2}
For $\Lambda> - \frac{3}{a^2}$ all timelike and null geodesics have $K\geq0$. In this case $K=0$ implies $Q=0$ and the geodesic lies entirely in the equatorial plane. 
\end{Thm}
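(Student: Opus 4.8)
The plan is to reduce the hypothesis $\Lambda > -\frac{3}{a^2}$ to the single inequality $\chi = 1 + \frac{a^2\Lambda}{3} > 0$ (for $a=0$ the hypothesis is vacuous and $\chi=1$), and then to note that this is exactly what makes $\Dt$ strictly positive on the whole range of $\theta$. Indeed $\Dt = 1 + \frac{a^2\Lambda}{3}\cos^2\theta$ is an affine function of $u=\cos^2\theta\in[0,1]$ with $\Dt|_{u=0}=1$ and $\Dt|_{u=1}=\chi$; being positive at both endpoints it is positive on all of $[0,1]$, so $\Dt(\theta)>0$ for every $\theta$. This positivity is the only place the hypothesis enters, and establishing it is the conceptual core of the argument — everything afterwards is sign-chasing in the $\theta$-equation, just as for the Kerr case with $\Lambda=0$.

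For the first assertion I would use \eqref{dot theta} in the unrescaled form, which along any timelike or null geodesic gives
\[
0 \le \rho^4\dot\theta^2 = \Dt\,(K - a^2\delta\cos^2\theta) - \frac{\chi^2(aE\sin^2\theta - L_z)^2}{\sin^2\theta}.
\]
At a point where $\sin\theta\ne 0$ this rearranges to $\Dt\,(K - a^2\delta\cos^2\theta)\ge \chi^2(aE\sin^2\theta - L_z)^2/\sin^2\theta\ge 0$, and since $\Dt>0$ we conclude $K \ge a^2\delta\cos^2\theta\ge 0$, using $\delta\in\{0,1\}$. The only geodesics not reached this way are those confined to the axis $\theta\equiv 0$ or $\theta\equiv\pi$; for these, consistency of \eqref{dot theta} forces $L_z=0$ (otherwise the $1/\sin^2\theta$ term diverges), the singular term drops out, and evaluating at the pole, where $\dot\theta=0$ and $\Dt=\chi$, gives $0 = \chi(K - a^2\delta)$, i.e.\ $K = a^2\delta\ge 0$ once more.

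For the second assertion, put $K=0$ and take the geodesic timelike, so $\delta=1$. At a point with $\sin\theta\ne 0$ the identity above reads $\rho^4\dot\theta^2 = -\Dt\,a^2\cos^2\theta - \chi^2(aE\sin^2\theta - L_z)^2/\sin^2\theta$, a sum of two non-positive terms that equals the non-negative quantity $\rho^4\dot\theta^2$; hence each term vanishes. If $a\ne 0$, vanishing of the first term together with $\Dt>0$ forces $\cos\theta=0$, so $\theta=\frac{\pi}{2}$ at this, and therefore at every, point: the geodesic lies in the equatorial plane, whence $Q=0$ by Theorem~1 (equivalently, the vanishing second term gives $L_z=aE$, so $Q = K - \chi^2(aE-L_z)^2 = 0$ directly). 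If $a=0$ the vanishing second term gives $L_z=0$, hence $\dot\theta\equiv 0$, so after the usual rotation of the spherically symmetric coordinates the geodesic is radial and lies in the equatorial plane, with $Q = -\chi^2 L_z^2 = 0$.

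I expect the main obstacle to be the behavior at the poles $\theta=0,\pi$: the $1/\sin^2\theta$ term in $\Theta$ is singular there unless $L_z=0$, which is why axis-confined geodesics must be split off and treated as a (straightforward) special case. A further point worth flagging is that the implication ``$K=0\Rightarrow$ equatorial'' genuinely uses $\delta=1$: for a null geodesic, $K=0$ only forces $aE\sin^2\theta = L_z$, hence a constant inclination $\theta_0$ with $Q = -\chi^2 a^2 E^2\cos^4\theta_0$, which need not vanish — so the second sentence of the theorem is to be read as a statement about timelike geodesics.
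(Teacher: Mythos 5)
Your proof is correct, and its core mechanism is the same as the paper's: the hypothesis $\Lambda>-\tfrac{3}{a^2}$ enters only through the strict positivity of $\Dt$, after which everything is sign analysis of the $\theta$-equation, whose non-negativity is necessary for a geodesic to exist. The paper argues in the rescaled variables: if $K<0$ then $\ka-\delta_2\al^2\cos^2\theta<0$, so $\Theta(\theta)<0$ for every $\theta$ and no geodesic is possible; if $K=0$ then $\Theta\leq0$ with equality only at $\cos^2\theta=0$ together with $\al=\D$, whence $Q=0$. Your version works with the unrescaled equation, which avoids dividing by $E$ and yields the slightly sharper pointwise bound $K\geq a^2\delta\cos^2\theta$, and your separate treatment of axis-confined geodesics (where $L_z=0$ forces $K=a^2\delta\geq0$) makes explicit a case the paper's uniform argument leaves implicit, as does your $a=0$ discussion. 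Most importantly, your caveat about the second assertion is well taken: the paper's conclusion ``$\Theta=0$ only if $\cos^2\theta=0$'' tacitly uses $\delta_2\neq0$, whereas for null geodesics $K=0$ only forces $\mathbbm{T}=0$, i.e.\ a constant inclination with $\sin^2\theta_0=\D/\al$ and $Q=-\chi^2a^2E^2\cos^4\theta_0$ generally nonzero (the principal-null-type orbits familiar from Kerr); so the second sentence of the theorem holds as stated only for timelike geodesics, exactly as you flag.
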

\begin{proof}
A geodesic can only exist if there are values for $\r(\gamma)$ and $\theta(\gamma)$ with $R(\r) \geq 0$ and $\Theta(\theta) \geq 0$. From $\Lambda>\frac{-3}{a^2}$ it follows $\Dt = 1+ \al^2 \la \cos^2\theta>1-\cos^2\theta \geq0$. If $K<0$ then $(\ka - \delta_2 \al^2 \cos^2\theta)<0$ and
\begin{align*}
\Theta(\theta) & = \Dt ( \ka - \delta_2 \al^2 \cos^2\theta ) - \frac{\chi^2}{\sin^{2}\theta} \mathbbm{T}^2 < 0
\end{align*}
for all values of $\theta$. Assume now $K=0$. Consequently
\begin{align*}
\Theta(\theta) & = - \delta_2 \al^2 \cos^2\theta \Dt - \frac{\chi^2}{\sin^{2}\theta} \mathbbm{T}^2 \leq 0
\end{align*}
and $\Theta(\theta)=0$ only if $\cos^2\theta = 0$ and additionally $\mathbbm{T} = \al \sin^2\theta - \D = \al-\D = 0$.
\end{proof}
Since from observation the cosmological constant has a small positive value, the condition $\Lambda>\frac{-3}{a^2}$ is always fulfilled.

From these two theorems it is obvious that, while $K$ originates from the separation procedure, the modified Carter constant $Q$ has a geometric interpretation since it is related to possible $\theta$ values of the orbits. This relation will become more explicit in the following subsections.

In the remainder of the section we will study the consequences of the two conditions $R(\r) \geq 0$ and $\Theta(\theta) \geq 0$.

\subsection{Types of latitudinal motion}
Geodesics can take an angle $\theta$ if and only if $\Theta(\theta) \geq 0$. Thus, we want to determine which values of $\al$, $\la$, $E$, $\bar L_z$, $\bar K$, and $\theta \in [0,\pi]$ result in positive $\Theta(\theta)$. For simplicity, we substitute $\nu = \cos^2\theta$ giving
\begin{multline} \label{Theta(nu)}
\Theta(\nu) = (1+\al^2 \la \nu)  (\ka - \delta_2 \al^2 \nu) \\ - \chi^2 \left( \al^2 (1-\nu) - 2\al\D + \frac{\D^2}{1-\nu} \right)  \,.
\end{multline}
Assume now that for a given set of parameters there exists a certain number of zeros of $ \Theta(\nu)$ in $[0,1]$. If we vary the parameters, the position of zeros varies and the number of real zeros in $[0,1]$ can change only if (i) a zero crosses $0$ or $1$ or (ii) two zeros merge. Let us consider case (i). $0$ is a zero iff
\begin{align}
\Theta(\nu=0) = \ka - \chi^2 (\al-\D)^2 = 0
\end{align}
or
\begin{align}
\bar L_z = \al E \pm \frac{\sqrt{\bar K}}{\chi} \,.
\end{align}
As $\nu=1$ is in general a pole of $\Theta(\nu)$ it is a necessary condition for $1$ being a zero of $\Theta(\nu)$ that this pole becomes a removable singularity. From \eqref{Theta(nu)} it follows that this is the case for $\D=0$ or, equivalently, $\bar L_z=0$. Under this assumption we obtain
\begin{align}
\Theta(\nu=1) = (1+\al^2\la)(\ka-\delta_2\al^2) \qquad \text{for } \bar L_z=0\,.
\end{align}
If we additionaly assume that $\Lambda > \frac{-3}{a^2}$ (as in Thm.~\ref{Thm2}) we can conclude that $\Theta(\nu=1)=0$ iff $\bar L_z=0$ and $\ka = \delta_2 \al^2$. Summarized, $\bar L_z = \al E \pm \frac{\sqrt{\bar K}}{\chi}$ and simultaneously $\bar L_z=0$ and $\ka = \delta_2 \al^2$ (assuming $\Lambda > \frac{-3}{a^2}$) give us boundary cases of the $\theta$ motion.

Now let us consider case (ii). If we exclude the coordinate singularities $\theta=0,\pi$ or $\nu=1$ the zeros of $\Theta(\nu)$ are given by the zeros of 
\begin{align}
\Theta_\nu =  (1-\nu) (1+\al^2 \la \nu)  (\ka - \delta_2 \al^2 \nu) - \chi^2 \left(\al-\D-\al\nu\right)^2 \,, \label{def Theta_nu}
\end{align}
which is in general a polynomial of degree $3$. Then two zeros coincide at $x \in [0,1)$ iff
\begin{align}\label{doublezerotheta}
\Theta_\nu = (\nu-x)^2 (a_1\nu+a_0)
\end{align}
for some real constants $a_1, a_0$. By a comparison of coefficients we can solve this equation for $\bar L_z(x)$ and $E^2(x)$ dependent the remaining parameters $\al, \la$, and $\bar K$. This parametric representation of values of $\bar L_z$ and $E^2$ again correspond to boundary cases of the $\theta$ motion. Let us additionally consider the conditions for $\nu=1$ being a double zero. With $\bar L_z=0$ and $\ka=\delta_2\al^2$ (assuming $\Lambda > \frac{-3}{a^2}$) it follows
\begin{align}
\frac{d\Theta(\nu)}{d\nu}\Big\vert_{\nu=1} = \al^2 \chi (\chi - \delta_2) \,,
\end{align}
which is zero for $\chi=\delta_2$ or, equivalently, $E^2 = \chi^{-1}$. 

\begin{figure}
\includegraphics[width=0.23\textwidth]{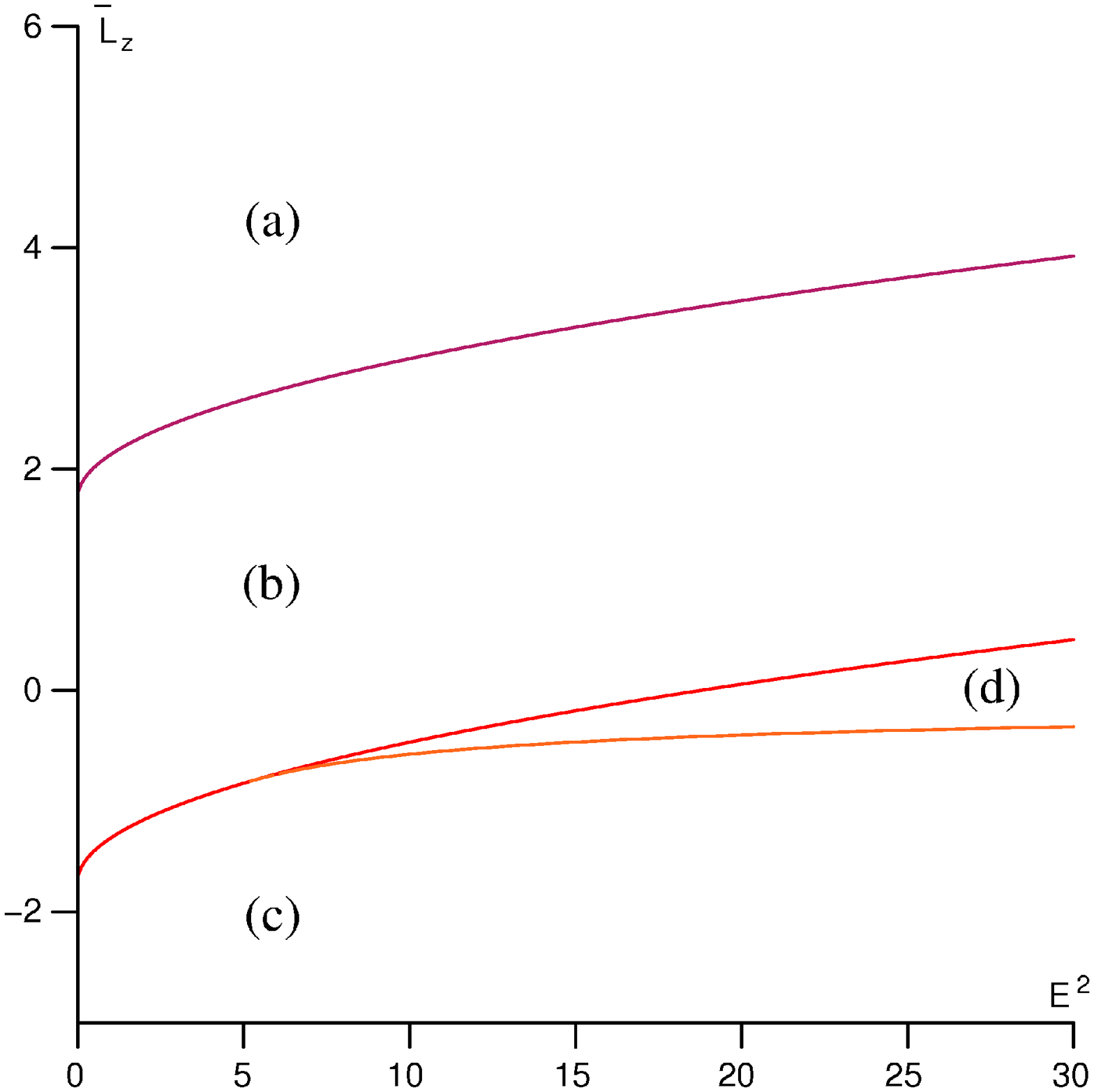}
\includegraphics[width=0.23\textwidth]{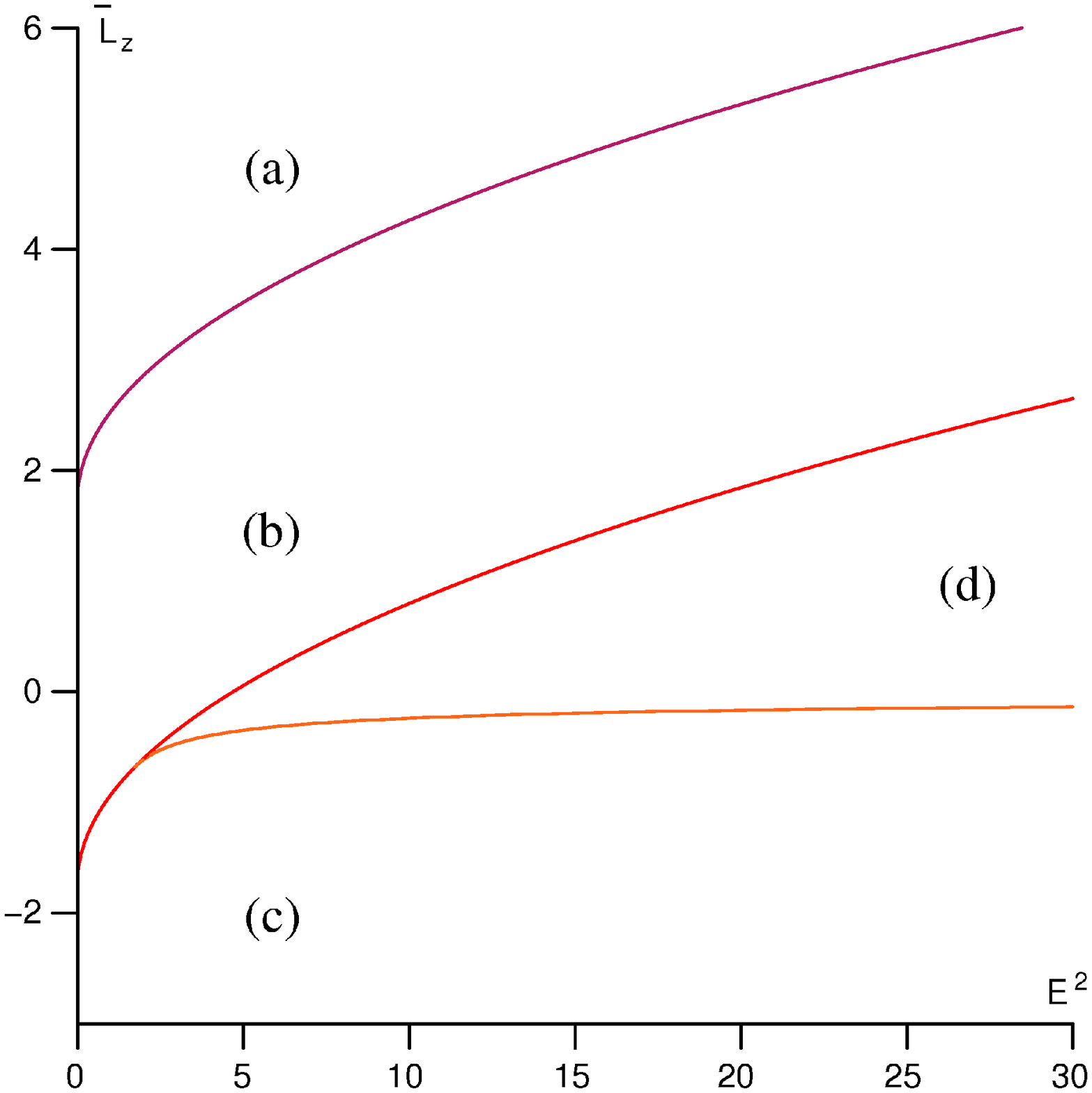}
\caption{Typical regions of different types of $\theta$-motion. Here $\bar M=2$, $\la=10^{-5}$, and $\bar K=3$. The transition from slow Kerr-de Sitter (left, $\al=0.4$) to fast Kerr-de Sitter (right, $\al=0.8$) is continuous.}
\label{Fig:theta_EL}
\end{figure} 

For given parameters of the black hole $\al$ and $\la$, we can use these informations to analyse the $\theta$ motion of all possible geodesics in this space-time. As a typical example for timelike geodesics consider Fig.~\ref{Fig:theta_EL}, where the curves divide the half plane into four regions (a)-(d) which correspond to different arrangement of zeros in $[0,1]$. A geodesic motion is only possible in regions (b) and (d) because in all other regions $\Theta_\nu$ is negative for all $\nu \in [0,1]$. Note that for the special case of $\ka=\delta_2\al^2$ (assuming $\Lambda > \frac{-3}{a^2}$) strictly speaking regions (b) and (d) are divided by $\bar L_z=0$. However, in each region we have for $\bar L_z>0$ and $\bar L_z<0$ the same number of zeros in $[0,1]$ and, thus, the same type of motion. (More precisely, near $\bar L_z=0$ a zero $\nu_0<1$ of $\Theta(\nu)$ approaches $1$, but does not cross it.) Therefore, in each region we put the parts above and below $\bar L_z=0$ together. The arrangement of zeros in the two regions (b) and (d) correspond to the following different types of motion in $\theta$ direction (cp.~Fig.~\ref{Fig:Theta}) 
\begin{itemize}
\item Region (b): $\Theta_\nu$ has one real zero $\nu_{\rm max}$ in $[0,1]$ with $\Theta_\nu \geq 0$ for $\nu \in [0,\nu_{\rm max}]$, i.e. $\theta$ oscillates around the equatorial plane $\theta=\frac{\pi}{2}$.
\item Region (d): $\Theta_\nu$ has two real zeros $\nu_{\rm min}, \nu_{\rm max}$ in $[0,1]$ with $\Theta_\nu \geq 0$ for $\nu \in [\nu_{\rm min},\nu_{\rm max}]$, i.e. $\theta$ oscillates between $\arccos(\pm\sqrt{\nu_{\rm min}})$ and $\arccos(\pm\sqrt{\nu_{\rm max}})$. 
\end{itemize}
The boundaries of region (b) are given by $\bar L_z = \al E \pm \frac{\sqrt{\bar K}}{\chi}$ and, therefore, the regions gets larger if $\frac{\sqrt{\bar K}}{\chi}$ grows, i.e. if $\bar K$ grows or $\al^2 \la$ gets smaller. A change of $\al$ in addition causes region (b) to shift up or down. The dependence of region (d) on the parameters $\bar K, \al$ and $\la$ is much more involved. The upper boundary of (d) is also the lower boundary of region (b). The lower boundary is given in a complicated parametric form which makes it apparently impossible to determine an explicit connection between the form of region (d) and the parameters. However, the point where the upper and lower boundaries of region (d) touch each other is where $0$ is a double zero of $\Theta_\nu$, which is given by $x=0$ in $E(x)$ and $\bar L_z(x)$ from \eqref{doublezerotheta},
\begin{align}
E(0) & = \frac{1}{2} \frac{\al^2 + \bar K - \al^2 \la \bar K}{\al \sqrt{\bar K} \chi}, \quad \bar L_z(0) = \frac{1}{2} \frac{\al^2-\bar K\chi}{\sqrt{\bar K} \chi}\,.
\end{align}
The regions (b) and (d) are characterized in a simple way in terms of the modified Carter constant $Q$. As in region (d) $\Theta_\nu(0) < 0$ it follows that this region corresponds to $Q<0$ because of $\Theta_\nu(0) = \frac{Q}{E^2 r_S^2}$. In the same way we can conclude that region (b), where $\Theta_\nu(0)>0$, corresponds to $Q>0$.

\begin{figure}
\includegraphics[width=0.23\textwidth]{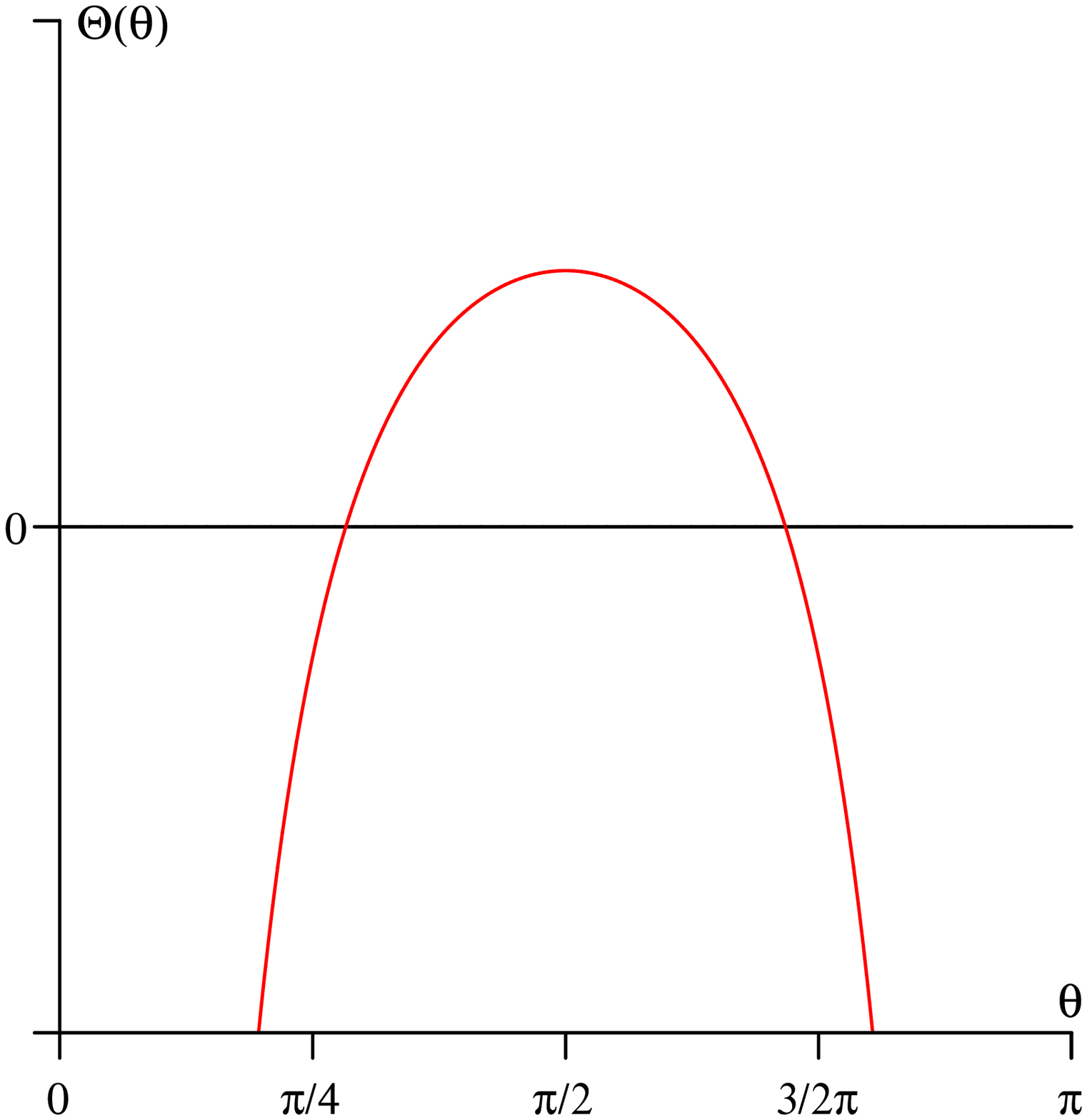}
\includegraphics[width=0.23\textwidth]{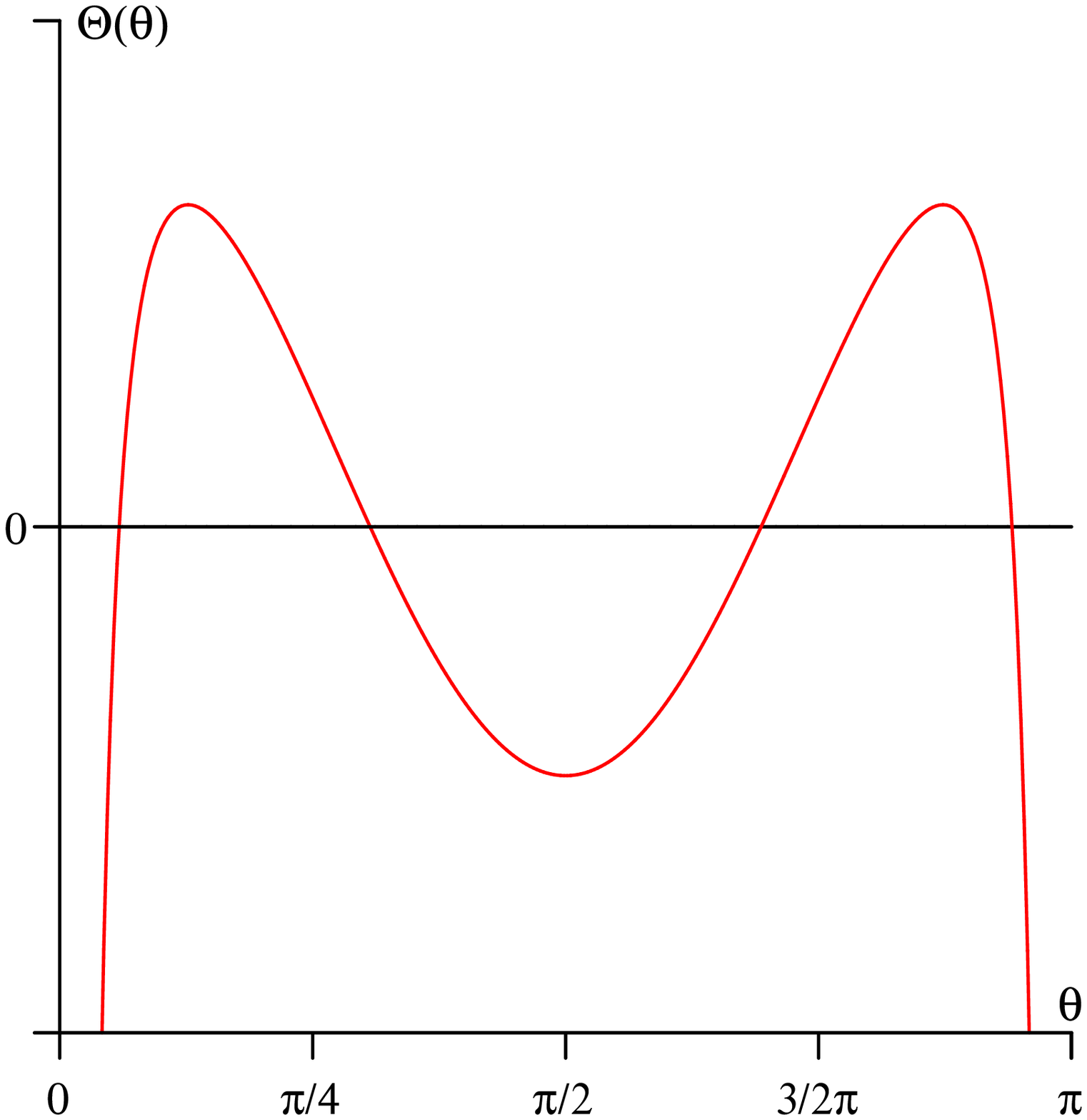}
\caption{Typical examples of $\Theta(\theta)$ in the two regions (b) (left) and (d) (right). Allowed values of $\theta$ are given by $\Theta(\theta)\geq0$.}
\label{Fig:Theta}
\end{figure}

\subsection{Types of radial motion}
A geodesic can take a radial coordinate $\r$ if and only if $R(\r)\geq 0$. The zeros of $R$ are extremal values of $\r(\gamma)$ and determine the type of geodesic. The polynomial $R$ is in general of degree six in $\r$ and, therefore, has six possibly complex zeros of which the real zeros are of interest for the type of motion. As a Kerr-de Sitter space-time has no singularity in $\r=0, \theta \neq \frac{\pi}{2}$, we can also consider negative $\r$ as valid. However, $\r=0$ is an allowed value of $\r(\gamma)$ iff 
\begin{align}
0 < R(0) & = \chi^2 ( \al^2 - \al \D )^2 - \al^2 \ka  \nonumber \\
& = -\al^2 \frac{Q}{E^2 r_S^2} = -\al^2 \Theta_\nu(0) \,.
\end{align}
It follows that $\r=0$ can only be crossed iff $Q < 0$, which corresponds to region (d) of the $\theta$ motion. In region (b) of the $\theta$ motion where $Q>0$ a transition from positive to negative $\r$ is not possible.

To clarify the discussion we introduce some types of orbits \cite{ONeill95}.
\begin{itemize}
\item Flyby orbit: $\r$ starts from $\pm \infty$, then approaches a periapsis $\r=r_0$ and back to $\pm \infty$.
\item Bound orbit: $\r$ oscillates between to extremal $r_1\leq \r \leq r_2$ with $- \infty < r_1 < r_2 < \infty$.
\item Transit orbit: $\r$ starts from $\pm \infty$ and goes to $ \mp \infty$ crossing $\r=0$.
\end{itemize}
All other types of orbits are exceptional and treated separately. They are either connected with the ring singularity $\rho^2=0$ or with the appearence of multiple zeros in $R$, which simplifies the structure of the differential equation \eqref{dot r_sn} considerably. Examples for the latter type are homoclinic orbits, cp.~\cite{LevinPerez-Giz09}. As large negative $\r$ correspond to negative mass of the black hole \cite{HawkingEllis73}, we will assign the attribute 'crossover' to flyby or bound orbits which pass from positive to negative $\r$ or vice versa. (By definition, a transit orbit is always a crossover orbit and, therefore, we will not explicitly state that.) Therefore, in region (d) of the $\theta$ motion exists a crossover orbit, whereas all orbits located in region (b) of the $\theta$ motion do not cross $\r=0$. 

For a given set of parameters we have a certain number of real zeros of $R$. If we vary the parameters this number can change only if two zeros merge to one. This happens at $\r=x$ iff
\begin{align}\label{doublezerosr}
R = (\r-x)^2 (a_4 \r^4 + a_3 \r^3 + a_2 \r^2 + a_1 \r + a_0)
\end{align}
for some real constants $a_i$. By a comparison of cofficients we can solve the resulting 7 equations for $E^2(x)$ and $\bar L_z(x)$ dependent on the remaining parameters $\al$, $\la$, and $\bar K$. A typical result in slow Kerr-de Sitter for small $\Lambda$ including the results of the foregoing subsection is shown in Figs.~\ref{Fig:r_EL} and \ref{Fig:regiond}. An analysis of the influence of each of the parameters $\al$, $\la$, and $\bar K$ is not done easily due to the complexity of the expressions for $E^2$ and $\bar L_z$. However, some typical examples for varying $\bar K$ are shown in Fig.~\ref{Fig:rtheta_EL}. Examples of $R(\r)$ for different numbers of real zeros are given in Fig.~\ref{Fig:R}.

\begin{figure}
\includegraphics[width=0.23\textwidth]{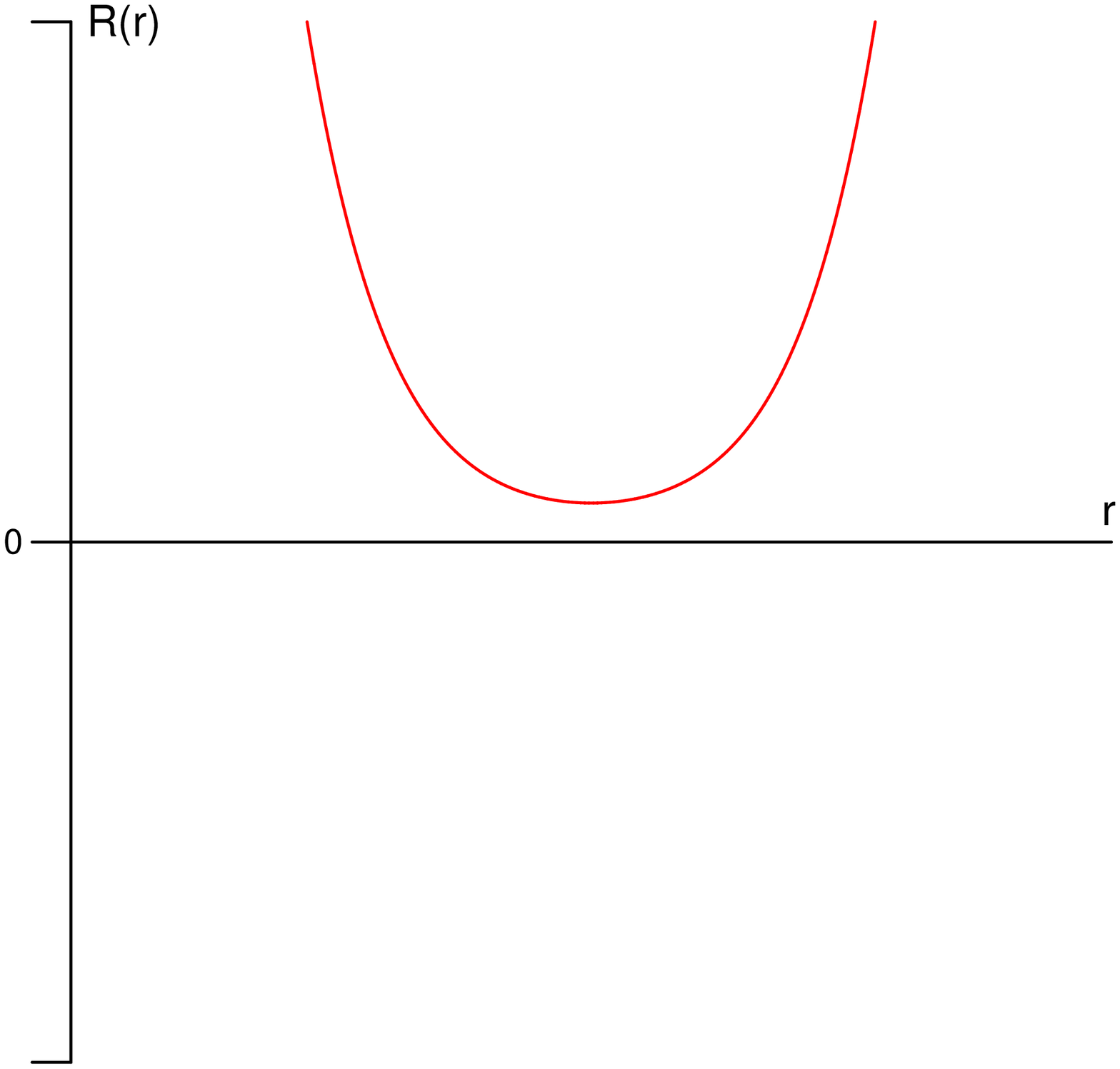}
\includegraphics[width=0.23\textwidth]{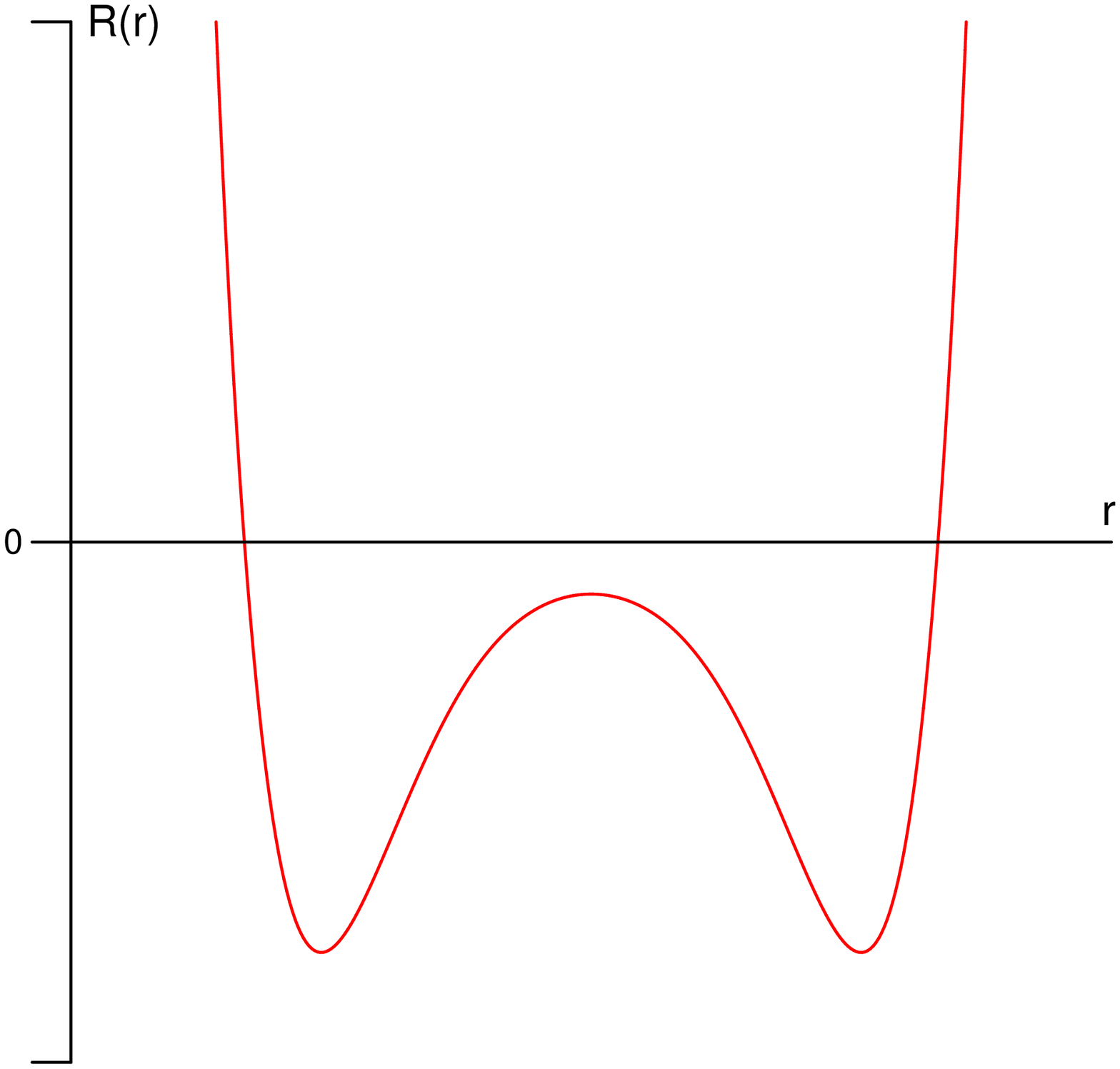}
\includegraphics[width=0.23\textwidth]{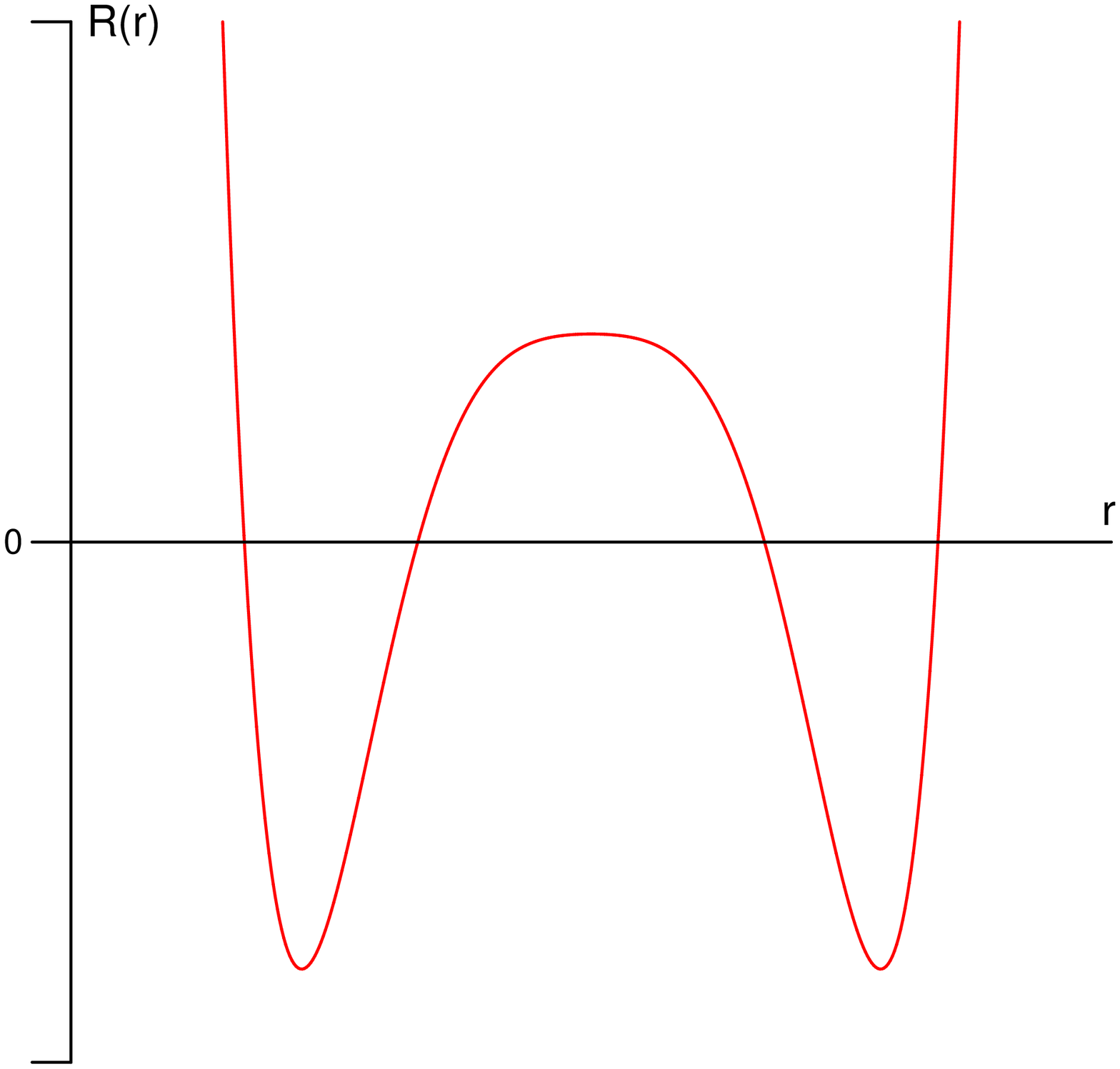}
\includegraphics[width=0.23\textwidth]{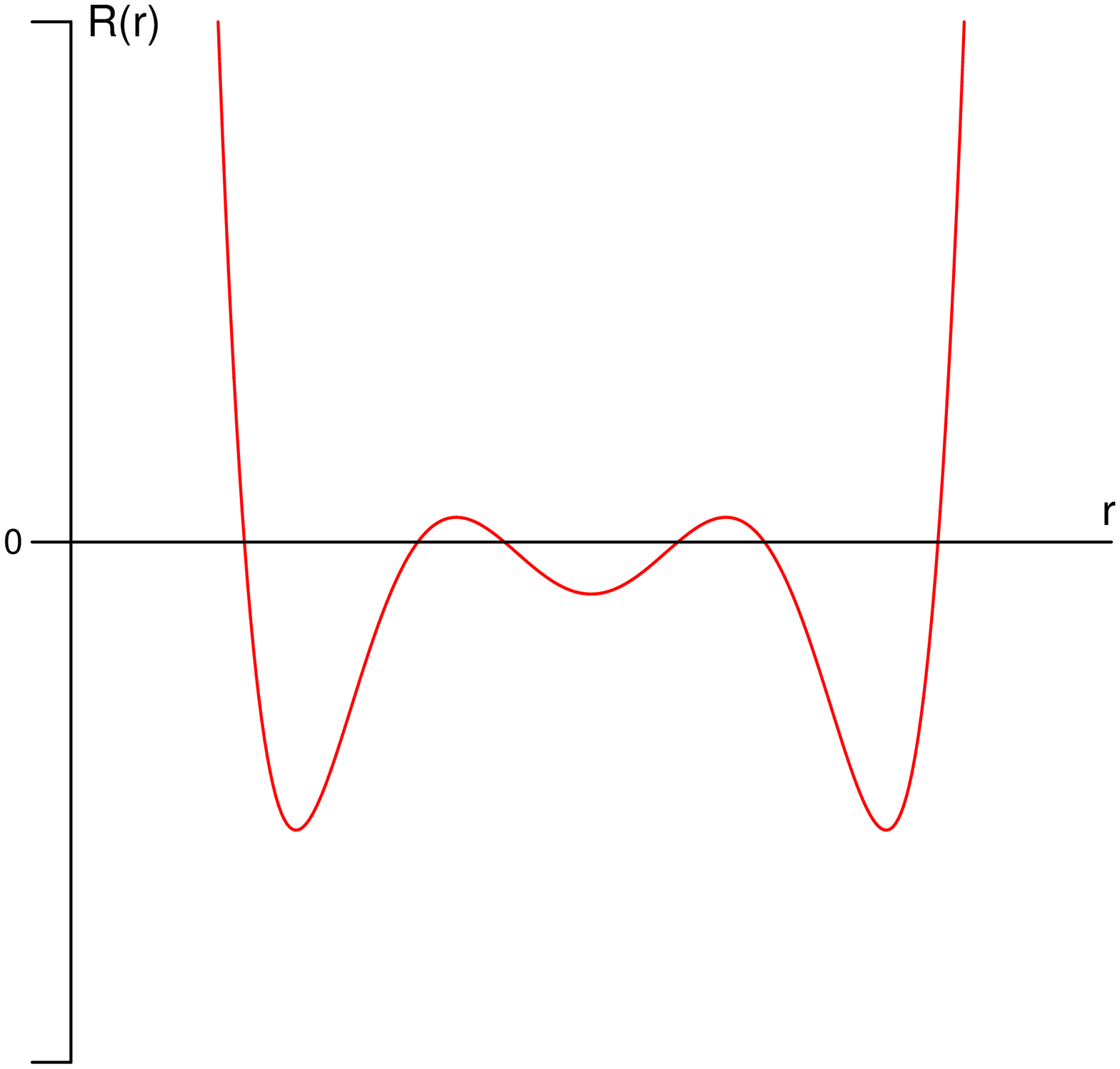}
\caption{Typical examples of $R(\r)$ for $\Lambda>0$ and for different numbers of real zeros. Allowed values of $\r$ are given by $R(\r) \geq 0$.}
\label{Fig:R}
\end{figure}

We discuss now the resulting types of orbits. For simplicity, we restrict ourselves here to the case of slow Kerr-de Sitter although fast and extreme Kerr-de Sitter can be discussed analogously. This will be explicitly carried through in a future publication. 

For comparison, let us first study the situation in slow Kerr with $\Lambda=0$.

\paragraph{Case $\Lambda=0$.}
We recognize five regions of different types of $r$ motion. (Here we always assume $r_i < r_{i+1}$.)
\begin{itemize}
\item Region (I): all zeros of $R$ are complex and $R(\r) \geq 0$ for all $\r$. Possible orbit types: transit orbit.
\item Region (II): $R$ has two real zeros $r_1,r_2$ and $R(\r) \geq 0$ for $\r \leq r_1$ and $r_2\leq \r$. Possible orbit types: two flyby orbits, one to $+\infty$ and one to $-\infty$.
\item Region (III): all four zeros $r_i$, $1\leq i \leq 4$, of $R$ are real and $R(\r) \geq 0$ for $r_{2k-1} \leq \r \leq r_{2k}$, $k=1,2$. Possible orbit types: two different bound orbits.
\item Region (IV): again all four zeros of $R$ are real but $R(\r) \geq 0$ for $\r \leq r_1$, $r_2 \leq \r \leq r_3$, and $r_4 \leq \r$. Possible orbit types: two flyby orbits, one to each of $\pm \infty$ and a bound orbit.
\item Region (V): $R$ has two real zeros $r_1,r_2$ and $R(\r) \geq 0$ for $r_1 \leq \r \leq r_2$. Possible orbit types: a bound orbit.  
\end{itemize}
Although there are the same number of real zeros the different orbit types in regions (III)/(IV) and (II)/(V) is due to the different behaviour of $R$ when $\r \to \pm \infty$. For $\la=0$ the expression $E^2 R = \sum_{i=1}^4 a_i \r^4$ is a polynomial of degree 4 with $a_4 = E^2-1$ which for $\r \to \pm \infty$ yields $R(\r) \to \infty$ if $E^2>1$ and $R(\r) \to - \infty$ if $E^2<1$. 

Let us also analyse where we have crossover orbits. Region (II) is the only one which intersects region (b) as well as region (d) of $\theta$ motion. As region (I) can only contain a transit orbit, which is by definition a crossover orbit, it can only intersect region (d). All other regions contain only region (b) of $\theta$ motion and, therefore do not have any crossover orbits. The results of this paragraph together with the numbers of positive and negative zeros for each region are summarized in Tab.~\ref{tab:lambda0}.

\begin{table}[t]
\begin{center}
\begin{tabular}{|c|c|c||c|p{2.5cm}|}
\hline
region & $-$ & + & range of $\r$ & types of orbits \\ 
\hline\hline
Id & 0 & 0 &
\begin{pspicture}(-2,-0.2)(2,0.2)
\psline[linewidth=0.5pt]{->}(-2,0)(2,0)
\psline[linewidth=0.5pt](0,-0.2)(0,0.2)
\psline[linewidth=1.2pt](-2,0)(2,0)
\end{pspicture}
& transit \\
\hline
IIb & 1 & 1 & 
\begin{pspicture}(-2,-0.2)(2,0.2)
\psline[linewidth=0.5pt]{->}(-2,0)(2,0)
\psline[linewidth=0.5pt](0,-0.2)(0,0.2)
\psline[linewidth=1.2pt]{-*}(-2,0)(-1,0)
\psline[linewidth=1.2pt]{*-}(0.5,0)(2,0)
\end{pspicture}
& 2x flyby\\
\hline
IId & 2 & 0 &   
\begin{pspicture}(-2,-0.2)(2,0.2)
\psline[linewidth=0.5pt]{->}(-2,0)(2,0)
\psline[linewidth=0.5pt](0,-0.2)(0,0.2)
\psline[linewidth=1.2pt]{-*}(-2,0)(-1.2,0)
\psline[linewidth=1.2pt]{*-}(-0.5,0)(2,0)
\end{pspicture} 
& flyby, \\
& & & & crossover flyby\\
\hline
IIIb & 0 & 4 & 
\begin{pspicture}(-2,-0.2)(2,0.2)
\psline[linewidth=0.5pt]{->}(-2,0)(2,0)  
\psline[linewidth=0.5pt](0,-0.2)(0,0.2)  
\psline[linewidth=1.2pt]{*-*}(0.3,0)(0.7,0)
\psline[linewidth=1.2pt]{*-*}(1.2,0)(1.6,0)
\end{pspicture}
& 2x bound\\
\hline
IVb & 1 & 3 & 
\begin{pspicture}(-2,-0.2)(2,0.2)
\psline[linewidth=0.5pt]{->}(-2,0)(2,0)
\psline[linewidth=0.5pt](0,-0.2)(0,0.2)
\psline[linewidth=1.2pt]{-*}(-2,0)(-0.8,0)
\psline[linewidth=1.2pt]{*-*}(0.5,0)(1,0)
\psline[linewidth=1.2pt]{*-}(1.5,0)(2,0)
\end{pspicture}
& 2x flyby, bound\\
\hline
Vb & 0 & 2 & 
\begin{pspicture}(-2,-0.2)(2,0.2)
\psline[linewidth=0.5pt]{->}(-2,0)(2,0)
\psline[linewidth=0.5pt](0,-0.2)(0,0.2)
\psline[linewidth=1.2pt]{*-*}(0.5,0)(1,0)
\end{pspicture}
& bound \\
\hline
\end{tabular}
\caption{Orbit types for $\Lambda=0$. The $+$ and $-$ columns give the number of positive and negative real zeros of the polynomial $R$. Here the thick lines represent the range of orbits. Turning points are shown by thick dots. The small vertical line denotes $\r=0$.}
\label{tab:lambda0}
\end{center}
\end{table}

\begin{figure*}
\begin{minipage}{0.3\textwidth}
\centering
\includegraphics[width=0.9\textwidth]{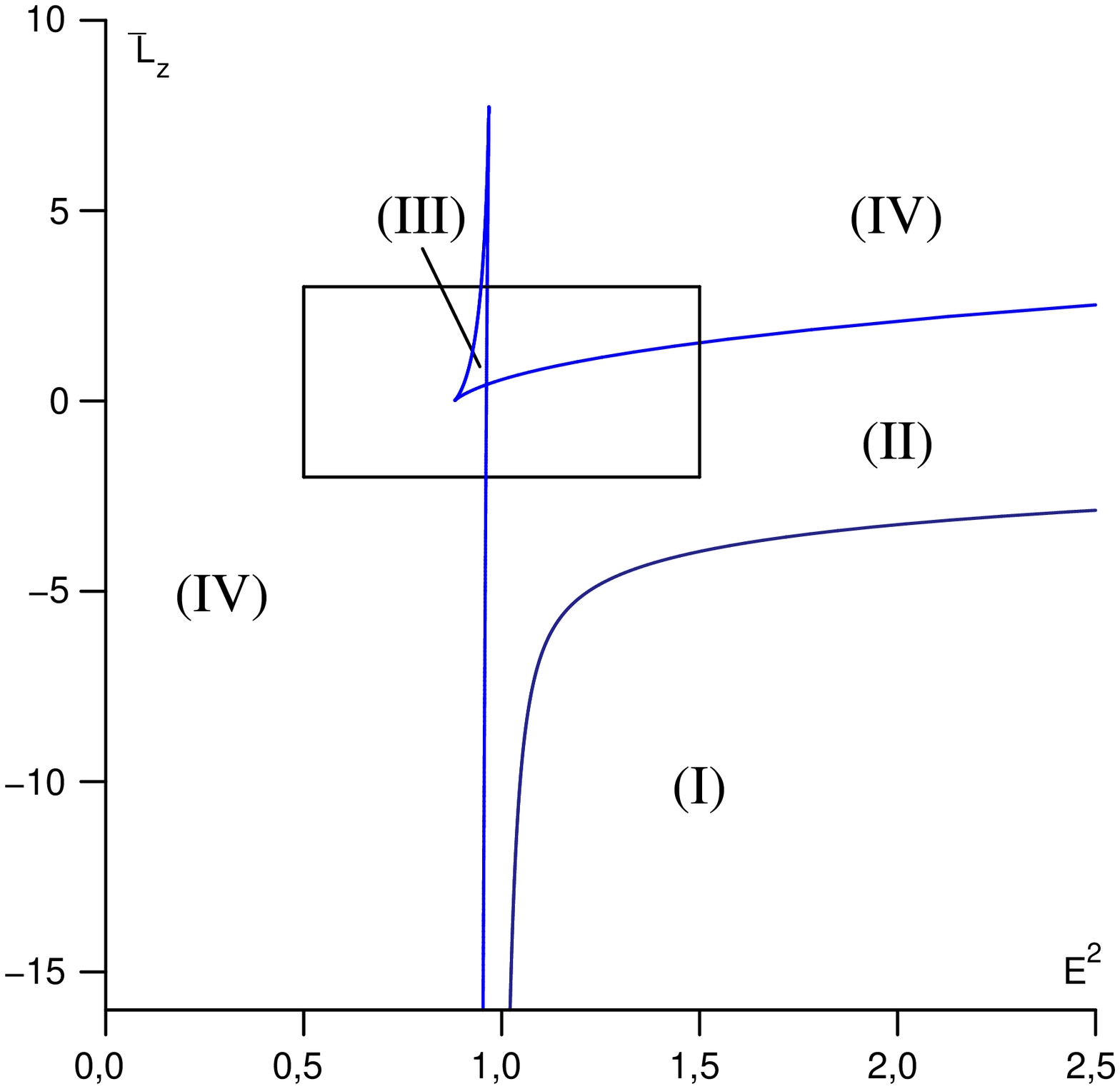}\\
\includegraphics[width=0.85\textwidth]{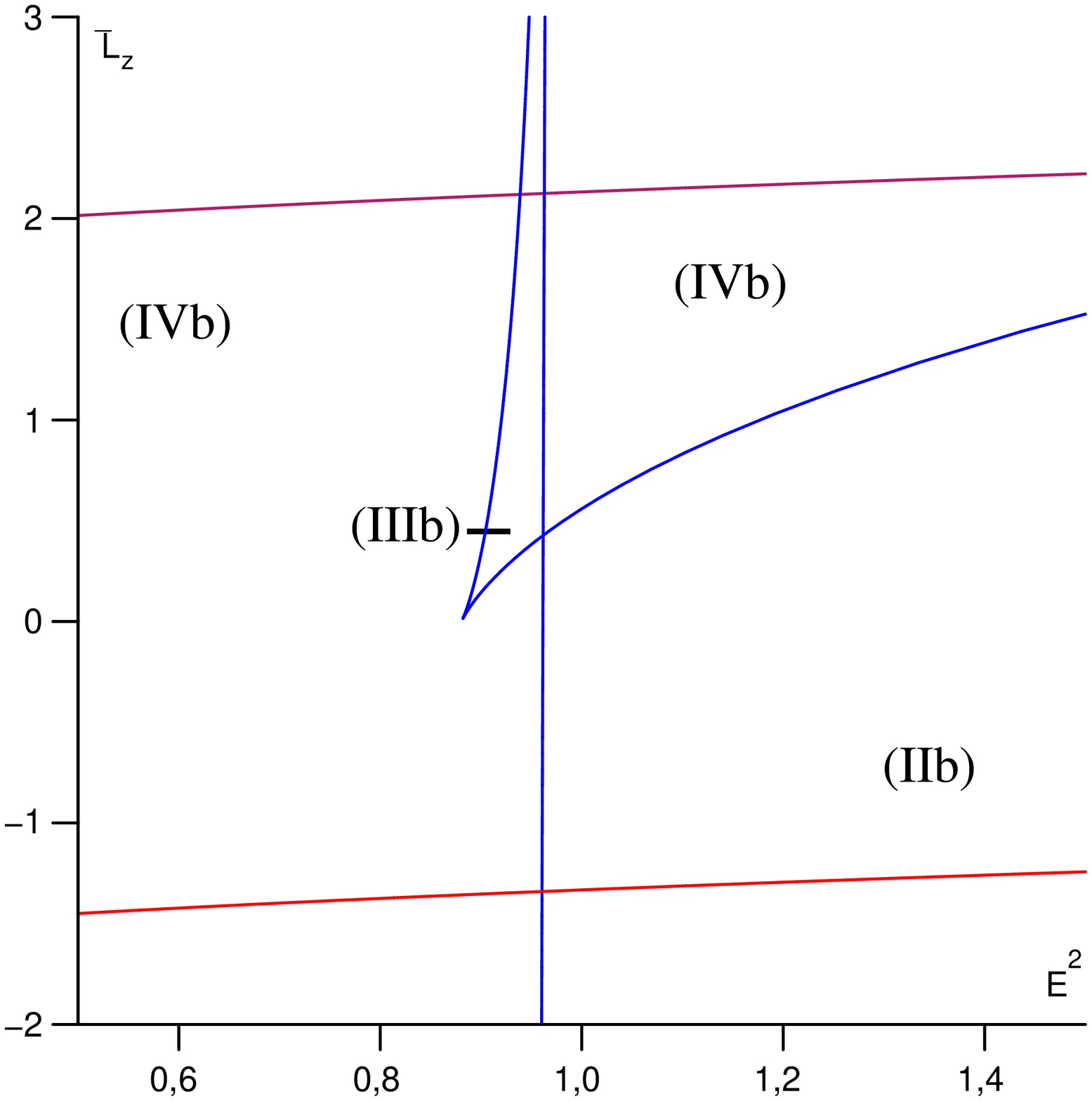}
\end{minipage}
\begin{minipage}{0.3\textwidth}
\centering
\includegraphics[width=0.9\textwidth]{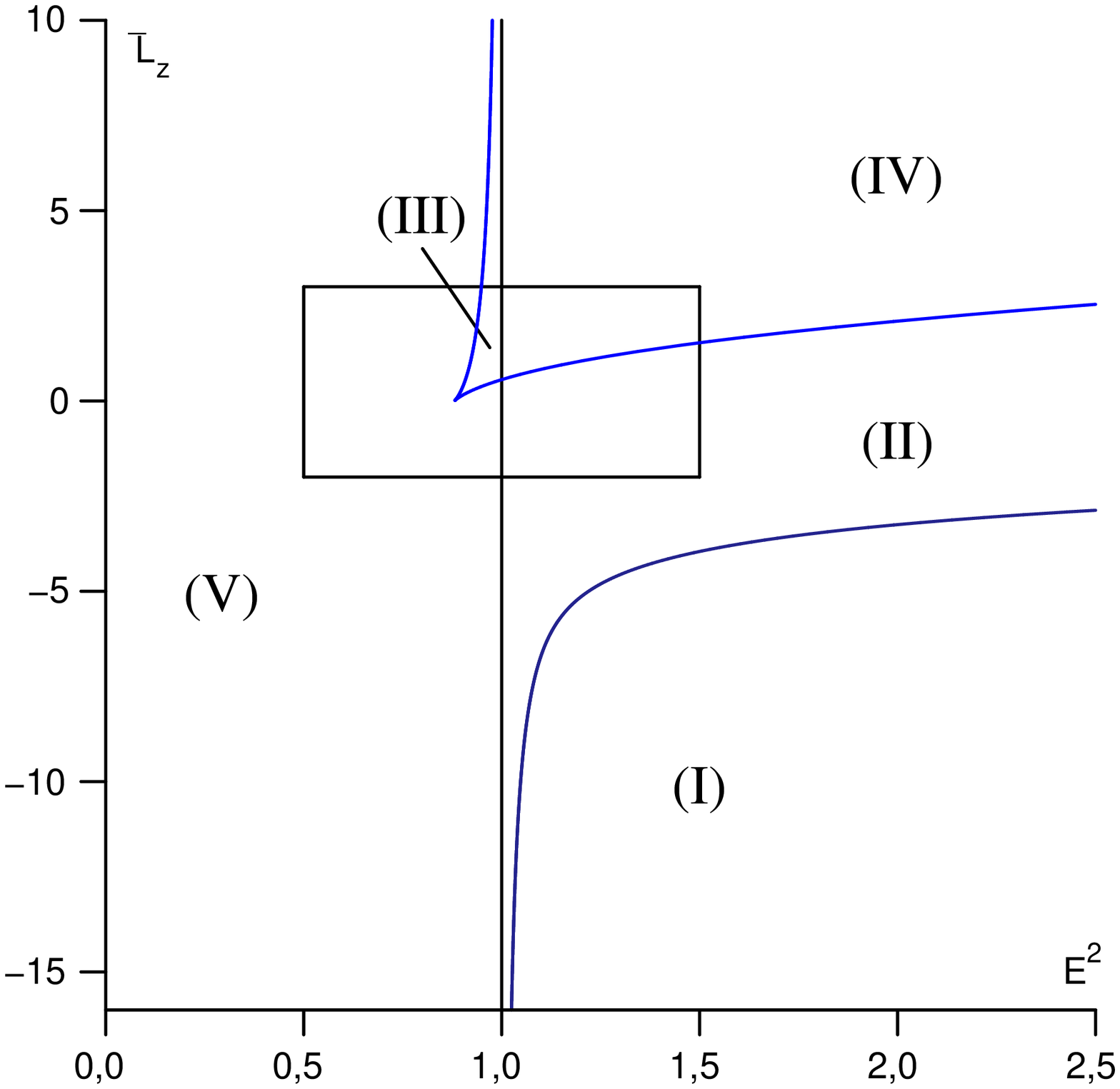}\\
\includegraphics[width=0.85\textwidth]{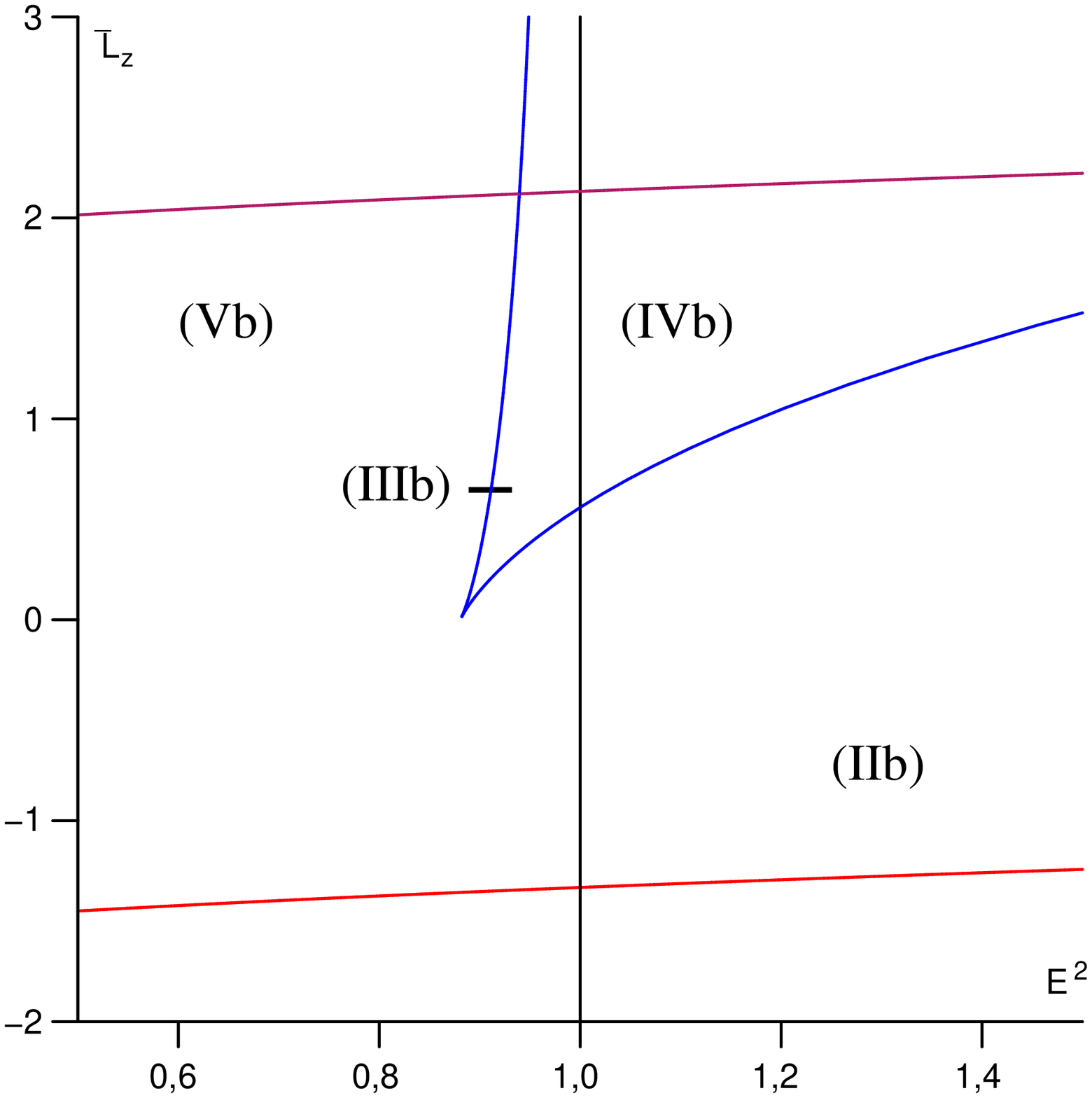}
\end{minipage}
\begin{minipage}{0.3\textwidth}
\centering
\includegraphics[width=0.9\textwidth]{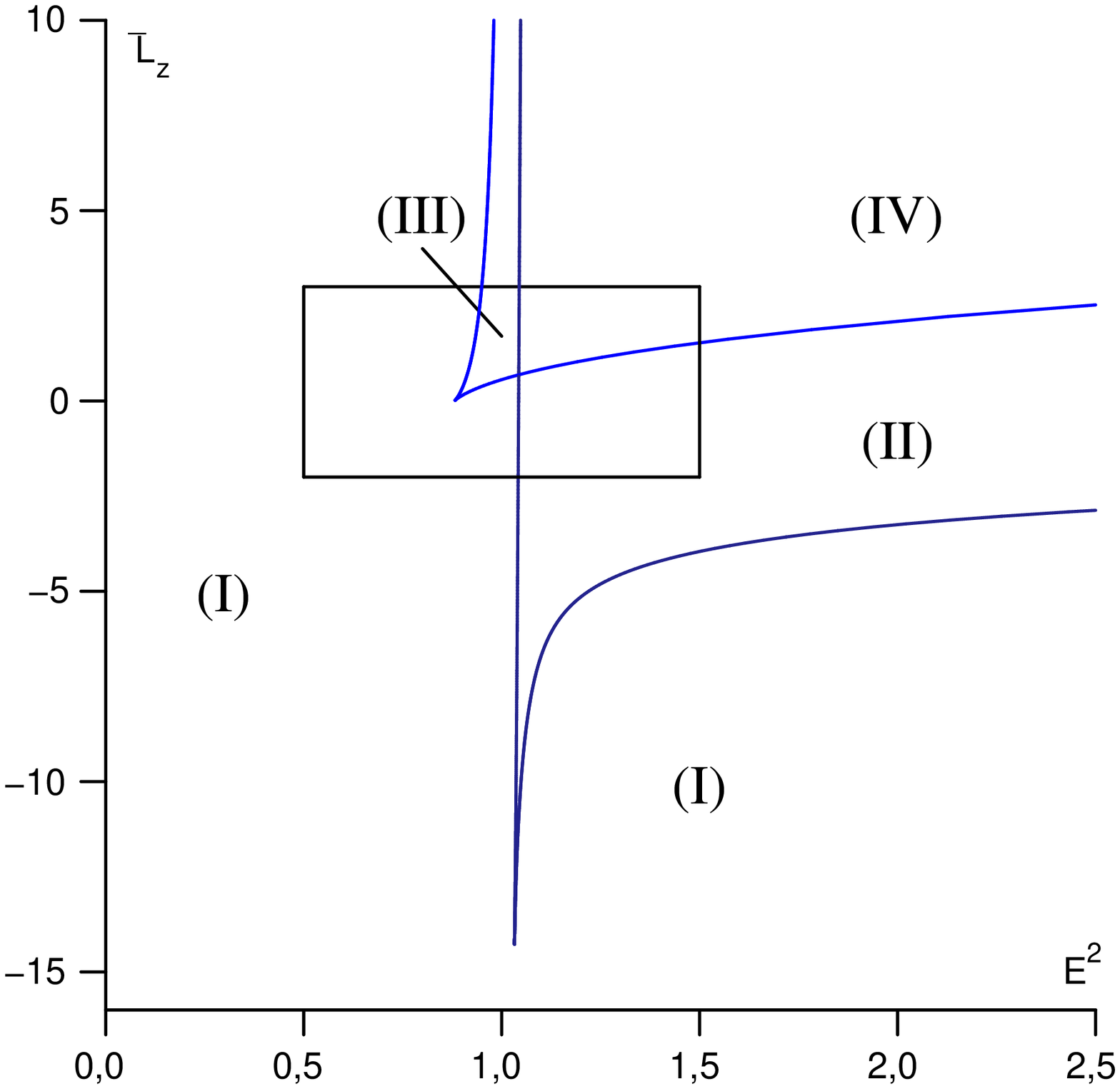}\\
\includegraphics[width=0.85\textwidth]{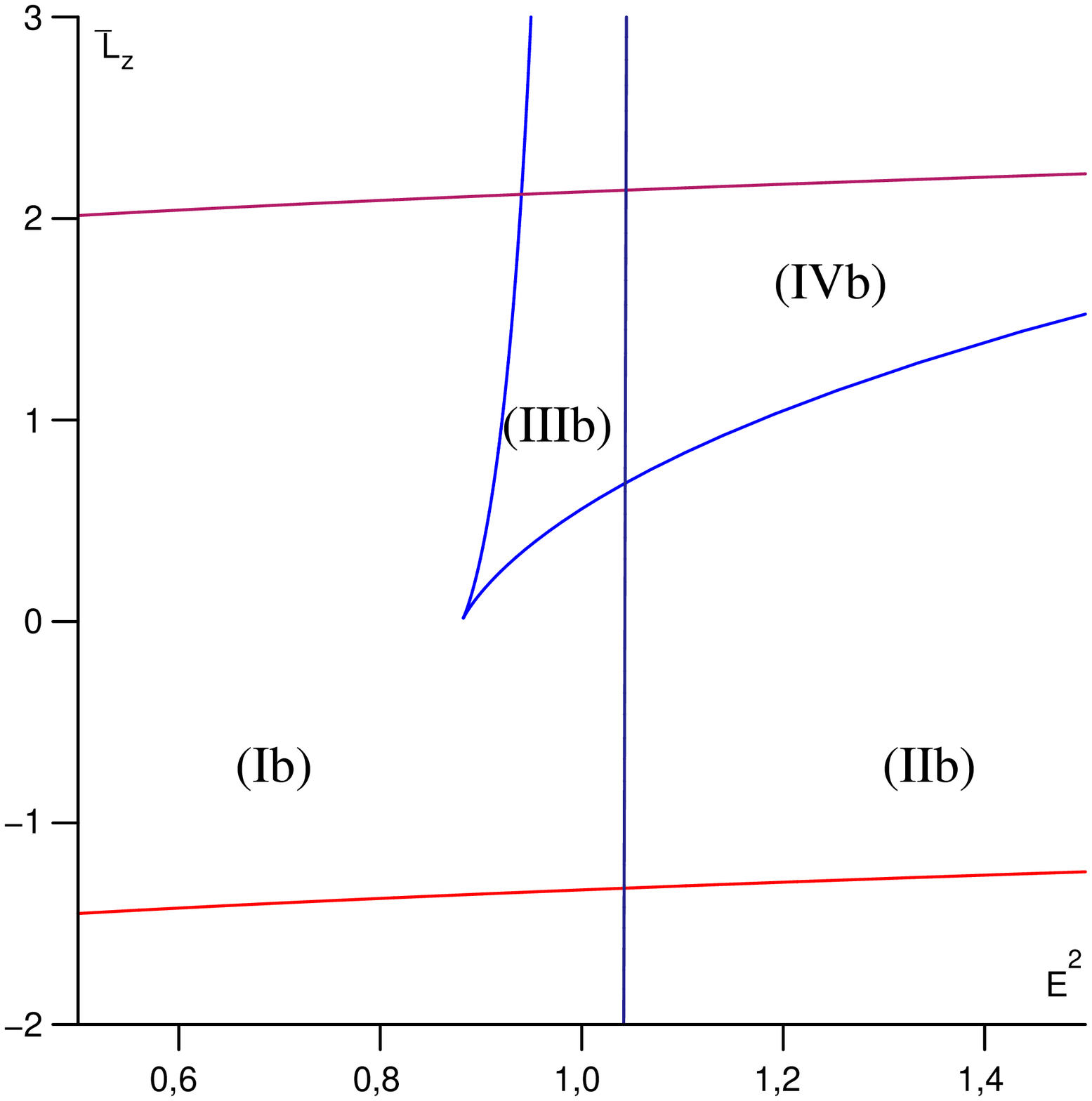}
\end{minipage}
\caption{Regions of different types of $\r$ motion. Here $\al=0.4$ and $\bar K=3$. Left column $\la=10^{-5}$, middle $\la=0$ and right column $\la=-10^{-5}$. Note that in the upper row for $\la>0$ the vertical line has a maximum and for $\la<0$ a minimum. For $\la=0$ the vertical line is not related to a change of the number of real zeros but a switch from $\lim_{\r \to \infty} R(\r) = \infty$ to $\lim_{\r \to \infty} R(\r) = - \infty$. Boundaries of the regions correspond to multiple zeros in $R(\r)$ and, therefore, to the occurrence of a stable or unstable spherical orbit. A detailed view including the results from the $\theta$ motion can be seen in the lower row (note the rescaled axes).}
\label{Fig:r_EL}
\end{figure*}

\begin{figure}
\includegraphics[width=0.25\textwidth]{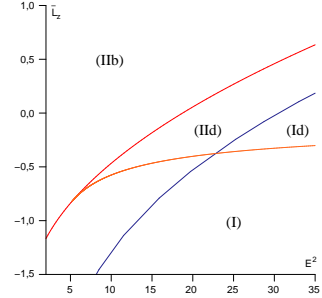}
\caption{Highlighted view on region (d) of $\theta$ motion combined with $r$ motion for $\al=0.4$, $\la=10^{-5}$, and $\bar K=3$. Regions (Id) and (IId) correspond to $Q<0$ and, thus, to crossover orbits whereas region (IIb) corresponds to $Q>0$. Note the differently scaled axes compared to Fig.~\ref{Fig:r_EL}.}
\label{Fig:regiond}
\end{figure}

\begin{figure*}
\includegraphics[width=0.23\textwidth]{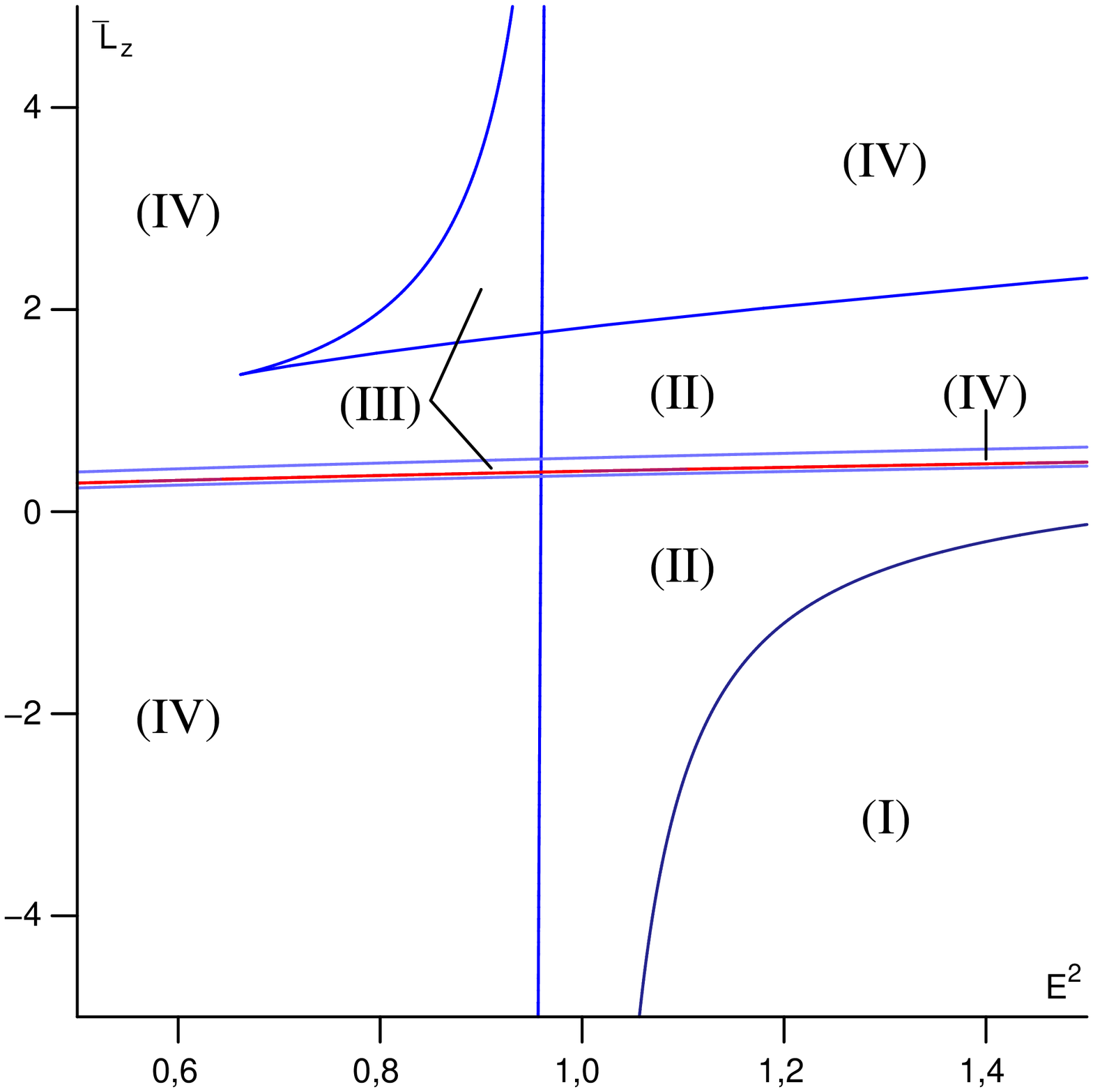}
\includegraphics[width=0.23\textwidth]{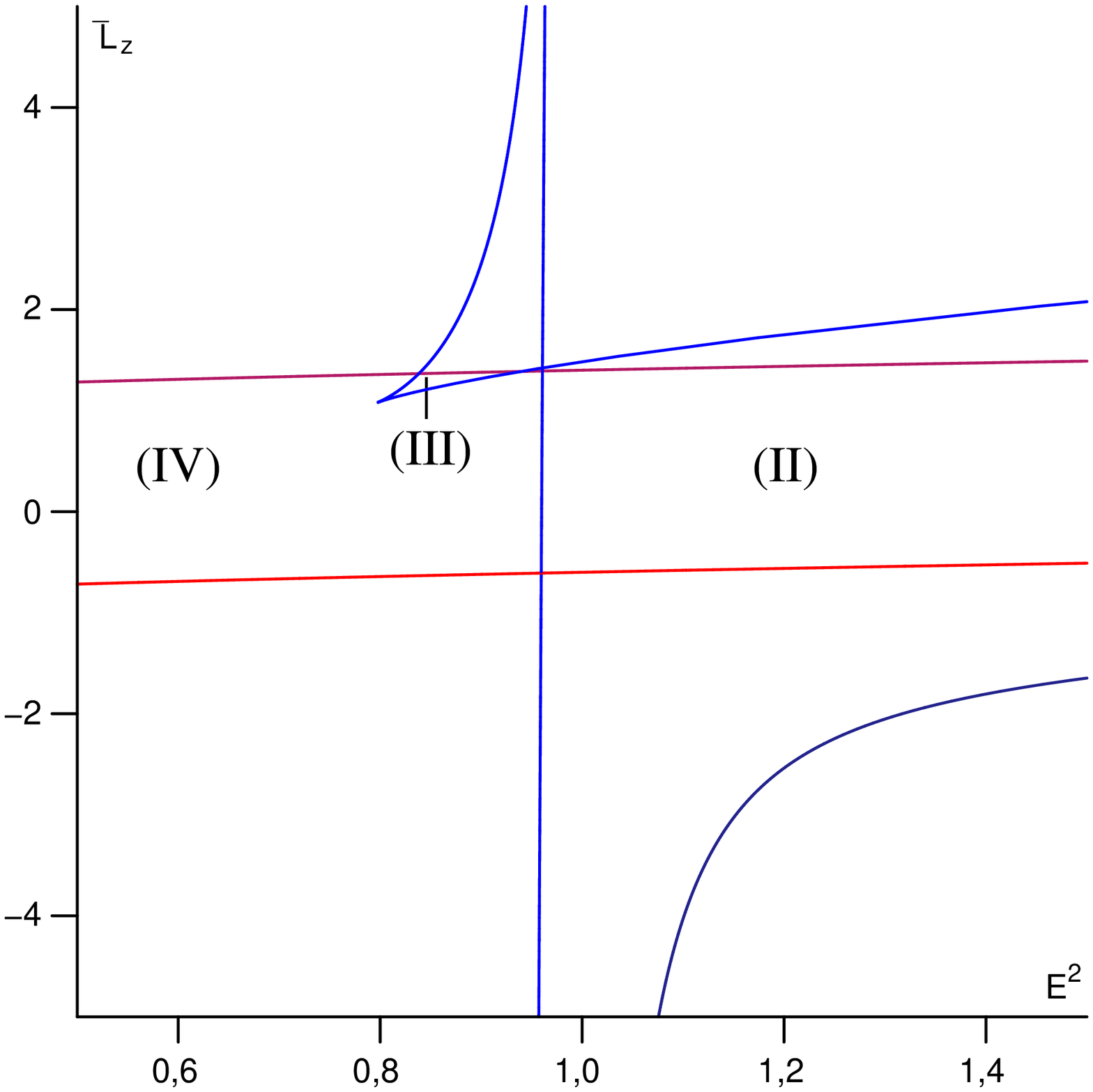}
\includegraphics[width=0.23\textwidth]{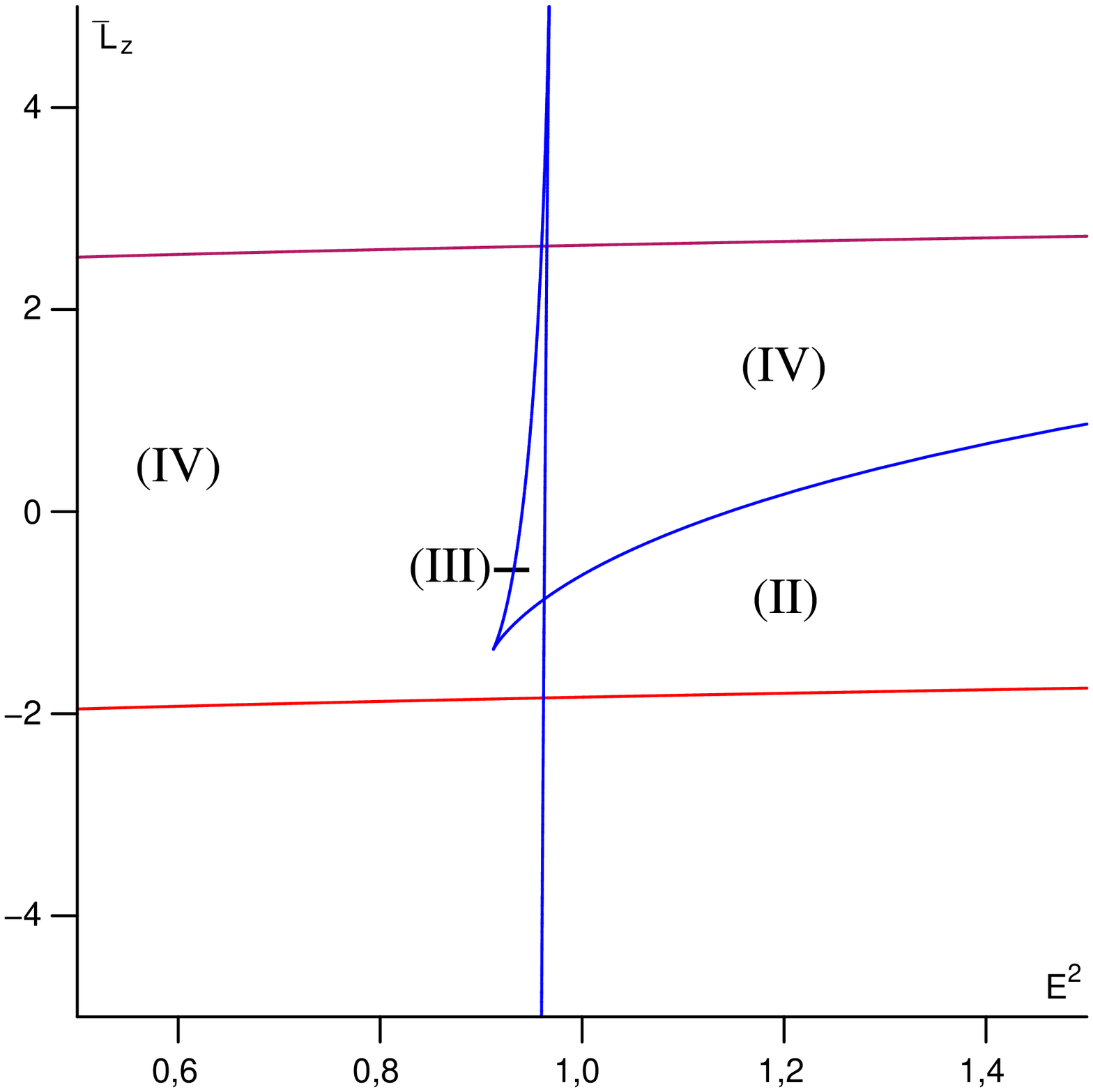}
\includegraphics[width=0.23\textwidth]{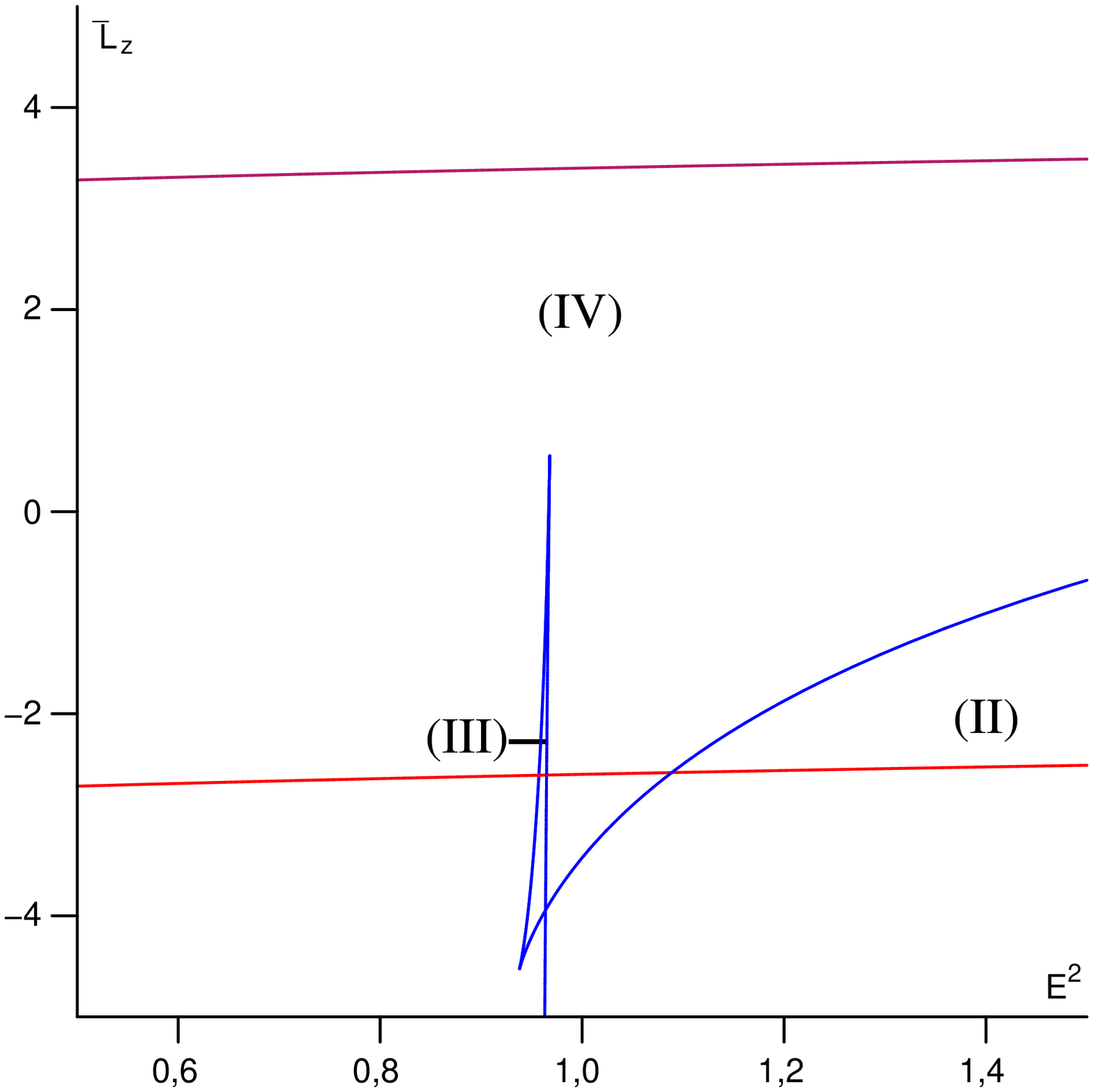}
\caption{Deformation of regions of $r$ motion for $\al=0.4$, $\la=10^{-5}$, and different $\bar K$. From left to right $\bar K=0$, $\bar K=1$, $\bar K=5$, and $\bar K=9$. For $\bar K=0$ region (d) of $\theta$ motion vanishes and region (b) reduces to the line $\bar L_z = \al E$, which, therefore, is the only allowed parameter region and corresponds to $Q=0$ and $\theta \equiv \frac{\pi}{2}$, see Thm.~\ref{Thm2}. Around this line $\bar L_z=\al E$ additional parts of regions (III) and (IV) appear not present for $\bar K>0$.}
\label{Fig:rtheta_EL}
\end{figure*}

\paragraph{Case $\Lambda>0$.} 
Let us analyse now which regions change compared to the case $\Lambda=0$. At first, we recognize that region (V) for $\Lambda=0$ merged with region (IV) and that region (III) becomes smaller for $\Lambda>0$ due to the shift of the separating $E^2=1$ line towards the left. A comparison of the possible orbit types for $\Lambda>0$ with the one for $\Lambda=0$ shows that in regions (I) and (II) there are no differences. However, these regions are slightly deformed (for small $\Lambda$) and a pair of parameters $(E^2,\bar L_z)$ located in region (I) or (II) for $\Lambda>0$ may be located in a different region for $\Lambda=0$. We consider regions (III) and (IV). Here again we assume $r_i<r_{i+1}$.
\begin{itemize}
\item Region (III): all six zeros $r_i$ of $R$ are real and $R(\r) \geq 0$ for $\r \leq r_1, r_6 \leq \r$ and $r_{2k} \leq \r \leq r_{2k+1}$ for $k=1,2$. Possible orbit types: two flyby orbits, one to each of $\pm \infty$, and two different bound orbits.
\item Region (IV): $R$ has four real zeros and $R(\r) \geq 0$ for $\r \leq r_1, r_2 \leq \r \leq r_3, r_4 \leq \r$. Possible orbit types: two flyby orbits, one to each of $\pm \infty$ and a bound orbit.
\end{itemize}
Analogously to $\Lambda=0$, regions (III) and (IV) only contain region (b) of $\theta$ motion implying that there are no crossover orbits. Region (I) can only intersect region (d) because only transit orbits are possible. The remaining region (II) is the only one which intersects regions (b) and (d). 

We conclude that for $E^2>1$ the types of orbits are not noticeably changed, whereas for $E^2 \leq 1$ there are significant changes. In the former region (V) (for $\Lambda=0$), which is now in region (IV), and in region (III) we have two additional flyby orbits which are not present for $\Lambda=0$. In a small vertical stripe left of $E^2=1$ there are even orbits which are bound for $\Lambda=0$ but reaching infinity for $\Lambda>0$. In particular, it is independent of the value of $E$ if a geodesic may reach infinity as expected from the repulsive cosmological force related to $\Lambda>0$. 

Note that for huge $\Lambda$ the separation in regions (I) to (IV) is no longer possible because the repulsive cosmological force becomes so strong that bound orbits are no longer possible. In this case we have only two regions, one with two real zeros corresponding to two flyby orbits and one with only complex zeros corresponding to a transit orbit. 

All orbit types for small $\Lambda>0$ are summarized in Tab.~\ref{tab:lambda>0}.

\begin{table}[t]
\begin{center}
\begin{tabular}{|c|c|c||c|p{2.55cm}|}
\hline
region & $-$ & + & range of $\r$ & types of orbits \\ 
\hline\hline
Id & 0 & 0 &
\begin{pspicture}(-2,-0.2)(2,0.2)
\psline[linewidth=0.5pt]{->}(-2,0)(2,0)
\psline[linewidth=0.5pt](0,-0.2)(0,0.2)
\psline[linewidth=1.2pt](-2,0)(2,0)
\end{pspicture}
& transit \\
\hline
IIb & 1 & 1 & 
\begin{pspicture}(-2,-0.2)(2,0.2)
\psline[linewidth=0.5pt]{->}(-2,0)(2,0)
\psline[linewidth=0.5pt](0,-0.2)(0,0.2)
\psline[linewidth=1.2pt]{-*}(-2,0)(-1,0)
\psline[linewidth=1.2pt]{*-}(0.5,0)(2,0)
\end{pspicture}
& 2x flyby\\
\hline
IId & 2 & 0 &   
\begin{pspicture}(-2,-0.2)(2,0.2)
\psline[linewidth=0.5pt]{->}(-2,0)(2,0)
\psline[linewidth=0.5pt](0,-0.2)(0,0.2)
\psline[linewidth=1.2pt]{-*}(-2,0)(-1.2,0)
\psline[linewidth=1.2pt]{*-}(-0.5,0)(2,0)
\end{pspicture} 
& flyby, \\
& & & & crossover flyby\\
\hline
IIIb & 1 & 5 & 
\begin{pspicture}(-2,-0.2)(2,0.2)
\psline[linewidth=0.5pt]{->}(-2,0)(2,0)  
\psline[linewidth=0.5pt](0,-0.2)(0,0.2)  
\psline[linewidth=1.2pt]{-*}(-2,0)(-1.2,0)
\psline[linewidth=1.2pt]{*-*}(0.2,0)(0.6,0)
\psline[linewidth=1.2pt]{*-*}(0.9,0)(1.3,0)
\psline[linewidth=1.2pt]{*-}(1.6,0)(2,0)
\end{pspicture}
&2x flyby, 2x bound\\
\hline
IVb & 1 & 3 & 
\begin{pspicture}(-2,-0.2)(2,0.2)
\psline[linewidth=0.5pt]{->}(-2,0)(2,0)
\psline[linewidth=0.5pt](0,-0.2)(0,0.2)
\psline[linewidth=1.2pt]{-*}(-2,0)(-0.8,0)
\psline[linewidth=1.2pt]{*-*}(0.5,0)(1,0)
\psline[linewidth=1.2pt]{*-}(1.5,0)(2,0)
\end{pspicture}
& 2x flyby, bound\\
\hline
\end{tabular}
\caption{Orbit types for small $\Lambda>0$. For the description of the $+$, $-$ and range of $\r$ columns see Tab.~\ref{tab:lambda0}.}
\label{tab:lambda>0}
\end{center}
\end{table}

\paragraph{Case $\Lambda<0$.}
Here region (V) from $\Lambda=0$ merges with region (I) and region (III) becomes larger for $\Lambda<0$ due to the shift of the $E^2=1$ line to the right. Compared to the situation for $\Lambda=0$ the possible orbit types in region (III) do not change, but a set of parameters located there may be located in a different region for $\Lambda=0$. Let us examine the remaining regions. 
\begin{itemize}
\item Region (I): $R$ has two real zeros $r_1<r_2$ and $R(\r) \geq 0$ for $r_1 \leq \r \leq r_2$. Possible orbit types: bound orbit.
\item Region (II) (and (III)): $R$ has four real zeros with $R(\r) \geq 0$ for $r_{2k-1} \leq \r \leq r_{2k}$, $k=1,2$. Possible orbit types: two different bound orbits.
\item Region (IV): all six zeros of $R$ are real and $R(\r) \geq 0$ for $r_{2k-1} \leq \r \leq r_{2k}$, $k=1,2,3$. Possible orbit types: three different bound orbits.
\end{itemize}
Concerning crossover orbits regions (III) and (IV) again only contain region (b) of $\theta$ motion. Also region (II) intersects both (b) and (d) whereas region (I) can only contain region (d) of $\theta$ motion.

Summarizing, the types of orbits significantly change if $E^2>1$. The transit orbit in region (I) for $\Lambda=0$ is transformed to a bound orbit for $\Lambda<0$ as well as the flyby orbits in regions (II) and (IV). Although region (V) for $\Lambda=0$ merges with region (I) for $\Lambda<0$, the types of orbits do not change there. In general, because of $R \to - \infty$ if $\r \to \pm \infty$ we can not have orbits reaching $\r= \pm \infty$ at all as expected due to the attractive cosmological force related to $\Lambda<0$. 

All orbit types for $\Lambda<0$ are summarized in Tab.~\ref{tab:lambda<0}

\begin{table}[t]
\begin{center}
\begin{tabular}{|c|c|c||c|p{2.5cm}|}
\hline
region & $-$ & + & range of $\r$ & types of orbits \\ 
\hline\hline
Id & 1 & 1 &
\begin{pspicture}(-2,-0.2)(2,0.2)
\psline[linewidth=0.5pt]{->}(-2,0)(2,0)
\psline[linewidth=0.5pt](0,-0.2)(0,0.2)
\psline[linewidth=1.2pt]{*-*}(-1,0)(1,0)
\end{pspicture}
& crossover bound \\
\hline
IIb & 2 & 2 & 
\begin{pspicture}(-2,-0.2)(2,0.2)
\psline[linewidth=0.5pt]{->}(-2,0)(2,0)
\psline[linewidth=0.5pt](0,-0.2)(0,0.2)
\psline[linewidth=1.2pt]{*-*}(-1.2,0)(-0.5,0)
\psline[linewidth=1.2pt]{*-*}(0.5,0)(1.2,0)
\end{pspicture}
& 2x bound\\
\hline
IId & 3 & 1 &   
\begin{pspicture}(-2,-0.2)(2,0.2)
\psline[linewidth=0.5pt]{->}(-2,0)(2,0)
\psline[linewidth=0.5pt](0,-0.2)(0,0.2)
\psline[linewidth=1.2pt]{*-*}(-1.5,0)(-1,0)
\psline[linewidth=1.2pt]{*-*}(-0.5,0)(0.5,0)
\end{pspicture} 
& bound\\
& & & & crossover bound \\
\hline
IIIb & 0 & 4 & 
\begin{pspicture}(-2,-0.2)(2,0.2)
\psline[linewidth=0.5pt]{->}(-2,0)(2,0)  
\psline[linewidth=0.5pt](0,-0.2)(0,0.2)  
\psline[linewidth=1.2pt]{*-*}(0.3,0)(0.8,0)
\psline[linewidth=1.2pt]{*-*}(1.2,0)(1.7,0)
\end{pspicture}
& 2x bound\\
\hline
IVb & 2 & 4 & 
\begin{pspicture}(-2,-0.2)(2,0.2)
\psline[linewidth=0.5pt]{->}(-2,0)(2,0)
\psline[linewidth=0.5pt](0,-0.2)(0,0.2)
\psline[linewidth=1.2pt]{*-*}(-1.5,0)(-0.8,0)
\psline[linewidth=1.2pt]{*-*}(0.3,0)(0.8,0)
\psline[linewidth=1.2pt]{*-*}(1.2,0)(1.7,0)
\end{pspicture}
& 3x bound\\
\hline
\end{tabular}
\caption{Orbit types for $\Lambda<0$. For the description of the $+$, $-$ and range of $\r$ columns see Tab.~\ref{tab:lambda0}.}
\label{tab:lambda<0}
\end{center}
\end{table}

\section{Analytic solutions of the equations of motion} \label{section:exact solutions}
We will now analytically solve the geodesic equation in Kerr-de Sitter space-time \eqref{dot r_sn} - \eqref{dot t_sn}. Each equation will be treated separately.

\subsection{$\theta$ motion} \label{thetamotion}
We begin with the differential equation \eqref{dot theta_sn}
\begin{align}
\left( \frac{d\theta}{d\gamma} \right)^2 & = \Theta = \Dt (\ka - \delta_2 \al^2 \cos^2\theta) - \frac{\chi^2 \mathbbm{T}^2}{\sin^2\theta} \,, \nonumber
\end{align}
which can be simplified by the substitution $\nu = \cos^2\theta$ yielding
\begin{align}
\frac{1}{4} \left( \frac{d\nu}{d\gamma} \right)^2 & = \nu \Theta_\nu \label{Theta_nu} \,,
\end{align}
where $\Theta_\nu$ is defined as in \eqref{def Theta_nu}. This differential equation can be solved easily if $\nu \Theta_\nu$ has a zero with multiplicity $2$ or more. In this case \eqref{Theta_nu} can be rewritten as
\begin{align}
4 (\gamma-\gamma_0) = \int_{\nu_0}^\nu \frac{d\nu'}{(\nu'-\nu_i)^j \sqrt{P_2(\nu')}} \,,
\end{align}
where $\gamma_0$ and $\nu_0$ are initial values, $P_2$ is a polynomial with maximum degree $2$, and $\nu_i$ is a zero of $\nu \Theta_\nu$ with multiplicity $2j$ or $2j+1$, $j=1,2$. The integral on the right hand side can then be solved by elementary functions \cite{GradshteynRyzhik83}. As in this case the explicit expression provides no further insight and some case distinctions would be necessary we skip the solution procedure.

If $\nu \Theta_\nu$ has only simple zeros the differential equation \eqref{Theta_nu} is of elliptic type and first kind and can be solved in terms of the Weierstrass elliptic function $\wp$. In contrast to the $r$ motion considered in the next subsection, this structure does not simplify if we consider only light with $\delta=0$. To obtain a solution we transform $\nu \Theta_\nu$ to the Weierstrass form $(4y^3-g_2y-g_3)$ for some constants $g_2$ and $g_3$: First, we substitute $\nu=\xi^{-1}$ giving
\begin{align}
\frac{1}{4} \left( \frac{d\xi}{d\gamma} \right)^2 & = \Theta_\xi \,,
\end{align}
where
\begin{align}
\Theta_\xi & := \xi^3 \left( \ka - \chi^2 (\al-\D)^2 \right) + \xi^2 ( \al^2(\ka \la-\delta_2) -\ka \nonumber \\
& \quad +2\chi^2\al(\al-\D) ) + \al^2 ( \delta_2(1-\la \al^2)-\chi^2-\la \ka ) \xi \nonumber \\
& \quad +  \delta_2 \al^4 \la \label{Theta_xi} \\
& =: \sum_{i=1}^3 a_{i} \xi^i \,. \nonumber
\end{align}   
Second, we substitute $\xi = \frac{1}{a_{3}} \left( 4y-\frac{a_{2}}{3} \right)$ yielding
\begin{align}
\frac{1}{4} \left( \frac{dy}{d\gamma} \right)^2 & = 4 y^3 - g_2 y - g_3 \,, \label{Theta_y}
\end{align}
where
\begin{align}
g_2 & = \frac{a_{2}^2}{12} - \frac{a_{1} a_{3}}{4} \,, \nonumber \\
g_3 & = \frac{1}{48} a_{1} a_{2} a_{3} - \frac{1}{16} a_{0} a_{3}^2 - \frac{1}{216} a_{2}^3 \,. \label{def_g2g3}
\end{align}
The differential equation \eqref{Theta_y} is elliptic of first kind, which can be solved by
\begin{align}
y(\gamma) & = \wp( 2 \gamma - \gamma_{\theta,\rm in};g_2,g_3) \label{sol_y} \,.
\end{align}
Accordingly, the solution of \eqref{dot theta_sn} is given by
\begin{align}
\theta(\gamma) & =  \arccos \left( \pm \sqrt{ \frac{a_{3}}{ 4 \wp( 2 \gamma - \gamma_{\theta,{\rm in}};g_2,g_3 ) - \frac{a_{2}}{3}}} \right) \,,
\end{align}
where $\gamma_{\theta,\rm in} = 2 \gamma_0 + \int_{y_0}^\infty \frac{dy'}{\sqrt{4y'^3-g_2y'-g_3}}$ with $y_0 = \frac{a_{3}}{4\cos^2(\theta_0)} + \frac{a_{2}}{12}$ depends on the initial values $\gamma_0$ and $\theta_0$ only. The sign of the square root depends on whether $\theta(\gamma)$ should be in $(0,\frac{\pi}{2})$ (positive sign) or in $(\frac{\pi}{2},\pi)$ (negative sign) and reflects the symmetry of the $\theta$ motion with respect to the equatorial plane $\theta=\frac{\pi}{2}$. If the motion is located in region (b) from the previous section this implies that the two solutions have to be glued together along $\theta(\gamma)=\frac{\pi}{2}$ if the whole $\theta$ motion should be considered.

\subsection{$r$ motion} \label{rmotion}
The differential equation that describes the dynamics of $r$ \eqref{dot r_sn}
\begin{align}
\left( \frac{d\r}{d\gamma}\right)^2 & = R = \chi^2 \mathbbm{P}^2 - \Dx (\delta_2 \r^2 + \ka) \nonumber
\end{align}
is more complicated because $R$ is a polynomial of a degree up to $6$. If $R$ has a zero of multiplicity $4$ or more, or if $R$ has two zeros of multiplicity $2$ or more, the differential equation \eqref{dot r_sn} can be written as
\begin{align}
\gamma-\gamma_0 = \int_{\r_0}^{\r} \frac{d\r'}{\prod_{i=1}^k (\r'-\r_i)^{j_i} \sqrt{P_2(\r')}} \,,
\end{align}
where $\gamma_0$ and $\r_0$ are initial values, $P_2$ is a polynomial with maximum degree 2, $\r_i$ are zeros of $R$ with multiplicity $2j_i$ or $2j_i+1$ where $j_i=1,2$, and $k=2$ if there are two zeros of multiplicity $2$ or more and $k=1$ else. The integral on the right hand side can then be solved by elementary functions \cite{GradshteynRyzhik83}. As the explicit expression provides no further insight and some case distinctions would be necessary we skip the solution procedure.

If we consider null geodesics, i.e. $\delta=0$, $R$ is in general of degree 4 and the differential equation \eqref{dot r_sn} is of elliptic type and first kind. Then it can be handled using the method presented in the foregoing subsection: With the substitutions $\r=\xi^{-1}+\r_R$, where $\r_R$ is a zero of $R$, and $\xi = \frac{1}{b_3} \left(4y-\frac{b_2}{3}\right)$, where $b_i = \frac{1}{(4-i)!} \frac{d^{(4-i)} R}{d\r^{(4-i)}} (\r_R)$, \comment{and $R_\xi = \left(\frac{d\xi}{d\r}\right)^2 R_{\r=\xi^{-1}+\r_R}$,} we arrive at a form \eqref{Theta_y}. This can then again be solved in terms of Weierstrass elliptic functions. The result is
\begin{align}\label{sol_r light}
\r(\gamma) = \frac{b_3}{4\wp(\gamma-\gamma_{\r,\rm{in}};g_{2},g_{3}) - \frac{b_2}{3}} + \r_R \,,
\end{align}
where $\gamma_{\r,\rm{in}} = \gamma_0 + \int_{y_0}^\infty \frac{dy'}{\sqrt{4y'^3-g_{2,r}y'-g_{3,r}}}$ with $y_0 = \frac{b_3}{4(\r_0-r_{\rm R})} + \frac{b_2}{12}$ depends only on the initial values $\gamma_0$ and $\r_0$ and $g_2, g_3$ are defined as in \eqref{def_g2g3} with $a_{i} = b_i$.

The differential equation \eqref{dot r_sn} is also of elliptic type but of third kind if $R$ has a double or triple zero $\r_1$. In this case \eqref{dot r_sn} reads
\begin{align}\label{requation_elliptic}
\gamma - \gamma_0 =\int_{\r_0}^{\r} \frac{d\r'}{(r-r_1) \sqrt{P_4(\r)}} \,,
\end{align}
where $P_4$ is a polynomial of degree 4. This equation can be solved for $\r(\gamma)$ analogous to the method which will be presented in subsection \ref{phimotion}.

If we consider particles, i.e. $\delta = 1$, and assume that $R$ has only simple zeros the differential equation \eqref{dot r_sn} is of hyperelliptic type. It can be solved in terms of derivatives of the Kleinian $\sigma$ function with the method developed in \cite{HackmannLaemmerzahl08}. For this, we have to cast \eqref{dot r_sn} in the standard form by a substitution $\r= \pm \frac{1}{u} + \r_R$ with a zero $\r_R$ of $R$. This yields
\begin{align}
\left( u \frac{du}{d\gamma} \right)^2 & = c_{5} R_u  \label{R_u} \,,
\end{align}
where
\begin{align}
R_u & = \sum_{i=0}^5 \frac{c_{i}}{c_{5}} u^i \,, \quad c_{i} = \frac{(\pm 1)^i}{(6-i)!} \frac{d^{(6-i)} R}{du^{(6-i)}} (\r_R)\,. \label{basic_R}
\end{align}
The sign in the substitution has be chosen such that the constant $c_{5}$ is positive and, therefore, depends on the choice of $\r_R$ and the sign of $\la$. The differential equation \eqref{R_u} is of first kind and can be solved by
\begin{align}
u(\gamma) & = - \frac{\sigma_1}{\sigma_2} \left( \begin{matrix} f( \sqrt{c_{5}} \gamma - \gamma_{\r,\rm in}) \\ \sqrt{c_{5}} \gamma - \gamma_{\r,\rm in} \end{matrix} \right) \,,
\end{align}
where $\gamma_{\r,\rm in} = \sqrt{c_{5}} \gamma_0 + \int_{u_0}^\infty \frac{u' du'}{\sqrt{\tilde R_{u'}}}$ and $u_0 = \pm \left(\r_0-\r_R \right)^{-1}$ depends only on the initial values $\gamma_0$ and $\r_0$. Here $f$ is the function that describes the $\theta$-divisor, i.e. $\sigma\left( (f(x),x)^t \right) = 0$, cp.~\cite{HackmannLaemmerzahl08}. The radial distance $\r$ is then given by
\begin{align}
\r(\gamma) & = \mp \frac{\sigma_2}{\sigma_1}\left( \begin{matrix} f( \sqrt{c_{5}} \gamma - \gamma_{\r,\rm in}) \\ \sqrt{c_{5}} \gamma - \gamma_{\r, \rm in} \end{matrix} \right)  + \r_R\,,
\end{align}
where the sign depends on the sign chosen in the substitution $\r = \pm \frac{1}{u} + \r_R$, i.e. is such that $c_5$ in \eqref{basic_R} is positive.

\subsection{$\phi$ motion} \label{phimotion}
We treat now the most complicated equation of motion in Kerr-de Sitter space-time, namely the equation for the azimuthal angle \eqref{dot phi_sn}
\begin{equation*}
\frac{d\phi}{d\gamma} = \chi^2 \left[ \frac{\al}{\Dx} \mathbbm{P} - \frac{\mathbbm{T}}{\Dt \sin^2\theta} \right] \,.
\end{equation*}
This equation can be splitted in a part dependent only on $\r$ and in a part only dependent on $\theta$. Integration yields
\begin{align}
\phi - \phi_0 & = \chi^2 \left[  \int_{\gamma_0}^\gamma \frac{\al \mathbbm{P}}{\Delta_{\r(\gamma)}} d\gamma - \int_{\gamma_0}^\gamma  \frac{\mathbbm{T} d\gamma}{\Delta_{\theta(\gamma)} \sin^2\theta(\gamma)} \right]  \nonumber\\
& = \chi^2 \left[ \int_{\r_0}^{\r} \frac{\al \mathbbm{P} d\r'}{\Delta_{\r'} \sqrt{R}} - \int_{\theta_0}^\theta  \frac{\mathbbm{T} d\theta'}{\Delta_{\theta'} \sin^2\theta' \sqrt{\Theta}} \right] \,, \label{phi_int}
\end{align}
where we substituted $\r=\r(\gamma)$, i.e. $\frac{d\r}{d\gamma} = \sqrt{R}$, in the first and $\theta=\theta(\gamma)$, i.e. $\frac{d\theta}{d\gamma} = \sqrt{\Theta}$, in the second integral.

We will solve now the two integrals in \eqref{phi_int} separately.

\subsubsection{The $\theta$ dependent integral} \label{phithetamotion}
Let us consider the integral
\begin{equation}
I_\theta := \int_{\theta_0}^\theta  \frac{\left( \sin^2\theta' \al - \D \right) d\theta'}{\Delta_{\theta'} \sin^2\theta' \sqrt{\Theta}} \,,
\end{equation}
which can be transformed to the simpler form 
\begin{align}
I_\theta & = \mp \frac{1}{2} \int_{\nu_0}^\nu \frac{\al-\D-\al \nu'}{\Delta_{\nu'} (1-\nu') \sqrt{\nu' \Theta_{\nu'}}} d\nu'  \label{I_theta}
\end{align}
by the substitution $\nu=\cos^2\theta$, where $\Theta_\nu$ is defined in \eqref{def Theta_nu} and $\Delta_\nu = 1+\al^2\la\nu$. Here we have to pay special attention to the integration path. If $\theta \in (0,\frac{\pi}{2}]$ we have $\cos \theta = +\sqrt{\nu}$ but for $\theta \in [\frac{\pi}{2}, \pi)$ it is $\cos \theta = -\sqrt{\nu}$. Accordingly, we first have to split the integration path from $\theta_0$ to $\theta$ such that every piece is fully contained in the interval $(0,\frac{\pi}{2}]$ or $[\frac{\pi}{2},\pi)$ and then to choose the appropiate sign of the square root of $\nu$. In the following we assume for simplicity that $\cos \theta = + \sqrt{\nu}$.

Analogous to subsection \ref{thetamotion} the integral $I_\theta$ can be solved by elementary functions if $\nu \Theta_\nu$ has at least a double zero \cite{GradshteynRyzhik83}. If $\nu \Theta_\nu$ has only simple zeros $I_\theta$ is of elliptic type and of third kind. If this is the case, the solution to $I_\theta$ is given by 
\begin{multline} \label{sol I_theta}
I_\theta = \frac{|a_{3}|}{2 a_{3}} \bigg\{ (\al-\D) (v-v_0) \\
- \sum_{i=1}^4 \frac{a_{3}}{4 \chi \wp'(v_i)} \left( \zeta(v_i) (v-v_0) + \log \frac{\sigma(v-v_i)}{\sigma(v_0-v_i)} + 2\pi i k_i \right) \\
\cdot \big( \al^3 \la (\chi-\al \la \D) (\delta_{i1}+\delta_{i2}) + \D (\delta_{i3}+\delta_{i4}) \big) \bigg\}
\end{multline}
where the constants $a_i$ are defined as in subsection \ref{thetamotion}, $\wp(v_1) = \frac{a_2}{12} - \frac{1}{4} \al^2\la a_3= \wp(v_2)$, $\wp(v_3) = \frac{a_{2}}{12} + \frac{a_{3}}{4} = \wp(v_4)$, $v = v(\gamma) = 2\gamma - \gamma_{\theta,\rm in}$ with $\gamma_{\theta,\rm in}$ as in \eqref{sol_y} and $v_0 = v(\gamma_0)$. The integers $k_i$ correspond to different branches of $\log$. The details of the computation can be found in appendix \ref{app:elliptic}.

\subsubsection{The $r$ dependent integral}
We solve now the first, $\r$ dependent integral in \eqref{phi_int}
\begin{equation}
I_r := \int_{\r_0}^{\r} \frac{\al \left( \r'^2+\al^2 - \al \D \right) d\r'}{\Delta_{\r'} \sqrt{R}} \,. \label{I_r}
\end{equation}
Analogous to subsection \ref{rmotion} this integral can be solved by elementary functions if $R$ has a zero with multiplicity $4$ or more or two zeros with multiplicity $2$ or more \cite{GradshteynRyzhik83}.

If we consider light, i.e. $\delta=0$, $R$ is in general of degree $4$ and $I_r$ is of elliptic type and third kind. In this case it can be solved analogously to $I_\theta$. The same substitutions $\r = \frac{1}{\xi}+\r_R$ and $\xi = \frac{1}{b_3}\left(4y-\frac{b_2}{3}\right)$ as in subsection \ref{rmotion} for the case $\delta=0$, a subsequent partial fraction decomposition, and the final substitution $y=\wp(v)$ result in 
\begin{align}
\frac{b_3}{|b_3|}I_r & = \sum_{i=1}^4 C_i \int_{v_0}^v \frac{dv}{\wp(v)-y_i} - \frac{\al \left( \r_R^2+\al^2 - \al \D \right)}{\Delta_{\r=\r_R}} \int_{v_0}^v dv \,,
\end{align}
where $y_i$ are the four zeros of $\Delta_{y(\r)}$, $b_3$ defined as in \eqref{sol_r light}, and $C_i$ are the coefficients of the partial fractions dependent on the parameters and $\r_R$. The four functions $f_i(v) = (\wp(v)-y_i)^{-1}$ have simple poles in $v_{i1}, v_{i2}$ with $\wp(v_{i1}) = y_i = \wp(v_{i2})$ and have to be integrated wih the method presented in appendix \ref{app:elliptic}. Then $I_r$ is given by
\begin{multline} \label{phirequation_elliptic}
\frac{b_3}{|b_3|} I_r = \sum_{i=1}^4 \sum_{j=1}^2 \frac{C_i}{\wp'(v_{ij})} \bigg[ \zeta(v_{ij})(v-v_0) + \log \sigma(v-v_{ij}) \\ 
- \log \sigma(v_0-v_{ij}) \bigg] - \frac{\al \left( \r_R^2+\al^2 - \al \D \right)}{\Delta_{\r=\r_R}} (v-v_0)\,,
\end{multline}
where $v=v(\gamma)=\gamma-\gamma_{\r,\rm in}$, $v_0=v(\gamma_0)$ with $\gamma_{\r,\rm in}$ as in \eqref{sol_r light}. In the same way $I_r$ can be solved if $R$ has a double or triple zero.

If we consider particles, i.e. $\delta=1$, and assume that $R$ has only simple zeros, $I_r$ is of hyperelliptic type and third kind. The details of the solution method can be found in appendix \ref{app:hyperelliptic} but we give an outline here: First, we transform analogously to section \ref{rmotion} to the standard form by $\r= \pm 1/u+\r_R$ with a zero $\r_R$ of $R$. Afterward we simplify the integrand by a partial fraction decomposition which allows us to express $I_r$ in terms of the canonical holomorphic differentials $d\vec z$ \eqref{def_holomorphic} and the canonical differential of third kind $dP(x_1,x_2)$ \eqref{def_dP}. These differentials can then be expressed in dependence of the normalized Mino time $\gamma$. If we define $w = w(\gamma) = \sqrt{c_{5}} \gamma - \gamma_{\r,\rm in}$ and $w_0 =w(\gamma_0)$ the result is
\begin{multline} \label{sol I_r}
I_r = - \frac{\al u_0}{\sqrt{c_{5}} |u_0|} \bigg\{ C_1 (w-w_0)  + C_0 ( f(w)-f(w_0) ) \\
+ \sum_{i=1}^4 \frac{C_{2,i}}{\sqrt{R_{u_i}}} \bigg[ \frac{1}{2} \log \frac{\sigma(W^+(w))}{\sigma( W^-(w) )} - \frac{1}{2} \log \frac{\sigma( W^+(w_0))}{\sigma( W^-(w_0) )} \\
- \big( f(w)-f(w_0), w-w_0 \big) \left( \int_{u_i^-}^{u_i^+} d\vec r\right) \bigg] \bigg\} \,.
\end{multline}
where the constants $C_i$ are the coefficients of the partial fractions, $u_i$ are the four zeros of $\Delta_{\r=\pm 1/u+\r_R}$, $u_0=\pm (\r-\r_R)^{-1}$, and $c_5$ and $R_u$ as in \eqref{R_u}. The functions $W^{\pm}$ are defined by $W^{\pm}(w):=(f(w),w)^t - 2 \int_\infty^{u_i^{\pm}} d\vec z$, where the points $u_i^{\pm} = (u_i, \pm \sqrt{R_{u_i}})$ on the Riemann surface of $y^2=R_u$ are the pole $u_i$ located on the positive and the negative branch of the square root.

\subsection{$t$ motion}
The equation for $t$ \eqref{dot t_sn}
\begin{equation*}
\frac{dt}{d\gamma} = \chi^2 \bar M \left[ \frac{\r^2+\al^2}{\Dx} \mathbbm{P}  - \frac{\al}{\Dt} \mathbbm{T} \right]
\end{equation*}
is as complicated as the equation for $\phi$ motion. An integration yields
\begin{align}
t-t_0 & = \chi^2 \bar M \left[ \int_{\gamma_0}^\gamma \frac{\r^2+\al^2}{\Dx} \mathbbm{P} d\gamma - \int_{\gamma_0}^\gamma \frac{\al}{\Dt} \mathbbm{T} d\gamma \right] \nonumber\\
& = \chi^2 \bar M \left[ \int_{\r_0}^{\r} \frac{(\r^2+\al^2)\mathbbm{P}}{\Dx \sqrt{R}} d\r - \int_{\theta_0}^\theta \frac{\al \mathbbm{T}}{\Dt \sqrt{\Theta}} d\theta \right] \nonumber \\
& = \chi^2 \bar M \left[ \tilde I_r - \tilde I_\theta \right] \,. \label{t_int}
\end{align}
Because we already demonstrated the solution procedure, we only give here the results for the most general cases. \comment{Details of the necessary solution steps can be found in the appendix.}

If $\nu \Theta_\nu$ in \eqref{Theta_nu} has only simple zeros the solution of the $\theta$ dependent part is given by
\begin{multline}
\tilde I_\theta = a_{3} (v-v_0) - \sum_{i=1}^2 \frac{a_{3} \al^2 \la}{4 \wp'(v_i)} \big[ \zeta(v_i) (v-v_0) \\ + \log \sigma(v-v_i) - \log \sigma(v_0-v_i) \big]
\end{multline}
where $a_{3}$ is defined as in \eqref{Theta_xi}, $\wp(v_1) = \frac{a_2}{12} - \frac{1}{4} \al^2\la a_3= \wp(v_2)$, and $v=v(\gamma)=2\gamma-\gamma_{\theta,\rm in}$, $v_0=v(\gamma_0)$ as in \eqref{sol I_theta}.

If we consider light, i.e. $\delta=0$, the solution for the $\r$ dependent part is given by
\begin{multline}
\frac{b_3}{|b_3|} \tilde I_r = \sum_{i=1}^4 \sum_{j=1}^2 \frac{\tilde C_i}{\wp'(v_{ij})} \bigg[ \zeta(v_{ij})(v-v_0) + \log \sigma(v-v_{ij}) \\ 
- \log \sigma(v_0-v_{ij}) \bigg] - \frac{(\r_R^2+\al^2) (\r_R^2+\al^2 - \al \D)}{\Delta_{\r=\r_R}} (v-v_0)\,,
\end{multline}
where $b_3$ is defined as in \eqref{sol_r light}, $\tilde C_i$ are the coefficients of the partial fractions, $\wp(v_{i1}) = y_i = \wp(v_{i2})$ with the four zeros $y_i$ of $\Delta_{y(\r)}$, $y$ and $v = v(\gamma) = \gamma - \gamma_{\r,\rm in}$, $v_0=v(\gamma_0)$ as in \eqref{sol_r light}.

If $R$ has only simple zeros and we consider timelike geodesics $\delta=1$ the solution of the $\r$ dependent part is given by
\begin{multline}
\tilde I_r = \frac{u_0}{|u_0| \sqrt{c_{5}}} \bigg\{ \tilde C_1 (w-w_0) + \tilde C_0 (f(w)-f(w_0)) \\
+ \sum_{i=1}^4 \frac{\tilde C_{2,i}}{\sqrt{R_{u_i}}} \bigg[ \frac{1}{2} \log \frac{\sigma(W^+(w))}{\sigma(W^-(w))} - \frac{1}{2} \log \frac{\sigma(W^+(w_0))}{\sigma(W^-(w_0))} \\
- (f(w)-f(w_0),w-w_0) \left( \int_{u_i^-}^{u_i^+} d\vec r \right) \bigg] \bigg\}
\end{multline}
where the notation is as in \eqref{sol I_r} and $\tilde C_0, \tilde C_1, \tilde C_{2,i}$ are the coefficients of the partial fractions. 

\section{Discussion of some geodesics}
In Sec.~\ref{section:Types} we discussed the general features of the different types of timelike geodesic motion in Kerr-de Sitter and Kerr-anti-de Sitter space-time. With the analytical solution derived in section \ref{section:exact solutions} at hand we want to discuss now some chosen geodesics. 

We start with orbits which highlight the influence of $\Lambda$ on the geodesics. From the results of section \ref{section:Types} we conclude that for $\Lambda>0$ there are four parameter regions where the changes compared to $\Lambda=0$ are most obvious. The first two are the regions (III) and (IV) with $E^2 < 1$, where we have additional flyby orbits not present for $\Lambda=0$. Third and fourth, the shift from region (V) of $\Lambda=0$ to region (II) of $\Lambda>0$ for $E^2= 1-\epsilon$, $\epsilon>0$ small, and the shift from region (III) of $\Lambda=0$ to region (IV) of $\Lambda>0$, again for $E^2=1-\epsilon$ are most interesting as the (outer) bound orbit becomes a flyby orbit. A plot of the corresponding orbits can be found in Fig. \ref{Fig:flybyorbits}.

\begin{figure}
\subfigure[$E^2=0.9$, $\bar L_z=-1$\newline(region (IVb))]{
\includegraphics[width=0.23\textwidth]{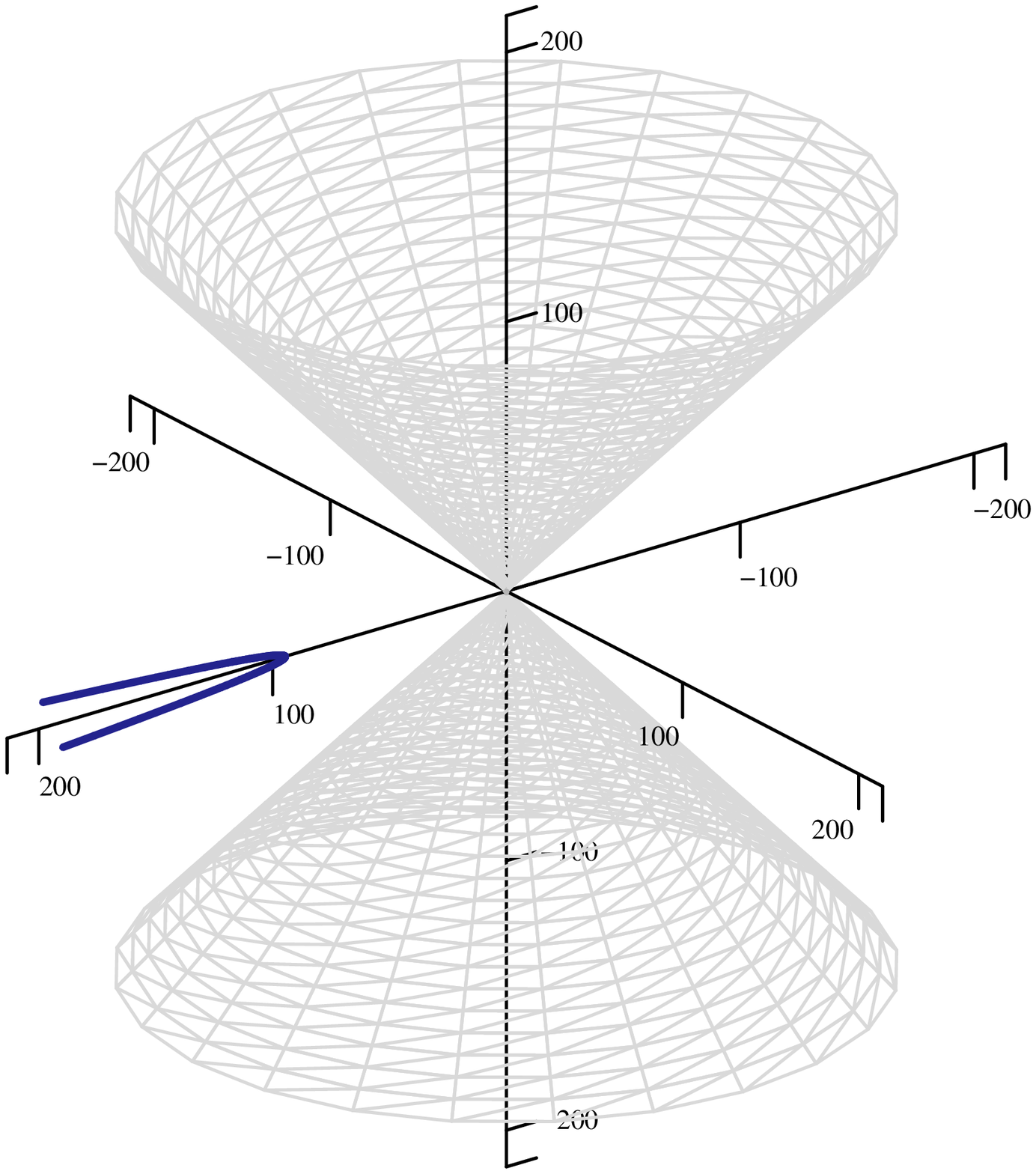}}
\subfigure[$E^2=0.94$, $\bar L_z=0.6$\newline(region(IIIb))]{
\includegraphics[width=0.23\textwidth]{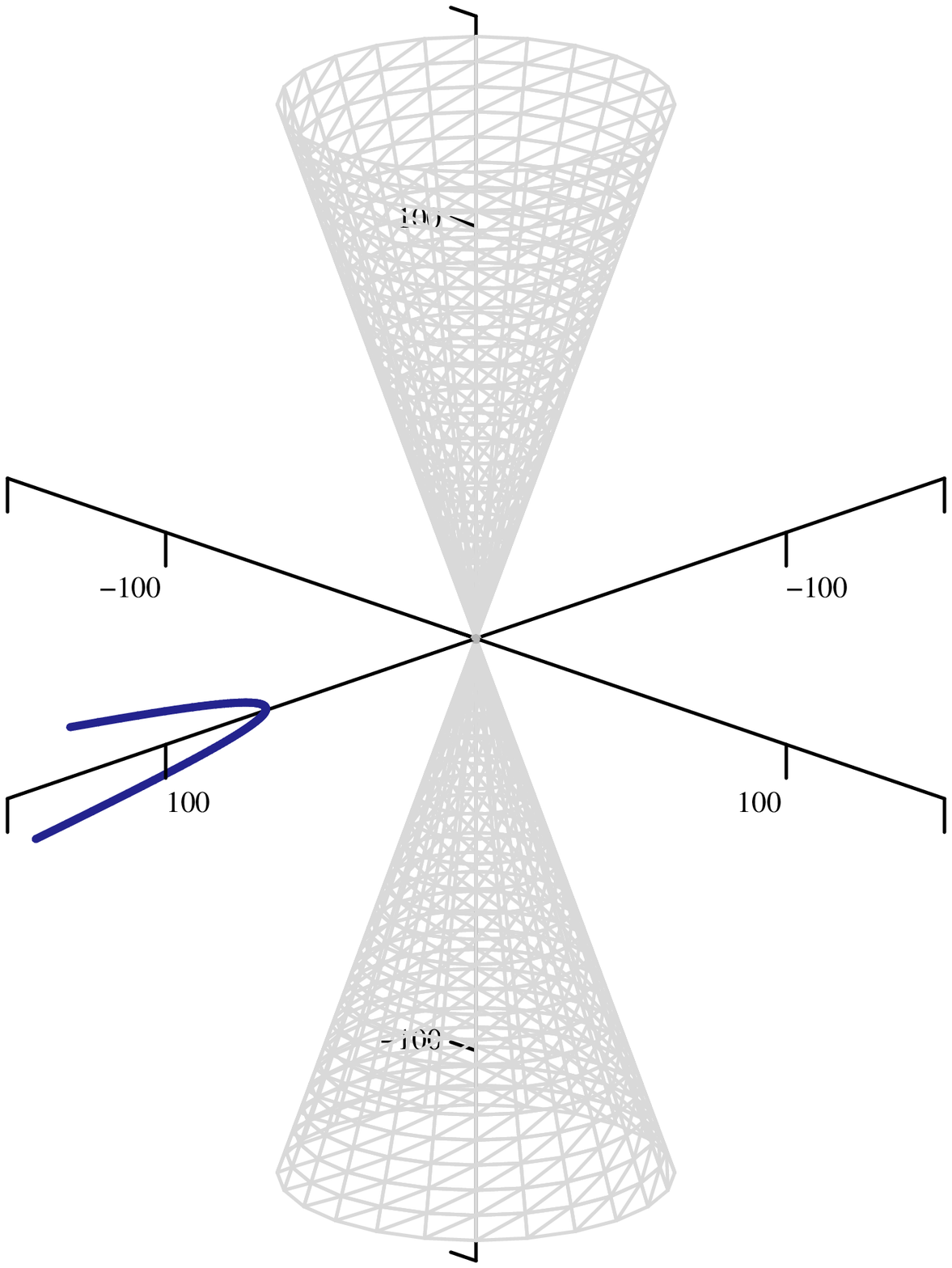}}
\subfigure[$E^2=0.97$, $\bar L_z=-1$\newline(region (IIb))]{
\includegraphics[width=0.23\textwidth]{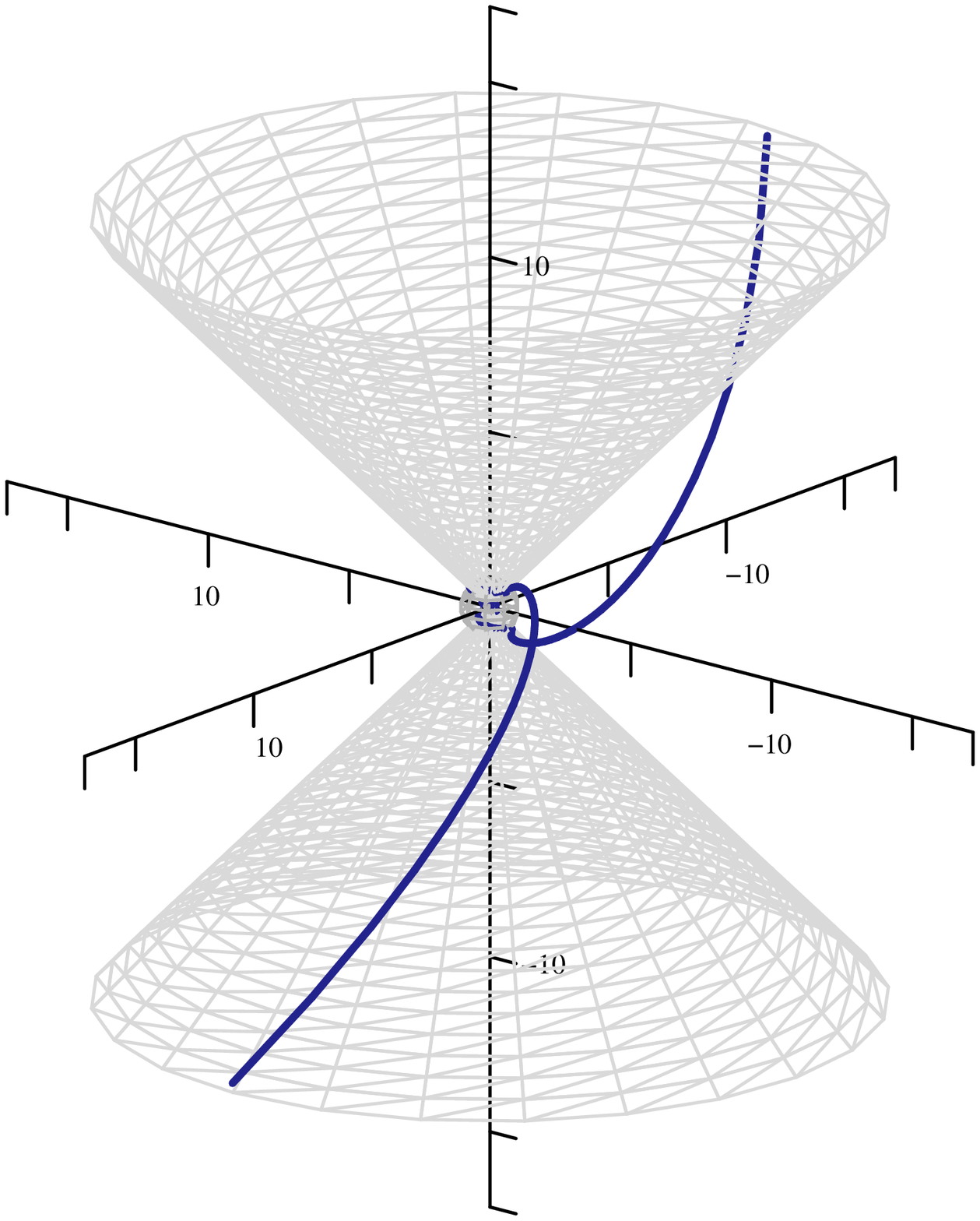}}
\subfigure[$E^2=0.97$, $\bar L_z=1.8$\newline(region (IVb)]{
\includegraphics[width=0.23\textwidth]{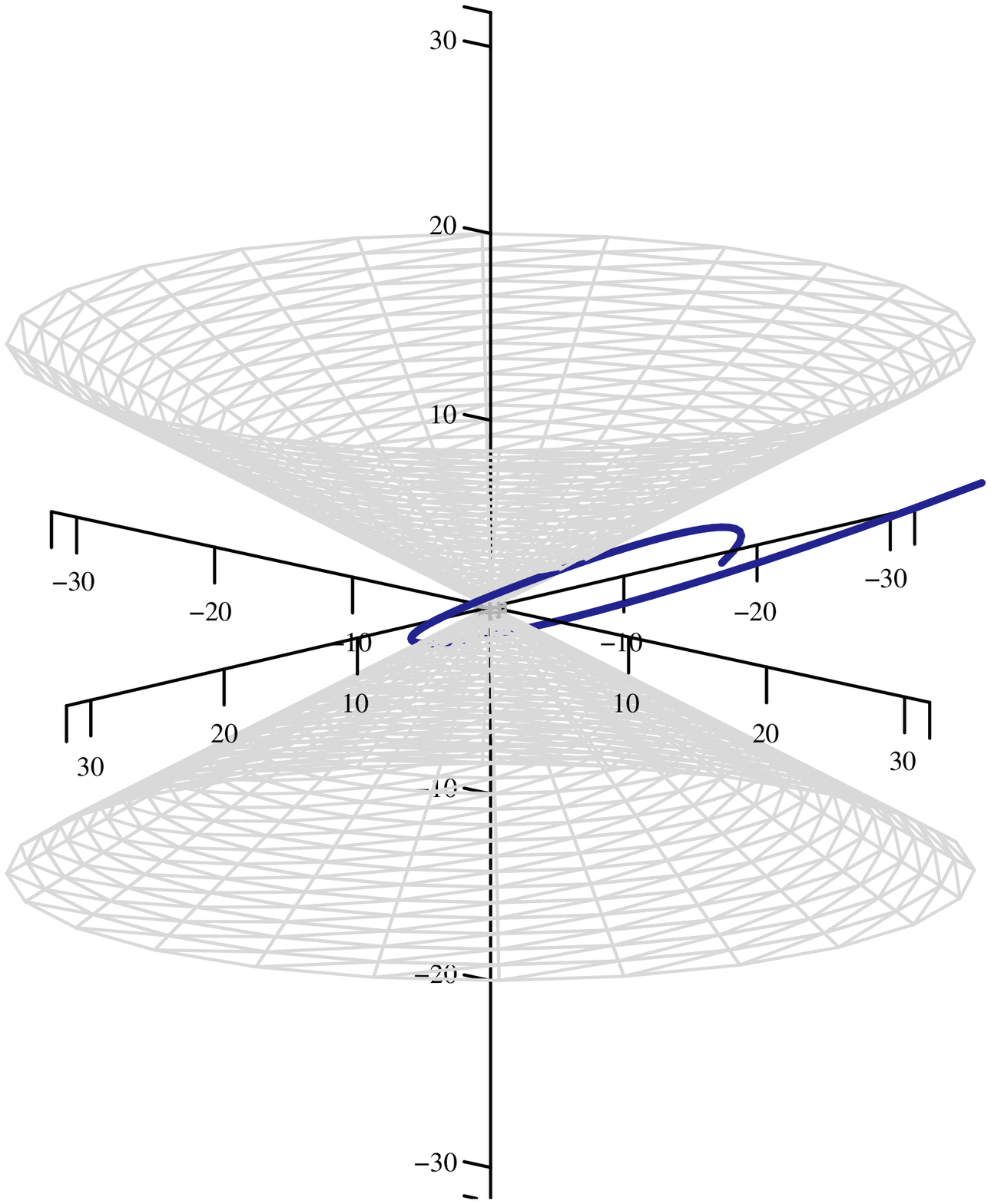}}
\caption{Flyby orbits for $\la=10^{-5}$, $\al=0.4$, and $\bar K=3$. For $\Lambda=0$ the parameters of (c) would have been located in region (Vb) and the parameters of (d) in region (IIIb). Light grey cones correspond to extremal $\theta$ and dark grey spheres to horizons.}
\label{Fig:flybyorbits}
\end{figure}

An important feature of geodesics in stationary axisymmetric space-times is motion of the nodes where the orbit of a test-particle or light intercepts the equatorial plane. This motion is caused by the $g_{0i}$ components of the space-time metric and known as the Lense-Thirring effect. In the weak field regime it becomes visible by a precession of the orbital plane, cp.~Fig.~\ref{Fig:boundorbit} for an obvious example. This orbital precession has been confirmed within an accuracy of about 10\% by the LAGEOS (Laser Geodynamics Satellite) mission \cite{Ciufolini07}, \footnote{Another method to observe the influence of the gravitomagnetic components $g_{0i}$ is through the precession of gyroscopes known also as Schiff effect. Such a measurement has been carried through by Gravity Probe B \cite{Everittetal09}. While the Lense-Thirring effect is an orbital effect involving the motion of the whole orbit thus constituting a global measurement, the Schiff effect is a local effect showing the dragging of local inertial frames due to the existence of the $g_{0i}$ components.}.

\begin{figure}
\includegraphics[width=0.3\textwidth]{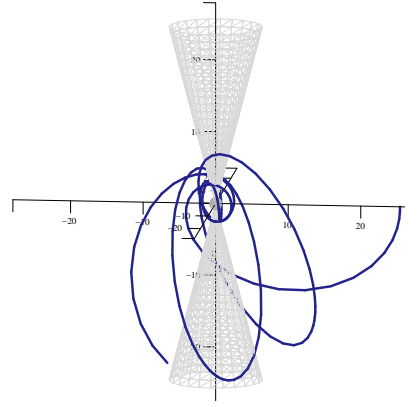}
\caption{Precession of the orbital plane for a bound orbit with $\la=10^{-5}$, $\al=0.4$, $\bar K=4.5$, $\bar L_z=-0.5$, and $E^2=0.96$ (region (IIIb)). As in Fig.~\ref{Fig:flybyorbits}, the cones and spheres correspond to extremal $\theta$ and horizons.}
\label{Fig:boundorbit}
\end{figure}

Let us also discuss some exceptional orbits related to multiple zeros of $R$, i.e. spherical orbits with constant $r$ and orbits asymptotically approaching a constant $r$. There are two types of spherical orbits: stable and unstable. Stable spherical orbits with $\r(\gamma) \equiv \r_0$ occur if radial coordinates adjacent to $\r_0$ are not allowed due to $R(\r)<0$, which happens if $\r_0$ is a maximum of $R$. Unstable spherical orbits with $\r(\gamma) \equiv \r_0$ are trajectories where radial coordinates $\r$ in the neighbourhood of $\r_0$ with $\r<\r_0$ or $\r>\r_0$ are allowed. Therefore, these orbits are related to a minimum or to an inflection point of $R$. If $\r_0$ is an inflection point, an asymptotic approach to $\r_0$ is only possible from  one side of $\r_0$ whereas this is possible from both sides if $\r_0$ is a minimum of $R$. Asymptotic orbits can also be devided into two types: unbound and bound. The latter case corresponds to orbits which approach for both $t \to \infty$ and $t \to -\infty$ a spherical orbit. Bound asymptotic orbits are also known as homoclinic orbits. If the asymptotic orbit is unbound it reaches $\r=\infty$ for either $t \to \infty$ or $t \to -\infty$.

For asymptotic bound or unbound trajectories corresponding to an unstable spherical orbits the equations of motion simplify considerably. In this case the equation for $\r(\gamma)$ as well as the $\r$ dependent integrals in the $\phi$ and $t$ equations are of elliptic type and can be solved in terms of Weierstrass elliptic functions, see \eqref{requation_elliptic} and \eqref{phirequation_elliptic}. Note that these solutions are not limited to the case of equatorial circular orbits but are valid for all types of asymptotic orbits and, thus, generalize the analytical solutions for homoclinic orbits in \cite{LevinPerez-Giz09} not only to Kerr-de Sitter space-time but also to arbitrary inclinations. 

From all spherical orbits the Last Stable Spherical Orbit (LSSO), in particular, the Innermost Stable Circular Orbit (ISCO) in the equatorial plane are of importance as they represent the transition from stable orbits to those which fall through the event horizon. The corresponding multiple zero of $R$ appears at the boundaries of the different regions of $r$ motion, cp.~Fig.~\ref{Fig:r_EL}. Because necessarily $\bar K = \chi^2 E^2(\al-\D)^2$ for equatorial orbits, from this we can determine the LSSO for given $\al$, $\la$, $\bar K$ and the ISCO for given $\al$, $\la$ by solving first
\begin{equation} \label{LSSO}
R(\r)=0\,, \quad \frac{dR}{d\r}(\r)=0\,, \quad \text{and} \quad \frac{d^2R}{d\r^2}(\r)=0
\end{equation}
for $\r \geq \r_{h}$ with the event horizon $\r_{h}$. The solutions are limiting cases of the LSSO or ISCO and are given by the corner points on the borders of region (III) of the $\r$ motion (as corners on other boundaries correspond to $\r<\r_{h}$). From the results of \eqref{LSSO} we search for the smallest possible double zero $\r$ which is a maximum. In the case of the ISCO in the equatorial plane we are now done. For the LSSO, we have to check in addition whether the corresponding values of $E^2(\r)$ and $\bar L_z(\r)$ (given by \eqref{doublezerosr}) are located in an allowed region of the $\theta$ motion. If this is the case, we found the LSSO. If not, we can determine the LSSO as the intersection point of the boundary of region (III) with a boundary of an allowed $\theta$ region. Note that it is not possible to determine an LSSO (for given $\al$, $\la$, and $\bar K$) if there is no spherical orbit at all outside the event horizon which happens if no boundary of the $\r$ motion is located in an allowed region of the $\theta$ motion. As an example, this is the case for $\la=10^{-5}$, $\al=0.1$, and $\bar K=0.1$. Also, the LSSO is identical with the ISCO if it is given as an intersection point with the boundary of region (b) of the $\theta$ motion. For examples of spherical orbits see Figs.~\ref{Fig:isso} and \ref{Fig:asymptotic}. Note that within the event horizon there may be additional stable spherical orbits.

\begin{figure}
\subfigure[$\, \al=0.1$, $\la=10^{-5}$, $\bar K=3$]{
\includegraphics[width=0.23 \textwidth]{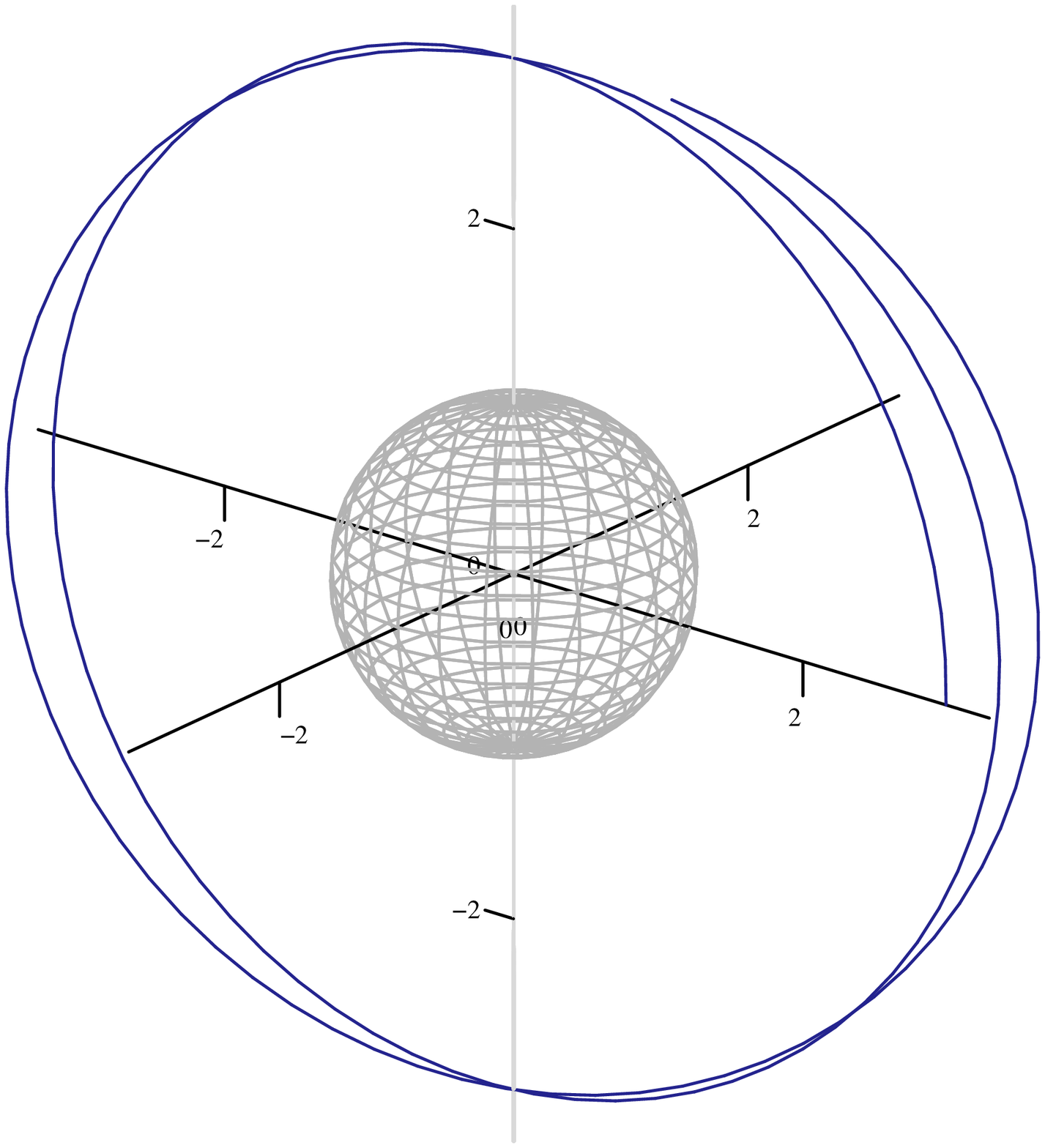}}
\subfigure[$\, \al=0.4$, $\la=10^{-5}$, $\bar K=3$]{
\includegraphics[width=0.23 \textwidth]{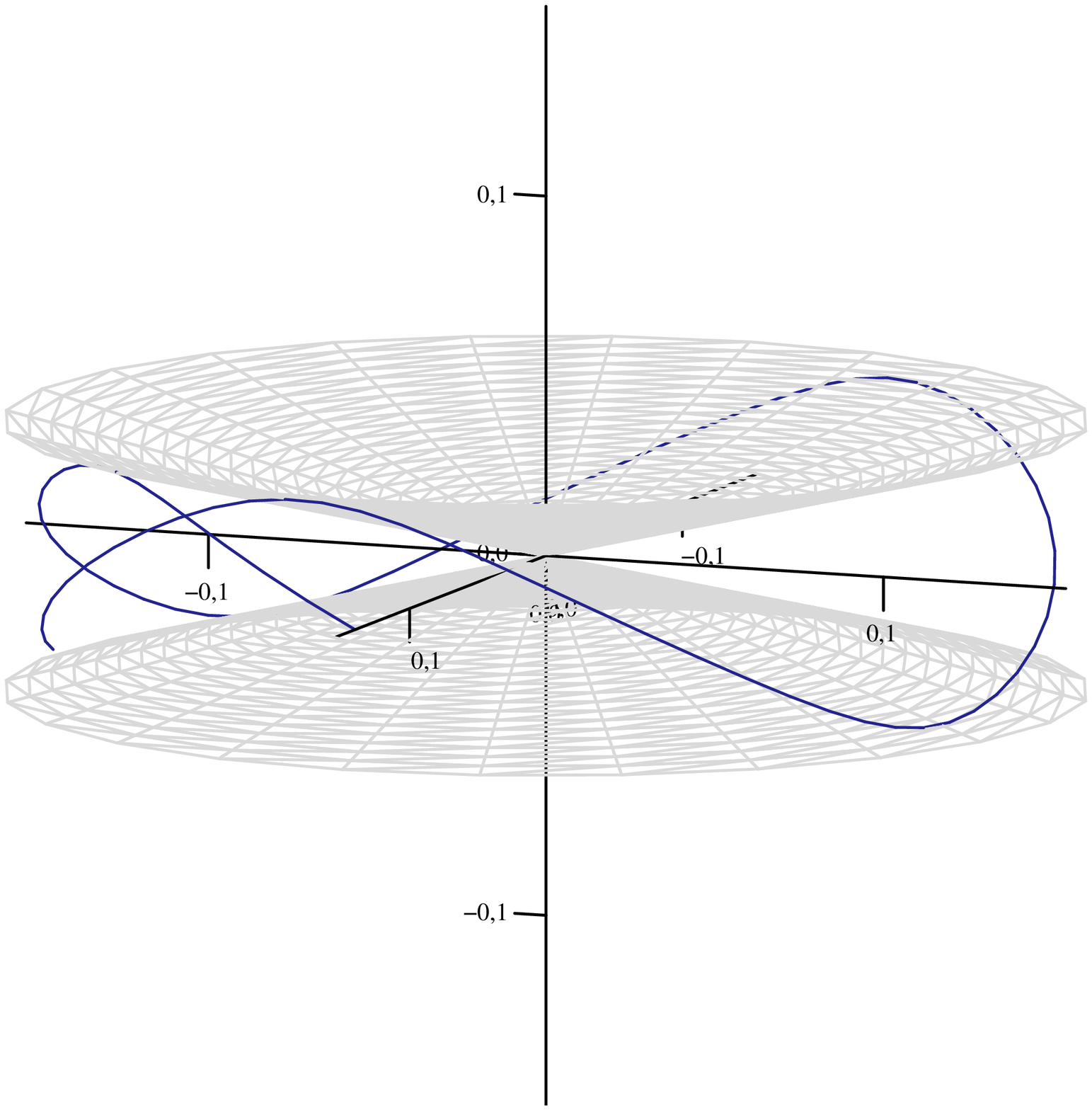}}
\caption{(a): Last Stable Spherical orbit at $\r(\gamma)\equiv 2.98982782315$. The corresponding parameter values are (approximately) $E^2=0.88830740016$ and $\bar L_z = 0.00154804028$ (lower left corner on boundary of region (IIIb)). (b): Stable spherical orbit at $\r(\gamma)\equiv0.14$ within the inner Cauchy horizon. Here $E^2=124.76756414675$ and $\bar L_z=5.879796033955$ (region (IIb)).}
\label{Fig:isso}
\end{figure}

\begin{figure}
\includegraphics[width=0.23\textwidth]{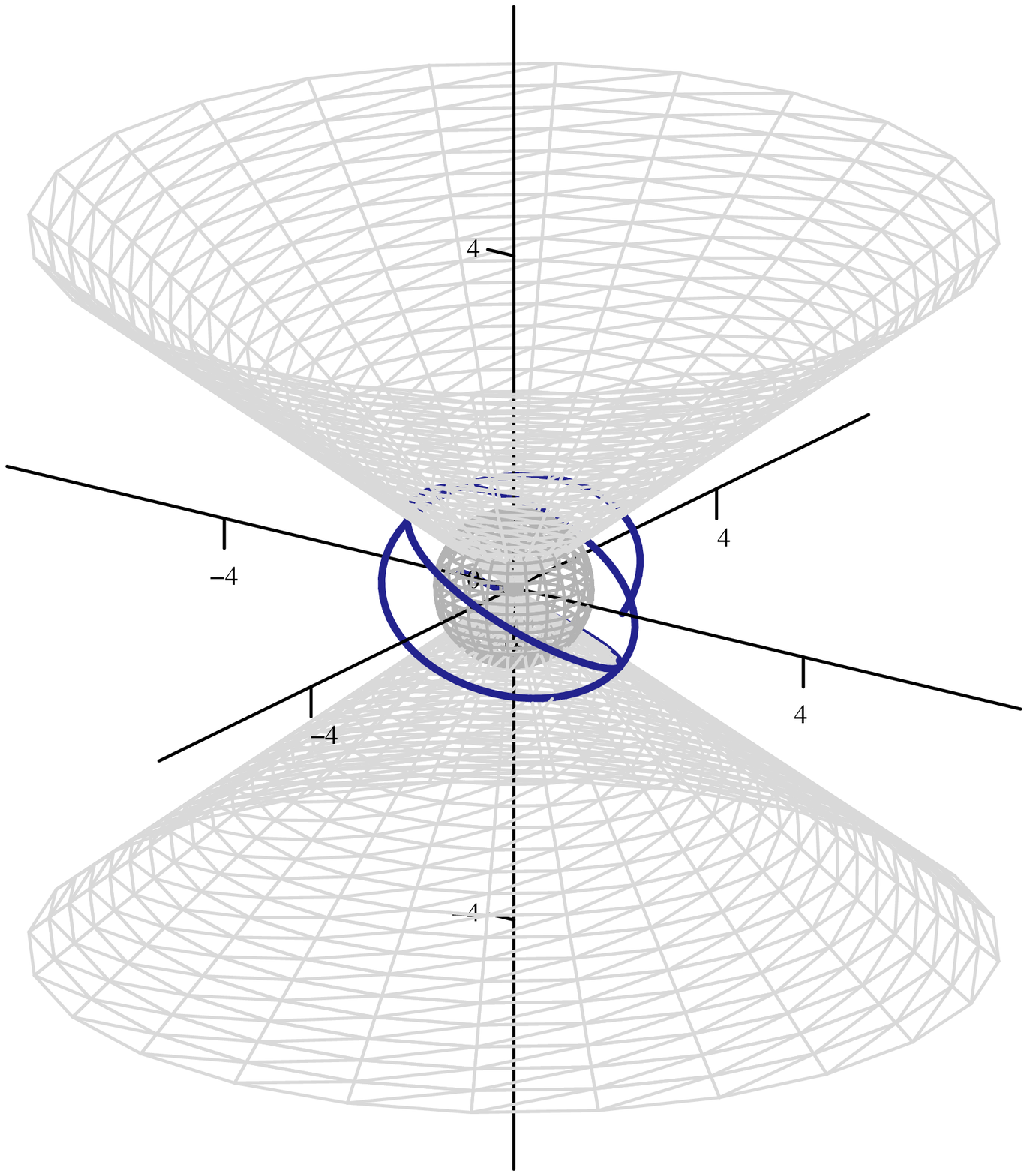}
\includegraphics[width=0.23\textwidth]{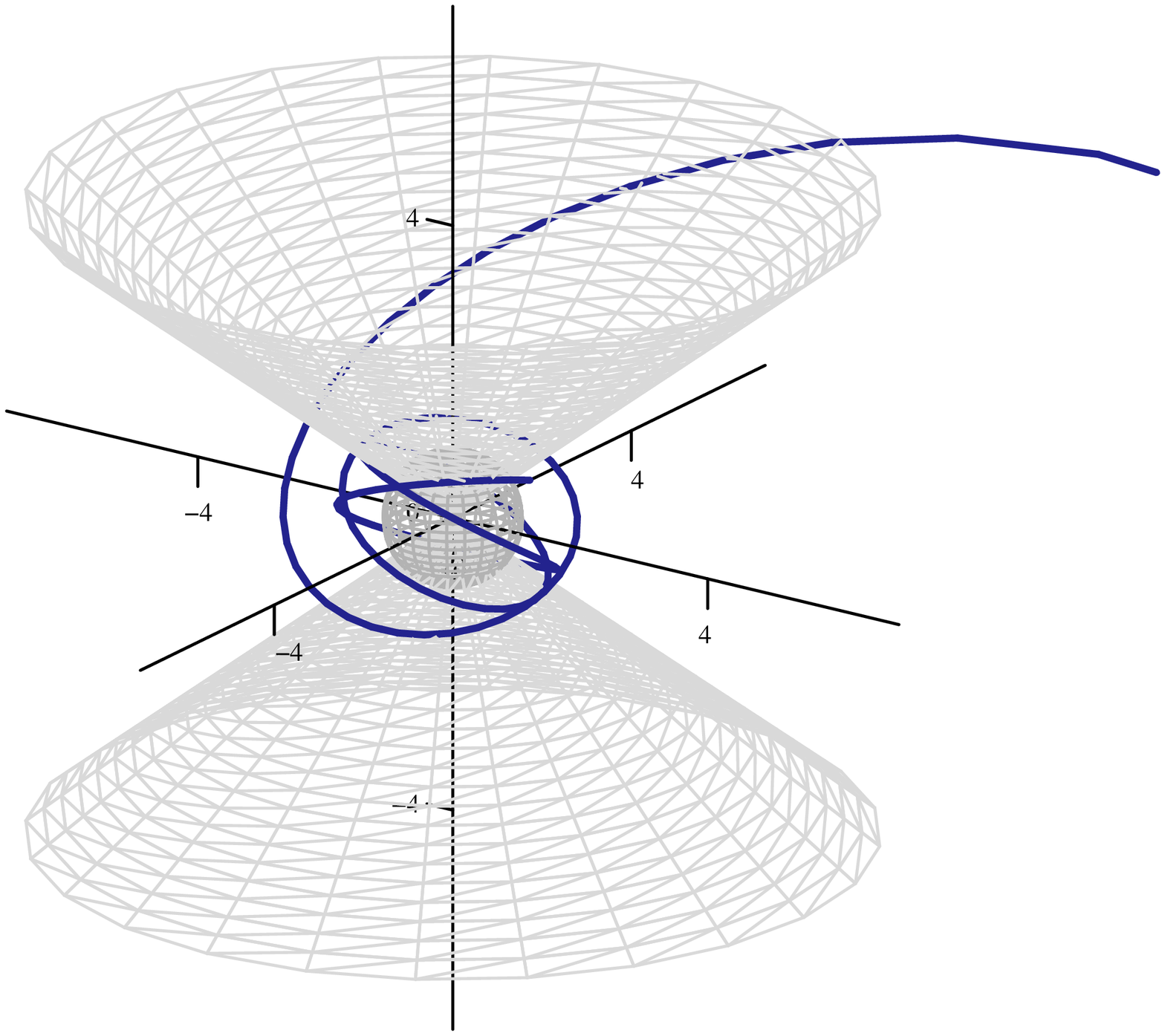}
\caption{Unstable spherical orbit with $\r(\gamma)\equiv1.5$ (left) and asymptotic approach (right). Here $\al=0.3$, $\la=10^{-5}$, $\bar K=2$, $E=0.9720311146$ and $\bar L_z=2.5677957370$ (boundary of region (IIIb)).}
\label{Fig:asymptotic}
\end{figure}

\section{Analytic expressions for observables}
For the understanding of characteristic features of space-times by measurements of geodesics in that space-time it is crucial to identify certain theoretical quantities of observables. For flyby orbits, this can be the deflection angle of the geodesic whereas for bound orbits it is of interest to determine the orbital frequencies as well as the periastron shift and the Lense-Thirring effect. 

Let us first consider flyby orbits. The deflection angle of such an orbit depends on the two values $\gamma_\infty^\pm$ of the normalized Mino time for which $\r(\gamma_\infty^\pm) = \infty$. These are given by  
\begin{align}
\gamma_{\infty}^\pm = \frac{1}{\sqrt{c_{5}}} \left( \int_{u_0}^0 \frac{u du}{\sqrt{R_u}} - \gamma_0\right)
\end{align}
for the two branches of $\sqrt{R_u}$. Therefore, we can calculate the values of $\theta$ and $\phi$ for $\r \to \infty$ that are taken by this flyby orbit by $\theta^\pm = \theta(\gamma_\infty^\pm)$ and $\phi^\pm = \phi(\gamma_\infty^\pm)$. The deflection angles are then given by $\Delta \theta = \theta^+ - \theta^-$ and $\Delta \phi = \phi^+ - \phi^-$.

For bound orbits we can identify three orbital frequencies $\Omega_r$, $\Omega_\theta$ and $\Omega_\phi$ associated with the coordinates $r$, $\theta$ and $\phi$. The precessions of the orbital ellipse, which in the weak field regime can be identified with the periastron shift, and the orbital plane, which in the weak field regime can be identified with the Lense-Thirring effect, are induced by mismatches of these orbital frequencies. More precisely, the orbital ellipse precesses at $\Omega_\phi-\Omega_r$ and the orbital plane at $\Omega_\phi-\Omega_\theta$. 

Let us consider the orbital frequency $\Omega_r$. For bound orbits the coordinate $\r$ is contained in an intervall $\r_p \leq \r \leq \r_a$ with the peri- and apoapsis distances $\r_p$ and $\r_a$. The orbital period $\omega_{\r}$ defined by $\r(\gamma + \omega_{\r}) = \r(\gamma)$ is then given by a complete revolution from $\r_p$ to $\r_a$ and back (with reversed sign of the square root) to $\r_p$,
\begin{align} \label{omega_r}
\omega_{\r} = 2 \int_{\r_p}^{\r_a} \frac{d\r}{\sqrt{R}} \,.
\end{align}
The orbital frequency of the $r$ motion with respect to $\gamma$ is then given by $\frac{2\pi}{\omega_{\r}}$. For the calculation of $\Omega_r$, which represents the orbital frequency with respect to $t$, we need in addition the average rate $\Gamma$ at which $t$ accumulates with $\gamma$. This will be determined below.

For the calculation of the orbital frequency $\Omega_\theta$ we have to determine the orbital period $\omega_\theta$ such that $\theta(\gamma+\omega_\theta) = \theta(\gamma)$. The $\theta$ motion is likewise bounded by $\theta_{\min} \leq \theta \leq \theta_{\max}$ for two real zeros $\theta_{\rm min}, \theta_{\rm max} \in (0,\pi)$ of $\Theta$ and, therefore,  
\begin{align}
\omega_\theta = 2 \int_{\theta_{\rm min}}^{\theta_{\rm max}} \frac{d\theta}{\sqrt{\Theta}} \,.
\end{align}
Again, the orbital frequency of the $\theta$ motion with respect to $\gamma$ is given by $\frac{2\pi}{\omega_\theta}$. 

The orbital periods of the remaining coordinates $t$ and $\phi$ has to be treated somewhat differently because they depend on both $\r$ and $\theta$. The solutions $t(\gamma)$ and $\phi(\gamma)$ consist of two different parts, one which represents the average rates $\Gamma$ and $Y_\phi$ at which $t$ and $\phi$ accumulate with $\gamma$ and one which represents oscillations around it with periods $\omega_{\r}$ and $\omega_\theta$ \cite{DrascoHughes04, FujitaHikada09}. The periods $\Gamma$ and $Y_\phi$ can be calculated by \cite{FujitaHikada09}
\begin{align}
\Gamma & = \frac{2}{\omega_{\r}} \int_{\r_p}^{\r_a} \frac{(\r^2+\al^2) \mathbbm{P}}{\Dr \sqrt{R}} d\r - \frac{2}{\omega_\theta} \int_{\theta_{\rm min}}^{\theta_{\rm max}} \frac{\al \mathbbm{T} d\theta}{\Dt \sqrt{\Theta}} \,, \label{Y_t}\\
Y_\phi & = \frac{2}{\omega_{\r}} \int_{\r_p}^{\r_a} \frac{\al \mathbbm{P}}{\Dr \sqrt{R}} dr - \frac{2}{\omega_\theta} \int_{\theta_{\rm min}}^{\theta_{\rm max}} \frac{\mathbbm{T} d\theta}{\Dt \sin^2\theta \sqrt{\Theta}} \,. \label{Y_phi}
\end{align}
The orbital frequencies $\Omega_r$, $\Omega_\theta$, and $\Omega_\phi$ are then given by
\begin{equation}
\Omega_r = \frac{2\pi}{\omega_{\r}}\, \frac{1}{\Gamma}\,, \quad \Omega_\theta = \frac{2\pi}{\omega_\theta}\, \frac{1}{\Gamma} \,, \quad \Omega_\phi = \frac{Y_\phi}{\Gamma} \,.
\end{equation}

If the integral expressions for $\omega_r$ and the $\r$ dependent parts of $\Gamma$ and $Y_\phi$ degenerate to elliptic or elementary type, i.e. if we consider light or $R$ possesses multiple zeros, we can find analytical expressions for \eqref{omega_r}, \eqref{Y_t}, and \eqref{Y_phi} with the techniques presented in \cite{FujitaHikada09}. If $R$ has only simple zeros and $\delta=1$, the integral $\omega_r$ is an entry of the fundamental period matrix $\omega$ which enters in the definition of the period lattice $\Gamma$ of the holomorphic differentials $d\vec z$, $\Gamma=\{\omega v+\omega'v\,|\, v,v' \in \mathbbm{Z}^2\}$. The more complicated integrals involving $R$ in \eqref{Y_t} and \eqref{Y_phi} can be rewritten in terms of periods of the differentials of second kind $d\vec r$ and of third kind $dP(x_1,x_2)$ by a decomposition in partial fractions. In this way, expressions for $\omega_r$, $\Gamma$ and $Y_\phi$ which are totally analogous to the elliptic case can be obtained. 

It follows that the periastron shift is given by
\begin{equation}
\Delta_{\rm periastron} = \Omega_\phi - \Omega_r = \left( Y_\phi - \frac{2\pi}{\omega_{\r}} \right) \frac{1}{\Gamma} 
\end{equation}
and the Lense-Thirring effect by
\begin{equation}
\Delta_{\rm Lense-Thirring} = \Omega_\phi - \Omega_\theta = \left( Y_\phi - \frac{2\pi}{\omega_{\theta}} \right) \frac{1}{\Gamma} \,.
\end{equation}
Another way to access information encoded in the orbits is through a frequency decomposision of the whole orbit \cite{DrascoHughes04}. This will be analyzed elsewhere.

\section{Summary and outlook}
In this paper we derived the analytical solution for both timelike and lightlike geodesic motion in Kerr-(anti-)de Sitter space-time. The analytical expressions for the orbits $(r,\theta,\phi,t)$ are given by elliptic Weierstrass and hyperellliptic Kleinian functions in terms of the normalized Mino time. We also presented a method for solving differential equations of hyperelliptic type and third kind, and applied it to the equations of motion for $\phi$ and $t$. We classified possible types of geodesic motion by an analysis of the zeros of the polynomials underlying the $\theta$ and $\r$ motion and discussed the influence of a non-vanishing cosmological constant on the orbit types. Some particular interesting orbits not present in Kerr space-time were shown and a systematic approach for determining the last stable spherical and circular orbits was presented. In addition, we derived the analytic expressions for observables connected with geodesic motion in Kerr-de Sitter space-time, namely the deflection angle for escape orbits as well as the orbital frequencies, the periastron shift, and Lense-Thirring effect of bound orbits.

The results of this paper can be viewed as the starting point for the analysis of several features of geodesics in Kerr-de Sitter space-time not treated in this publication. Although mathematical analogous to the case of slow Kerr-de Sitter a complete discussion of orbits in fast and extreme Kerr-de Sitter space-times may lead to special features and should be carried through. Also, it would be interesting to study bound geodesics crossing $\r=0$ (and mayby also the Cauchy horizon for positive $\r$) in general and, in particular, their causal structure. In this context the analysis of closed timelike trajectories is also of interest. In addition, we not yet considered geodesics lying entirely on the axis $\theta=0,\pi$ or even crossing it. Until now, we only considered the Boyer-Lindquist form of the Kerr-de Sitter metric which is not a good choice for considering geodesics which fall through a horizon. Therefore, for a future publication it would be interesting to use a coordinate-singularity free version of the metric.

The methods for obtaining analytical solutions of geodesic equations presented in this paper are not only limited to Kerr-de Sitter space-time. Indeed, they had already been used to solve the geodesic equation in Schwarzschild-de Sitter \cite{HackmannLaemmerzahl08,HackmannLaemmerzahl08b} and Reissner-Nordstr\"om-de Sitter \cite{Hackmannetal08} space-times. Also the geodesic equations in higher-dimensional static spherically symmetric  space-times \cite{Hackmannetal08} were solved by these methods. The same type of differential equations is also present in the Pleba\'{n}ski-Demia\'{n}ski space-time without acceleration, which is the most general space-time with separable Hamilton-Jacobi equation. The analytical solution of the geodesic equation in this space-time is given in \cite{Hackmannetal09} but will be elaborated in an upcoming publication. It will also be interesting to apply the presented methods to higher dimensional stationary axially symmetric space-times like the Myers-Perry solutions. 

The same structure of equations we solved in this paper is also present in the geodesic equation of the effective one-body formalism of the relativistic two-body problem. The effective metric in this formalism can be described as a perturbed Schwarzschild or Kerr metric, where the pertubation is given in powers of the radial coordinate $r$ \cite{Damour01,Damouretal08,Damouretal08b}. Therefore, we expect that the polynomial appearing in the resulting equations of motion will have a higher degree than the corresponding polynomial in the Schwarzschild or Kerr case and, thus, that it is necessary to generalize the elliptic functions used in these cases to hyperelliptic functions used in this paper. A similar situation can be found in the expressions of axisymmetric gravitational multipole space-times. For example, some types of geodesics in Erez-Rosen space-time, which reduces to the Schwarzschild case if the quadrupole moment is neglected, were already solved analytically \cite{Quevedo90, Quevedo89}. We expect that the methods presented in this paper will be helpful to solve geodesics in space-times with multipoles.

Analytic solutions are the starting point for approximation methods for the description of real stellar, planetary, comet, asteroid, or satellite trajectories
(see e.g. \cite{Hagihara70}). In particular, it is possible to derive post-Kerr, post-Schwarzschild, or post-Newton series expansions of analytical solutions.
Due to the, in principle, arbitrary high accuracy of analytic solutions of the geodesic equation they can also serve as test beds for numerical codes for the dynamics of binary systems in the extreme stellar mass ratio case (extreme mass ratio inspirals, EMRIs) and also for the calculation of corresponding gravitational wave templates. For the case of Kerr space-time with vanishing cosmological constant it has already been shown that gravitational waves from EMRIs can be computed more accurately by using analytical solutions than by numerical integration \cite{FujitaHikada09}. 

Due to the high precission, the analytical expressions for observables in Kerr-de Sitter space-time may be used for comparisons with observations where the influence of the cosmological constant might play a role. This could be the case for stars moving around the galactic center black hole or binary systems with extreme mass ratios where one body serves as test-particle. For example, quasar QJ287 (cp. \cite{Valtonenetal08}) could be a candidate for observing the effects of a non-vanishing cosmological constant. In this context it would also be interesting for a future publication to derive post-Kerr, post-Schwarzschild, or post-Newton expressions for observables.

\begin{acknowledgments}
We are grateful to H. Dullin, W. Fischer, and P. Richter for helpful discussions. V.K. thanks the German Academic Exchange Service DAAD and E.H. the German Research Foundation DFG for financial support.
\end{acknowledgments}


\appendix

\section{Integration of elliptic integrals of the third kind} \label{app:elliptic}
In this appendix we will demonstrate the details of the integration method of the elliptic integrals of third kind which appear in the $\theta$ dependent part of the $\phi$ and $t$ motion \eqref{phi_int}, \eqref{t_int} and in the special case of the $r$ motion where $R$ has a double or triple zero. We will explain the procedure for the example of the integral $I_\theta$ in \eqref{phi_int}. As this integral is only elliptic if $\Theta$ has only simple zeros we assume in the following that this is the case.

Before we demonstrate the calculational steps, we will summarize them for convenience:
\begin{enumerate}
\item Cast the expression under the square root in the standard Weierstrass form $4y^3-g_2y-g_3$ for some constants $g_2,g_3$, the so-called Weierstrass invariants.
\item Decompose the integrand (without the square root) in partial fractions.
\item Substitute $y=\wp(v)$.
\item For every partial fraction, rewrite the integrand in terms of the Weiertsrass $\wp$ (double pole) or $\zeta$ function (simple pole).
\item Integrate the Weierstrass $\wp$ and $\zeta$ functions and assemble all parts.
\end{enumerate}

Let us start from eq. \eqref{I_theta}
\begin{align*}
I_\theta & = \frac{1}{2} \int_{\nu_0}^\nu \frac{\al-\D-\al \nu'}{\Delta_{\nu'} (1-\nu') \sqrt{\nu' \Theta_{\nu'}}} d\nu' \,,
\end{align*}
where we assumed that the original integration path was fully contained in $(0,\frac{\pi}{2}]$. With the substitutions $\nu=\xi^{-1}$ and $\xi=\frac{1}{a_{3}} \left( 4y-\frac{a_{2}}{3}\right)$, as in subsection \ref{thetamotion}, we obtain
\begin{align}
I_\theta & = \frac{1}{2} \int_{\xi_0}^\xi \frac{(\al-\D)\xi'-\al}{\Delta_{\xi'} (\xi'-1) \sqrt{\Theta_{\xi'}}} \xi' d\xi' \nonumber \\
& = \frac{|a_{3}|}{2 a_{3}} \int_{y_0}^y \frac{ ((\al-\D) (4y'-\frac{a_{2}}{3}) - \al a_{3})(4y'-\frac{a_{2}}{3}) dy'}{ a_{3} \Delta_{y'} (4y'-\frac{a_{2}}{3}-a_{3}) \sqrt{4y'^3-g_2y'-g_3}} \,, \label{phitheta_y}
\end{align} 
where $\Theta_\xi$ is defined in \eqref{Theta_xi}, $g_2, g_3$ are defined as in \eqref{def_g2g3}, $\Delta_{\xi} = \xi+\al^2 \la$, and $\Delta_y = d_1 y+d_2 = \frac{4}{a_{3}}y - \frac{a_{2}}{3 a_{3}} + \al^2 \la$. Now we simplify the integrand in \eqref{phitheta_y} (without the square root) by a partial fraction decomposition
\begin{multline} \label{partial fraction}
\frac{((\al-\D) (4y-\frac{a_{2}}{3}) - \al a_{3}) (4y-\frac{a_{2}}{3})}{a_{3} \Delta_{y} (4y-\frac{a_{2}}{3}-a_{3})} \\ 
= (\al - \D) - \frac{a_{3}}{4 \chi} \left[ \frac{\al^3 \la(\chi-\al \la \D)}{y + \frac{d_2}{d_1}} + \frac{\D}{y-d_3} \right] \,,
\end{multline}
where $d_3=\frac{a_{2}}{12} + \frac{a_{3}}{4}$. With the substitution $y = \wp(v)$ we can get rid of the square root in \eqref{phitheta_y} as $\wp'(v) = \pm \sqrt{4\wp^3(v)-g_2\wp(v)-g_3}$ where the sign has to be chosen according to the sign of $\wp'$ and the branch of the square root. The function $\wp'(v)$ is negative for  $v \in [0,\omega_1]$ and positive for $v \in [\omega_2,\omega_2+\omega_1]$ where $2 \omega_1 \in \mathbbm{R}$ and $2 \omega_2 \in \mathbbm{C}$ are the fundamental periods of $\wp$. As $\theta=\frac{\pi}{2}$ corresponds to $y=\infty$ we will have $v \in [0,\omega_1]$ in most cases. 

Altogether, the integral $I_\theta$ now reads
\begin{multline}
I_\theta = \frac{|a_{3}|}{2 a_{3}} \bigg\{ (\al-\D) \int_{v_0}^v dv' - \frac{a_{3}}{4 \chi} \bigg[ \al^3 \la \\
\int_{v_0}^v \frac{(\chi-\al \la \D) dv'}{\wp(v') + \frac{d_2}{d_1}} + \int_{v_0}^v \frac{\D dv'}{\wp(v')-d_3} \bigg] \bigg\} \,.
\end{multline}
The second and third integral are of third kind because $f_1(v) = \left( \wp(v) + \frac{d_2}{d_1} \right)^{-1}$ and $f_2(v) = ( \wp(v)-d_3)^{-1}$ have simple poles. We will rewrite now $f_1$ and $f_2$ in terms the Weierstrass $\zeta$-function, which has a simple zero in $0$. The reason is, that $\zeta$ can easily be integrated in terms of the Weierstrass $\sigma$-function
\begin{align}
\int_{v_0}^v \zeta(v') dv' = \log \sigma(v) - \log \sigma(v_0)\,.
\end{align}
We only demonstrate the procedure for $f_1$ which is totally analogous to the procedure for $f_2$. The function $f_1$ has two simple poles $v_1$ and $v_2$ lying in the fundamental domain $\{ 2a\omega_1+2b\omega_2 \,|\, a,b \in [0,1)\}$, where $\omega_1$ and $\omega_2$ as above, with $\wp(v_1) = - \frac{d_2}{d_1} = \wp(v_2)$. An expansion of $f_1$ and $\wp(v)+\frac{d_2}{d_1}$ in neighbourhoods of $v_i$ yields
\begin{align}
f_1(v) & = a_{-1,i} (v-v_i)^{-1} + \text{ holomorphic part} \,,  \\
\wp(v)+\frac{d_2}{d_1} & = \wp'(v_i) (v-v_i) + \text{higher order terms}
\end{align}
for some constants $a_{-1,i}$. Now a comparison of coefficients gives
\begin{align}
1 & = f_1(v) \left(\wp(v)+\frac{d_2}{d_1}\right) \nonumber \\
& = a_{-1,i} \wp'(v_i) + \text{ higher order terms} \,, \nonumber \\
\text{that is } \quad a_{-1,i} &= ( \wp'(v_i) )^{-1} \,.
\end{align}
It follows that the function $f_1(v) - \sum_{i=1}^2 \frac{\zeta(v-v_i)}{\wp'(v_i)}$ is an elliptic function without poles and, therefore, equal to a constant $A$ which can be determined by $0=f(0)$. This yields $A = - \sum_{i=1}^2 \frac{\zeta(-v_i)}{\wp'(v_i)}$ and
\begin{align}
f_1(v) & = \sum_{i=1}^2 \frac{\zeta(v-v_i) + \zeta(v_i)}{\wp'(v_i)} \,.
\end{align}
Note that $\wp'(v_2) = \wp'(2\omega_j-v_1) = \wp'(-v_1) = - \wp'(v_1)$. The expression for $\wp'(v_1)$ can be determined by the differential equation $\wp'(x) = \pm \sqrt{4\wp(x)^3-g_2\wp(x)-g_3}$, where again the sign of the square root has to be chosen according to the sign of $\wp'$. 
\comment{
This is, $\wp'$ is negative on $[0,\omega_1]$ and positive on $[\omega_2,\omega_2+\omega_1]$. In the interval $[\omega_1,\omega_1+\omega_2]$, $\wp'$ is positively imaginary, on $[0,\omega_2]$, $\wp'$ is negatively imaginary.
}
An intergation of $f_1$ yields
\begin{multline}
\int_{v_0}^v f_1(v') dv' = \sum_{i=1}^2 \frac{1}{\wp'(v_i)} \bigg[ \zeta(v_i) (v-v_0) + \log(\sigma(v-v_i)) \\
 - \log(\sigma(v_0-v_i)) \bigg] \,.
\end{multline}
In the same way we can integrate $f_2$, where we assume that $v_3$ and $v_4$ are the simple poles of $f_2$ in the fundamental domain with $\wp(v_3) = d_3 = \wp(v_4)$. Summarized, $I_\theta$ is given by (cp.~\eqref{sol I_theta})
\begin{multline*}
I_\theta = \frac{|a_{3}|}{2 a_{3}} \bigg\{ (\al-\D) (v-v_0) \\
- \sum_{i=1}^4 \frac{a_{3}}{4 \chi \wp'(v_i)} \left( \zeta(v_i) (v-v_0) + \frac{\log(\sigma(v-v_i))}{\log(\sigma(v_0-v_i))} + 2\pi i k_i \right) \\
\cdot \big( \al^3 \la (\chi-\al \la \D) (\delta_{i1}+\delta_{i2}) + \D (\delta_{i3}+\delta_{i4}) \big) \bigg\} \,,
\end{multline*}
where $v = v(\gamma) = 2\gamma - \gamma_{\theta,\rm in}$ by \eqref{sol_y} and $v_0 = v(\gamma_0)$. The integers $k_i$ correspond to different branches of $\log$.

\section{Integration of hyperelliptic integrals of the third kind} \label{app:hyperelliptic}
In this appendix we will demonstrate the details of the integration method of the hyperelliptic integrals of third kind which appear in the $\r$ dependent part of the $\phi$ and $t$ motion \eqref{phi_int}, \eqref{t_int}. We will explain the procedure for the example of the integral $I_r$ in \eqref{phi_int}. As this integral is only hyperelliptic if we consider timelike geodesics, i.e. $\delta=1$ and $R$ has only simple zeros we assume in the following that this is the case.

Before we demonstrate the solution steps, we will summarize them for convinience:
\begin{enumerate}
\item Cast the expression under the square root in the standard form $u^5 + \sum_{i=0}^4 c_i u^i$ for some constants $c_i$.
\item Decompose the integrand (without the square root) in partial fractions.
\item If integrals of first or second kind are present, rewrite them as functions of $\gamma$.
\item Rewrite the integrals of third kind in terms of the canonical integral of third kind $dP(x_1,x_2)$.
\item Rewrite the canonical integrals of third kind in terms of the Kleinian sigma functions, such reducing them to functions of integrals of first kind.
\item Express the integrals of first kind in terms of $\gamma$ and assemble all parts.
\end{enumerate}

Let us start with eq. \eqref{I_r} 
\begin{equation*}
I_r = \int_{\r_0}^{\r} \frac{\al \left( \r'^2+\al^2 - \al \D \right) d\r'}{\Delta_{\r'} \sqrt{R}} \,,
\end{equation*}
which can analogously to section \ref{rmotion} be transformed to the standard form by $\r= \pm 1/u+\r_R$ with a zero $\r_R$ of $R$ and get
\begin{align}
I_r & = - \al \int_{u_0}^u \frac{(\pm \frac{1}{u}+\r_R)^2+ \al(\al-\D)}{\Delta_{\r=\pm 1/u + \r_R} \sqrt{u^{-6} c_{5} R_u}} \frac{du}{u^2} \nonumber \\
& = - \al \int_{u_0}^u \frac{[\r_R^2+\al(\al-\D)] u^2 \pm 2\r_Ru + 1}{\sqrt{c_{5}} \quad \Delta_u \sqrt{R_u}} |u^3| du \,,
\end{align}
where $R_u$ and $c_{5}$ are defined as in section \ref{rmotion} and $\frac{1}{u^4} \Delta_u = \Delta_{\r=\frac{1}{u}+\r_R}$, i.e.
\begin{multline}
\Delta_u = ( u^2(1-\la \r_R^2) \mp 2\r_R\la u-\la )(u^2(\r_R^2+\al^2)\pm 2\r_Ru+1) \\
\mp u^3-\r_Ru^4\,,
\end{multline}
which is a polynomial of degree 4 in $u$. Note that for geodesic motion, the coordinate $\r$ is always contained in an interval bounded by two adjacent real zeros of the polynomial $R$ or by a real zero and infinity. This implies that $u=\pm (\r-\r_R)^{-1}$ for a real zero $\r_R$ of $R$ does not change sign on the integration path and, therefore, we can neglect the absolute value of $u$ appearing in the integrand if we multiply the hole integral with $\text{sign}(u_0) = \frac{u_0}{|u_0|}$. Consequently 
\begin{align}
\frac{I_r}{\al} & = - \frac{|u_0|}{u_0} \int_{u_0}^u \frac{[\r_R^2+\al(\al-\D)] u^2 \pm 2\r_Ru + 1}{\sqrt{c_{5}} \quad \Delta_u \sqrt{R_u}} u^3 du \,. \label{I_r(u)}
\end{align} 

The expression for the integrand in \eqref{I_r(u)} can be simplified by a partial fraction decomposition
\begin{multline}
- \frac{\sqrt{c_{5}} |u_0|}{\al u_0} I_r = C_1 \int_{u_0}^u \frac{u du}{\sqrt{R_u}} + C_0 \int_{u_0}^u \frac{du}{\sqrt{R_u}} \\
+ \sum_{i=1}^4 C_{2,i} \int_{u_0}^u \frac{du}{(u-u_i) \sqrt{R_u}} \,, \label{r_partialfractions}
\end{multline}
where $u_i$, $1 \leq i \leq 4$ denote the zeros of $\Delta_u$ and $C_0, C_1, C_{2,i}$ are quite complicated expressions dependent on the parameters and the zero $\r_R$ of $R$ which may be calculated by a Computer Algebra System. 

The first two integrals in this expression are of first kind and can be solved analogous to section \ref{rmotion} \eqref{R_u}, i.e. 
\begin{align} 
\int_{u_0}^u \frac{u du}{\sqrt{R_u}} & = \sqrt{c_{5}} (\gamma - \gamma_{0}) \,, \label{sol_firstkind1}\\
\int_{u_0}^u \frac{du}{\sqrt{R_u}} & = \int_{u_0}^\infty \frac{du}{\sqrt{\tilde R_u}} + \int_\infty^u \frac{du}{\sqrt{\tilde R_u}} \nonumber \\
& = - f(\sqrt{c_{5}} \gamma_0 - \gamma_{\r,\rm in}) + f(\sqrt{c_{5}} \gamma - \gamma_{\r,\rm in}) \,. \label{sol_firstkind2}
\end{align}
where again $\gamma_{\r,\rm in} = \sqrt{c_{5}} \gamma_0 + \int_{u_0}^\infty \frac{u' du'}{\sqrt{\tilde R_{u'}}}$ with $u_0 = \pm \left(\r_0-\r_R \right)^{-1}$ only depends on the initial values $\gamma_0$ and $u_0$, and $f$ describes the $\theta$-divisor, i.e. $\sigma\left( (f(z),z)^t \right) = 0$. 

The four integrals in \eqref{r_partialfractions} containing $(u-u_i)^{-1}$ are in general of third kind and can be expressed in terms of the canonical integral of third kind $\int dP(x_1,x_2)$. The most simple construction of a differential of third kind  
\begin{equation}
dP(x_1,x_2) = \left( \frac{y+y_1}{x-x_1} - \frac{y+y_2}{x-x_2} \right) \frac{dx}{2y}  \label{def_dP}
\end{equation}
has simple poles in the points $(x_1,y_1)$ and $(x_2,y_2)$ of the Riemann surface of $y^2=g(x)$, $g$ a polynomial, with residual $+1$ and $-1$, respectively (cp. \cite{BuchstaberEnolskiiLeykin97, Baker07}). In particular, we get
\begin{align}
\int_{u_0}^u \frac{du}{(u-u_i) \sqrt{R_u}} = \frac{1}{+\sqrt{R_{u_i}}} \int_{u_0}^u dP(u_i^+,u_i^-) \,,
\end{align}
where $u_i^+ = (u_i, \sqrt{R_{u_i}})$ is the pole $u_i$ located on the positive branch of the square root and $u_i^- = (u_i, - \sqrt{R_{u_i}})$ is the pole $u_i$ located on the negative branch of the square root. Based on Riemann's vanishing theorem (see e.g. \cite{BuchstaberEnolskiiLeykin97}) the canonical differential of third kind can be expressed in terms of Kleinian $\sigma$ functions by
\begin{multline}
\int_{u_0}^u dP(u_i^+,u_i^-) = \frac{1}{2} \log \frac{\sigma( \int_{\infty}^u d\vec z - 2 \int_{\infty}^{u_i^+} d\vec z )}{\sigma( \int_{\infty}^u d\vec z - 2 \int_{\infty}^{u_i^-} d\vec z ) } \\ 
- \frac{1}{2} \log \frac{\sigma( \int_{\infty}^{u_0} d\vec z - 2 \int_{\infty}^{u_i^+} d\vec z )}{\sigma( \int_{\infty}^{u_0} d\vec z - 2 \int_{\infty}^{u_i^-} d\vec z )} - \left( \int_{u_0}^u d\vec z \right)^t \left( \int_{u_i^-}^{u_i^+} d\vec r\right) \,, \label{sol_dP}
\end{multline} 
where $d\vec z$ is the vector of canonical differentials of the first kind and $d\vec r$ the vector of canonical differentials of the second kind
\begin{align}
dz_i & = \frac{u^{(i-1)} du}{\sqrt{R_u}}\,, \quad i=1,2,  \label{def_holomorphic} \\
dr_i & = \sum_{k=i}^{5-i} (k+1-i) \frac{c_{k+1+i}}{c_{5}} \frac{u^k du}{4 \sqrt{R_u}} \,, \quad i=1,2.
\end{align}
Finally, we rewrite \eqref{sol_dP} in terms of the affine parameter $\gamma$. By \eqref{sol_firstkind1} and \eqref{sol_firstkind2} we can express $\int_{u_0}^u d\vec z$ as well as the arguments of the $\sigma$ functions $\int_{\infty}^u d\vec z = \int_{u_0}^u d\vec z - \int_{u_0}^\infty d\vec z$ as functions of $\gamma$. If we define $w = w(\gamma) = \sqrt{c_{5}} \gamma - \gamma_{\r,\rm in}$ and $w_0 =w(\gamma_0)$ the integral $I_r$ is given by (cp.~\eqref{sol I_r})
\begin{multline*}
I_r = - \frac{\al u_0}{\sqrt{c_{5}} |u_0|} \bigg\{ C_1 (w-w_0)  + C_0 ( f(w)-f(w_0) ) \\
+ \sum_{i=1}^4 \frac{C_{2,i}}{\sqrt{R_{u_i}}} \bigg[ \frac{1}{2} \log \frac{\sigma(W^+(w))}{\sigma( W^-(w) )} - \frac{1}{2} \log \frac{\sigma( W^+(w_0))}{\sigma( W^-(w_0) )} \\
- \big( f(w)-f(w_0), w-w_0 \big) \left( \int_{u_i^-}^{u_i^+} d\vec r\right) \bigg] \bigg\} \,,
\end{multline*}
where $W^+(w):=(f(w),w)^t - 2 \int_\infty^{u_i^+} d\vec z$ and $W^-(w) = (f(w),w)^t - 2 \int_\infty^{u_i^-} d\vec z$.


\bibliographystyle{apsrev}
\bibliography{Kerr-deSitter}

\end{document}